    \newcommand{\bra}[1]{\ensuremath{\left\langle{#1}\right\vert}}
    \newcommand{\ket}[1]{\ensuremath{\left\vert{#1}\right\rangle}}
    \newcommand{\op}[2]{\ensuremath{\left\vert{#1}\middle\rangle\middle\langle{#2}\right\vert}}
\newcommand{\qw}[1][-1]{\ar @{-} [0,#1]}
\newcommand{\qwx}[1][-1]{\ar @{-} [#1,0]}
\newcommand{\cds}[2]{*+<1em,.9em>{\hphantom{#2}} \POS [0,0].[#1,0]="e",!C *{#2};"e"+ R \qw}
\newcommand{\gate}[1]{*+<.6em>{#1} \POS ="i","i"+UR;"i"+UL **\dir{-};"i"+DL **\dir{-};"i"+DR **\dir{-};"i"+UR **\dir{-},"i" \qw}
\newcommand{\meter}{*=<1.8em,1.4em>{\xy ="j","j"-<.778em,.322em>;{"j"+<.778em,-.322em> \ellipse ur,_{}},"j"-<0em,.4em>;p+<.5em,.9em> **\dir{-},"j"+<2.2em,2.2em>*{},"j"-<2.2em,2.2em>*{} \endxy} \POS ="i","i"+UR;"i"+UL **\dir{-};"i"+DL **\dir{-};"i"+DR **\dir{-};"i"+UR **\dir{-},"i" \qw}
\newcommand{\control}{*!<0em,.025em>-=-<.2em>{\bullet}}
\newcommand{\ctrl}[1]{\control \qwx[#1] \qw}
\newcommand{\targ}{*+<.02em,.02em>{\xy ="i","i"-<.39em,0em>;"i"+<.39em,0em> **\dir{-}, "i"-<0em,.39em>;"i"+<0em,.39em> **\dir{-},"i"*\xycircle<.4em>{} \endxy} \qw}
\newcommand{\multigate}[2]{*+<1em,.9em>{\hphantom{#2}} \POS [0,0]="i",[0,0].[#1,0]="e",!C *{#2},"e"+UR;"e"+UL **\dir{-};"e"+DL **\dir{-};"e"+DR **\dir{-};"e"+UR **\dir{-},"i" \qw}
\newcommand{\ghost}[1]{*+<1em,.9em>{\hphantom{#1}} \qw}
\newcommand{\inputgroupv}[5]{\POS"#1,1"."#2,1"."#1,1"."#2,1"!C*+<#3>\frm{\{}, \POS"#1,1"."#2,1"."#1,1"."#2,1"*!C!<1.7em,#4>=<0em>{#5}}
\newcommand{\lstick}[1]{*!R!<.5em,0em>=<0em>{#1}}
\newcommand{\Qcircuit}{\xymatrix @*=<0em>}
\title{Fourier Spectrum of Noisy Quantum Algorithms}
\author{Uma Girish\thanks{  Columbia University. Email:  \href{mailto:ug2150@columbia.edu}{ug2150@columbia.edu}}} 
\date{}
\begin{document}

\maketitle

\begin{abstract}

Quantum computing promises exponential speedups for certain problems, yet fully universal quantum computers remain out of reach and near-term devices are inherently noisy. Motivated by this, we study noisy quantum algorithms and the landscape between $\BQP$ and $\BPP$. We build on a powerful technique to differentiate quantum and classical algorithms called the level-$\ell$ Fourier growth (the sum of absolute values of Fourier coefficients of sets of size $\ell$) and show that it can also be used to differentiate quantum algorithms based on the types of resources used. We show that noise acting on a quantum algorithm dampens its Fourier growth in ways intricately linked to the type of noise.

Concretely, we study noisy models of quantum computation where highly mixed states are prevalent, namely: $\DQC{k}$ algorithms, where $k$ qubits are clean and the rest are maximally mixed, and $\hBQP$ algorithms, where the initial state is maximally mixed, but the algorithm is given knowledge of the initial state at the end of the computation. We establish upper bounds on the Fourier growth of $\DQC{k}$, $\hBQP$ and $\BQP$ algorithms and leverage the differences between these bounds to derive oracle separations between these models. In particular, we show that \forrtwo~and \forrthree~require $N^{\Omega(1)}$ queries in the $\DQC{1}$ and $\hBQP$ models respectively. Our results are proved using a new matrix decomposition lemma that might be of independent interest.

   
\end{abstract}

\tableofcontents
\newpage

\section{Introduction} 

Quantum computing promises to solve certain problems exponentially faster than classical computers, as evidenced by numerous query complexity separations or oracle separations~\cite{DJ92,BV97,Sim97,Aar10,AA15}. Yet, we haven't been able to harness this, as we are far from being able to build fully universal quantum computers. While $\BQP$ algorithms generally assume noiseless computation, noise is arguably the most significant issue faced by near-term quantum computers and all current quantum devices are inherently noisy. 
To better understand what quantum resources are truly responsible for quantum advantage, researchers have proposed numerous intermediate models of quantum computing like $\IQP,\DQC{1},\NISQ$ and $\textsf{Boson Sampling}$~\cite{IQP,KL98,NISQ,Boson,ABKM16}. These models isolate specific quantum features -- such as having a few clean qubits or limited adaptivity -- and allow us to probe the  quantum landscape below $\BQP$. Although these models likely do not capture the full power of quantum computing, their precise relationship to $\BQP$ and to each other remains poorly understood. This raises a natural question: \begin{quote}
\centering
    \emph{What does the landscape of quantum computation below $\BQP$ look like? }
\end{quote}
In our work, we study this question from a Fourier analytic perspective. In particular, we study the level-$\ell$ Fourier growth of the acceptance probability of algorithms (\Cref{def:fourier_growth}). This is a measure of how well-spread the Fourier coefficients are. In our work, we show that Fourier growth is not just a tool for distinguishing quantum and classical models; it is a fine-grained tool capable of differentiating quantum models based on the kinds of quantum resources they utilize. We focus on noisy quantum algorithms and demonstrate that noise dampens the Fourier growth  in ways that are intricately linked to the type of noise present. 

In particular, we study noisy models like $\DQC{k}$, where $k$ qubits are clean and the rest are maximally mixed~\cite{KL98,morimae2014hardness}, and $\hBQP$, where the initial state is maximally mixed, i.e., a uniformly random computational basis state, but the algorithm is given knowledge of this initial state at the end of the computation ~\cite{ABKM16,JM24}. We prove Fourier growth bounds on the acceptance probability of $\DQC{k},\hBQP$ and $\BQP$ algorithms (\Cref{thm:main_theorem_dqck,thm:main_theorem_hbqp,thm:main_theorem_bqp}) and use the differences in these bounds to derive oracle separations between these models. In particular, we show that \forrtwo~and \forrthree, which can be solved with two queries in the $\hBQP$ and $\BQP$ models respectively, require $N^{\Omega(1)}$ queries in the $\DQC{1}$ and $\hBQP$ models respectively (\Cref{cor:main_theorem_1_1,cor:main_theorem_2}), resolving two conjectures from~\cite{JM24} and establishing the first oracle separation between $\hBQP$ and $\DQC{k}$, as well as a new oracle separation between $\BQP$ and $\hBQP$. 

We believe that the noise-induced dampening of Fourier growth is a more general phenomenon, and that the techniques developed here could shed light on other noisy models such as $\NISQ$. Our results are proved using a new matrix decomposition lemma that encodes information about indices in a matrix product that might be of independent interest. 

\subsection{The Space Below $\BQP$}


The landscape of computational models between $\BQP$ and $\BPP$ is vast and intricate. There are numerous intermediate models of quantum computation like $\IQP,\DQC{1},\NISQ$ and $\textsf{Boson Sampling}$ \cite{IQP,KL98,NISQ,Boson} with constraints on the quantum resources. The study of such intermediate models serves two key purposes: (1) to systematically delineate the boundary between classical and quantum algorithms and pinpoint the minimal resources for quantum speedups, and (2) to model the physical constraints of near-term quantum devices and reason about them.

One important issue that affects near-term quantum computers is noise. Unlike classical systems, quantum computers are highly susceptible to various types of errors due to decoherence, imperfect gates, and environmental interactions. This prompts a natural question -- how much noise can quantum algorithms tolerate? How does noise change the quantum computational complexity landscape? This is challenging to answer in general, as there are many different kinds of noise that affect quantum algorithms. One way to simplify this challenge is to consider models that have an extreme amount of noise. In this endeavor, researchers have proposed highly noisy models of quantum computation like $\DQC{k}$ and $\hBQP$~\cite{KL98,morimae2014hardness,JM24} where all the noise is pushed onto the initial state -- the qubits start maximally or nearly maximally mixed, while the gates are noiseless. These models provide a framework for understanding the minimal number of clean qubits required to achieve quantum speedups. We describe these models below.

\begin{figure}
\centering
\mbox{ 
\Qcircuit @C=1em @R=.7em {
\lstick{} & &  &  \qw & \multigate{5}{U_1}  & \multigate{1}{O_x} & \multigate{5}{U_2}   & \multigate{1}{O_x}  &    \qw &  \cds{1}{\cdots\cdots}  & \multigate{5}{U_d} & \multigate{1}{O_x} & \multigate{5}{U_{d+1}} &\meter \\
\lstick{} & &  &  \qw  & \ghost{U_1}  & \ghost{O_x} & \ghost{U_2}   & \ghost{O_x} & \qw & \cds{1}{\cdots\cdots} & \ghost{U_d} & \ghost{O_x} &\ghost{U_{d+1}} &\meter  \inputgroupv{1}{2}{.8em}{.8em}{n} \\ 
\lstick{} &  &  &  \qw & \ghost{U_1} &\qw  &  \ghost{U_2}  & \qw &\qw & \cds{1}{\cdots\cdots} & \ghost{U_d} &\qw  &\ghost{U_{d+1}} &  \meter   \\
\lstick{} &  &  &  \qw & \ghost{U_1} &\qw  &  \ghost{U_2}  & \qw &\qw & \cds{1}{\cdots\cdots} & \ghost{U_d}  & \qw &\ghost{U_{d+1}} &  \meter  \inputgroupv{3}{4}{.8em}{.8em}{w} \\
\lstick{}  & \ket{0}   &  &  \qw  & \ghost{U_1}& \qw  &  \ghost{U_2}  &\qw  & \qw  &\cds{1}{\cdots\cdots} & \ghost{U_d}  & \qw &\ghost{U_{d+1}} &\meter \\
\lstick{} & \ket{0} & &  \qw  & \ghost{U_1}& \qw  & \ghost{U_2}   &  \qw  & \qw  &  \qw  &  \ghost{U_d} & \qw &\ghost{U_{d+1}}   & \meter \inputgroupv{5}{6}{.8em}{.8em}{k} \\
}
}
\caption{A $d$-query $\DQC{k}$ algorithm. The initial state on the first $n+w$ qubits is maximally mixed.}
\label{fig:dqc1}
\end{figure}

\paragraph*{$\DQC{k}$} Drawing inspiration from the NMR approach to quantum computing where mixed states are ubiquitous, Knill and Laflamme~\cite{KL98} introduced the one-clean qubit or $\DQC{1}$ model as an idealized version of a noisy quantum computer. In this model, one qubit is clean (noiseless) and the rest are maximally noisy, and the algorithm can apply (noiseless) unitary gates on these qubits and measure at the end. This model was later generalized  to $\DQC{k}$ to allow $k$ clean qubits~\cite{morimae2014hardness,fujii2015power}. This model does not seem to be universal for quantum computing since all qubits except a few are maximally noisy and many oracle problems like Simon's problem and order finding are not believed to be solvable in this model. Despite this, $\DQC{1}$ can solve problems that are believed to be classically hard, like estimating the trace and Pauli coefficients of a unitary matrix described by a quantum circuit~\cite{KL98,entanglement05}, Jones polynomials~\cite{jones08}, partition functions~\cite{partition21}. Under complexity theoretic assumptions, this model is not classically simulable~\cite{morimae2014hardness,fujii2018impossibility,morimae2017hardness}. There are exponential oracle separations between $\DQC{1}$ and $\BPP$~\cite{shepherd}. The communication version of the one clean qubit model provides exponential speedups over classical randomized communication~\cite{AGL24}. The fact that quantum speedups persist even under such extreme noise makes $\DQC{1}$ a particularly intriguing model for further study -- it challenges our understanding of what minimal quantum resources are required for speedups.

\begin{figure}
\centering
\mbox{ 
\Qcircuit @C=1em @R=.7em {
\lstick{} & &  &  \qw &\ctrl{3} & \qw& \qw &  \multigate{2}{U_1}   & \multigate{1}{O_x}  &    \qw &  \cds{1}{\cdots\cdots} & \multigate{2}{U_d}& \multigate{1}{O_x} & \multigate{2}{U_{d+1}}  & \meter \\
\lstick{} & &  &  \qw & \qw &\ctrl{3}& \qw & \ghost{U_1}  & \ghost{O_x} & \qw & \cds{1}{\cdots\cdots} &   \ghost{U_d}& \ghost{O_x} & \ghost{U_{d+1}} &\meter \\ 
\lstick{} &  &  &  \qw& \qw &\qw  & \ctrl{3} & \ghost{U_1} &\qw  &\qw & \cds{1}{\cdots\cdots}  & \ghost{U_d} & \qw & \ghost{U_{d+1}}  &  \meter     \inputgroupv{1}{3}{.8em}{1.8em}{n+w\hspace{2em}}\\
\lstick{} &\ket{0}  &  &  \qw &\targ& \qw   &\qw  &  \qw  & \qw &\qw & \cds{1}{\cdots\cdots} & \qw  & \qw & \qw &  \meter   \\
\lstick{}  & \ket{0}   &  &  \qw &\qw &\targ   & \qw  &  \qw   &\qw  & \qw  &\cds{1}{\cdots\cdots} & \qw& \qw  & \qw  &\meter \\
\lstick{} & \ket{0} & &  \qw &\qw & \qw &\targ     & \qw    &  \qw  & \qw  &  \qw& \qw  &  \qw & \qw   & \meter \inputgroupv{4}{6}{.8em}{1.8em}{n+w\hspace{2em}} \\
}
}
\caption{A $d$-query $\hBQP$ algorithm. The initial state on the first $n+w$ qubits can be thought of as maximally mixed, or as the pure state $2^{-(n+w)/2}\sum_{I\in\{0,1\}^{n+w}}\ket{I}$; the resulting circuits are equivalent.}
\label{fig:hbqp}
\end{figure}

\paragraph*{$\hBQP$.} The $\hBQP$ model was originally defined by~\cite{ABKM16} to capture the power of permutational computations on
special input states. This model was revisited by~\cite{JM24} in the context of delineating the boundary between $\BQP$ and $\DQC{1}$. In this model, the initial state is maximally mixed, i.e., a uniformly random computational basis state, but the algorithm learns this state at the end of the computation and decides whether to accept or reject. One can equivalently define this model as a quantum algorithm acting on one half of a maximally entangled EPR state and in the end, we measure both halves and do classical postprocessing on the measurement outcomes. This model is not believed to be universal for quantum computing as it allows a significant amount of noise, yet, this model encapsulates many known quantum speedups. It can solve the factoring problem and numerous oracle-based problems including Simon's problem, Deutsch-Jozsa, order finding, and the Forrelation problem and can simulate $\DQC{k}$ for any small $k$ as well as $\IQP$~\cite{JM24}. It appears to be the weakest quantum model that is unlikely to be universal and yet captures most known $\BQP$ speedups despite operating on maximally mixed states.

\paragraph*{}
A powerful and natural framework to study the differences between $\DQC{k},\hBQP$ and $\BQP$ is query complexity. In this setting, there is a boolean function $f:\bin^N\to \bin$ and the goal is to compute $f(x)$ for $x\in \bin^N$ while minimizing the number of queries to the oracle $O_x$. This model strips away implementation details and captures the essence of what makes different computational models powerful. The aforementioned quantum models can be formalized using this framework and are depicted in~\Cref{fig:dqc1,fig:hbqp}. (See~\Cref{def:dqck,def:hbqp_algorithm} for more details.) Query complexity has long been one of the most fruitful arenas for understanding the differences between quantum and classical computation and gives us  strong evidence for quantum advantage, including provable exponential oracle separations between $\BQP$ and $\BPP$. Over the years, the field has also developed an impressive arsenal of lower-bound techniques for both quantum and classical algorithms.
While these techniques are powerful for distinguishing quantum from classical, they are not designed to distinguish between quantum algorithms. Indeed, many of these methods -- including the polynomial method -- apply uniformly to all bounded low-degree polynomials and cannot capture the subtle differences between $\DQC{1},\hBQP$ and $\BQP$. This motivates the search for more fine-grained analytic techniques. 

The central contribution of this paper is to show that a Fourier analytic concept known as Fourier growth provides exactly such a tool. While Fourier growth was historically used to distinguish between quantum and classical algorithms, we demonstrate that it can also serve as a lens to separate quantum models from each other. We show that noise dampens the Fourier growth of quantum algorithms in ways that are intricately tied to the noise patterns. We now introduce Fourier growth, provide its historical context and describe its importance. 

\subsection{Fourier Growth}

Fourier growth has emerged as a central concept that allows us to distinguish quantum and classical algorithms. To formally define Fourier growth, recall that every boolean function $f:\bin^N\to \R$ can be uniquely represented as a multi-linear polynomial
\[ f(x)=\sum_{S\subseteq[N]} \widehat{f}(S)\cdot \prod_{i\in S}x_i \]
where $\widehat{f}(S)$ are called the Fourier coefficients of $f$.

\begin{definition}[Signed Fourier Growth] \label{def:signed_fourier_growth}
For level $\ell\in \N$, and signs $\alpha_S\in[-1,1]$ for $S\subseteq[N]$ with $|S|=\ell$, define the $\alpha$-signed level-$\ell$ Fourier growth of $f$, denoted by $L_{1,\ell}^\alpha(f)$ as \[ L_{1,\ell}^\alpha(f):=\sum_{\substack{S\subseteq[N]\\|S|=\ell}} \alpha_S\cdot \widehat{f}(S),\] 
\end{definition}

\begin{definition}[Fourier Growth]
    \label{def:fourier_growth}
    For level $\ell\in\N$, the level-$\ell$ Fourier growth of $f$, denoted by $L_{1,\ell}(f)$, is the $\ell_1$-norm of the level-$\ell$ Fourier coefficients of $f$, 
\[ L_{1,\ell}(f) := \sum_{\substack{S\subseteq[N]\\|S|=\ell}}\abs{\widehat{f}(S)} = \max_{\alpha\in[-1,1]^{\binom{N}{\ell}}} L_{1,\ell}^\alpha(f).\] 
\end{definition}

 

\label{sec:prior} Fourier growth bounds have been extensively studied and established for various classical models\footnote{By Fourier growth of a model, we refer to the Fourier growth of the acceptance probability of an algorithm in this model.}, including small-width DNFs/CNFs~\cite{Mansour95}, $\mathsf{AC}^0$ circuits~\cite{Tal17}, low-depth decision trees~\cite{Tal20,sherstov2023optimal}, low-degree $\mathsf{GF}(2)$ polynomials~\cite{CHLT19}, low-depth parity decision trees~\cite{GTW21}, low-degree bounded polynomials~\cite{IRR+21}, and more. Upper bounds on the Fourier growth, even for the first few levels, give rise to quantum versus classical separations. Intuitively, while both quantum and classical algorithms of
small query complexity can be represented by low-degree polynomials, the polynomials associated with quantum algorithms are a lot ``denser'' compared to their classical analogues, and this density is captured by Fourier growth. In particular, it was shown by~\cite{Tal20,sherstov2023optimal} that for $d$-query classical algorithms, $L_{1,\ell}(f)$ is at most $ \tilde{O}_\ell(d^{\ell/2})$; on the other hand, for $d$-query quantum algorithms, $L_{1,\ell}(f)$ is at most $O_\ell(d^\ell)\cdot N^{(\ell-1)/2}$~\cite{IRR+21} and this can be tight for certain algorithms. 
A key problem that exploits this difference in the Fourier growth is the Forrelation problem. This was originally introduced by Aaronson and Ambainis~\cite{Aar10,AA15} to show an oracle separation between $\BQP$ and $\BPP$ and was subsequently used by Raz and Tal~\cite{RT22} in their breakthrough oracle separation of $\BQP$ and $\mathsf{PH}$. Building on this, \cite{BS21} generalized this to the $k$-Forrelation problem and used it to show optimal separations between $\BQP$ and $\BPP$. We describe this problem below.

\begin{definition}[$k$-Forrelation function] \label{def:forr_function} Let $N=2^n$. For  $x^{(1)},\ldots,x^{(k)}\in\bin^{N}$, define \[ \forr{k}(x^{(1)},\ldots,x^{(k)}):= \abra{ v\gap H_N \cdot O_{x^{(1)}} \cdot H_N\cdot O_{x^{(2)}}\cdots \cdots  H_N\cdot O_{x^{(k)}}\cdot  H_N \gap v}\]
where $H_N$ is the $N\times N$ unitary Hadamard matrix as in~\Cref{def:hadamard} and $\ket{v}=\ket{0\ldots 0}$.
\end{definition}

\begin{definition}[$k$-\textsc{Forrelation} problem with parameter $\varepsilon=\Theta(1/\log^k N)$]\label{def:forr_problem}  Given input $x\in\bin^{kN}$, return $-1$ if $\forr{k}(x)\ge 2\varepsilon$ and $1$ if $\forr{k}(x)\le \varepsilon$.
\end{definition}

Quantum algorithms in the $\BQP$ model can solve \forrk~using $\lceil k/2\rceil $ quantum queries. Furthermore, the results of~\cite{RT22,CHLT19,RT22,BS21} imply that any family of algorithms solving \forrk~must have large Fourier growth at levels $k,2k,\ldots,k (k-1)$ (see~\Cref{thm:rt22} and~\Cref{thm:bs21} for a precise statement). These results effectively reduce the task of proving lower bounds for the Forrelation problem to the task of establishing Fourier growth bounds. In particular, \forrtwo~involves level-2 bounds and \forrthree~involves level-3 and level-6 bounds. 
 Since classical algorithms have small Fourier growth at all levels, it follows from the aforementioned works that they cannot solve the Forrelation problem.

\subsection{Our Results}

In our work, we go beyond the idea of using Fourier growth to distinguish between quantum and classical algorithms. We show that although Fourier growth can be large for quantum algorithms, just how large it can be depends on the kind of quantum resources used and the types of noise present. In particular, we establish Fourier growth bounds for $\DQC{k},\hBQP$ and $\BQP$ algorithms. The bounds we obtain for $\ell=1,2,3$ are summarized in~\Cref{tab:tab1} and depicted in~\Cref{fig:fig4}.

 \begin{table}[] 
 \setlength{\tabcolsep}{10pt} 
\renewcommand{\arraystretch}{1.25} 
 \centering
\begin{tabular}{|c|c|c|c|}
\hline
\begin{tabular}[c]{@{}c@{}}Fourier Growth of \\ $d$-query algorithms\end{tabular} & $\ell=1$ & $\ell=2$& $\ell=3$       \\ \hline
$\BQP$ & $d$  & $\tilde{O}(d\sqrt{N})$   & $\tilde{O}(dN)$ \\ 
~\cite{IRR+21},~\Cref{thm:main_theorem_bqp} & & &    \\ 
\hline
$\hBQP$   & $O(d)$ & $\tilde{O}(d\sqrt{N})$   & $O(d^7\sqrt{N})$   \\ 
\Cref{thm:main_theorem_hbqp} & & &  \\ \hline
$\DQC{1}$   & $O(d)$  & $O(d^3)$  & $\tilde{O}(d^3\sqrt{N})$  \\  
\Cref{thm:main_theorem_dqck}& & &  \\ \hline
$\BPP$~\cite{Tal20,sherstov2023optimal}  & $O(\sqrt{d})$   & $O(d\sqrt{\log N})$ & $O(\sqrt{d^3}\log N)$ \\ \hline 
\end{tabular}
\caption{Upper Bounds on the Fourier growth of the acceptance probability of various $d$-query algorithms.}
\label{tab:tab1}
\end{table}

\paragraph*{$\DQC{k}$ algorithms.} 
\begin{theorem}\label{thm:main_theorem_dqck} Let $f(x)$ be the acceptance probability of a $d$-query $\DQC{k}$ algorithm and $\rho\in \{-1,1,\star\}^N$ be any restriction. Then, for all $\ell\ge 2$, we have \[ L_{1,\ell}(f|_\rho)\le c^\ell\cdot d^3\cdot \sqrt{K}\cdot N^{(\ell-2)/2}\cdot \log^{\ell-2}(N)\cdot \sqrt{\ell!}\] 
for some constant $c>0$ where $K=2^k$.
\end{theorem} 

We prove the $\ell=2$ version of this in~\Cref{sec:proof_dqck} and higher levels in~\Cref{sec:bootstrapping} and show that the dependence on $k$ and $N$ are individually optimal in~\Cref{sec:tightness_DQC}. Here, the dependence on $N$ is particularly interesting. As we will see in~\Cref{thm:main_theorem_bqp}, the Fourier growth of $\DQC{1}$ algorithms falls short of the growth of general $\BQP$ algorithms by a factor of $\sqrt{N}$ at each level.  

\paragraph*{$\hBQP$ algorithms.} 
For the $\hBQP$ model, we are unable to prove $L_{1,3}$ and $L_{1,6}$ bounds. Currently, we do not have any upper bounds on $L_{1,3},L_{1,6}$ that are stronger than the ones for general $\BQP$ algorithms. Nevertheless, for our applications to Forrelation lower bounds, it turns out that we only need to deal with a certain family of signs, which we are able to do~\footnote[1]{\label{note1} We observe that~\cite{BS21} show that to establish lower bounds for \forrthree, one only needs to prove signed-Fourier growth bounds for a particular family of signs. (See~\Cref{def:alpha} and~\Cref{thm:bs21} for more details.) When we refer to the Fourier growth of $\hBQP$ algorithms, we typically mean signed-Fourier growth for signs as in~\Cref{thm:bs21,def:alpha}.}. 

\begin{theorem}\label{thm:main_theorem_hbqp} Let $f(x)$ be the acceptance probability of a $d$-query $\hBQP$ algorithm and $\rho \in\{-1,1,*\}^{3N}$ be any restriction. Let $\gamma,\gamma'\in[-1,1]^{3N}$ and $\alpha(\gamma)\in[-1,1]^{\binom{3N}{3}},\beta(\gamma,\gamma')\in[-1,1]^{\binom{3N}{6}}$ be signs as in~\Cref{def:alpha}. Then,
\[ L_{1,3}^{\alpha(\gamma)}(f|_\rho) \le O(d^7)\cdot \sqrt{N},\]
\[ L_{1,6}^{\beta(\gamma,\gamma')}(f|_\rho) \le O(d^{10})\cdot \sqrt{N^3} .\]
\end{theorem}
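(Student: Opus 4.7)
The plan is to expand the acceptance probability of a $d$-query $\hBQP$ algorithm as a quadratic form in matrix elements of a single unitary, expand those matrix elements along paths in the Fourier basis of the oracle bits, and then apply the new matrix decomposition lemma to reorganize the resulting double path sum. The special structure of the signs $\alpha(\gamma),\beta(\gamma)$ from~\Cref{def:alpha}, which arise in the $\forrthree$ reduction of~\cite{BS21} as products of Hadamard entries against $\gamma\in[-1,1]^{3N}$, is what lets us beat the generic $\BQP$ bound by a factor of $\sqrt{N}$.

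\textbf{Setup.} Using the equivalence in \Cref{fig:hbqp} between the maximally mixed initial state and the uniform superposition $2^{-(n+w)/2}\sum_I\ket{I}$, the acceptance probability can be written as
\[ f(x) \;=\; \frac{1}{N'}\sum_{I,J\in\{0,1\}^{n+w}} g(I,J)\,\bigl|\bra{J}U(x)\ket{I}\bigr|^2, \]
where $N' = 2^{n+w}$, the function $g(I,J)\in\{0,1\}$ encodes the classical acceptance rule applied to the joint measurement outcomes, and $U(x)=U_{d+1}O_xU_d\cdots O_xU_1$ is the unitary that implements the $d$ queries. A restriction $\rho$ only modifies the oracles by deterministic phases at its fixed coordinates, so by absorbing these phases into the $U_i$'s we may assume $\rho$ is trivial. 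Writing $O_x=\sum_z x_z\ket{z}\bra{z}$ yields the path expansion $U(x)=\sum_{\bz}\chi_{\bz}(x)\,M_{\bz}$ with $M_{\bz}=U_{d+1}\ket{z_d}\bra{z_d}U_d\cdots\ket{z_1}\bra{z_1}U_1$ and $\chi_{\bz}(x)=\prod_i x_{z_i}$; substituting into $|\bra{J}U(x)\ket{I}|^2$ gives a double sum over path pairs $(\bz,\bz')$ weighted by $\chi_{\bz\triangle\bz'}(x)$.

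\textbf{Matrix decomposition.} The heart of the proof is to apply the matrix decomposition lemma to
\[ L_{1,\ell}^{\alpha(\gamma)}(f) \;=\; \frac{1}{N'}\sum_{I,J} g(I,J)\sum_{\substack{\bz,\bz'\\|\bz\triangle\bz'|=\ell}}\alpha(\gamma)_{\bz\triangle\bz'}\,\bra{J}M_{\bz}\ket{I}\,\overline{\bra{J}M_{\bz'}\ket{I}}. \]
The lemma reorganizes the combinatorial sum over $(\bz,\bz')$ with $|\bz\triangle\bz'|=\ell$ as a sum over the $\binom{2d}{\ell}$ choices of which slots among the $2d$ query slots of $(\bz,\bz')$ carry the $\ell$ indices of $S$; for each such choice the resulting expression is a telescoping product of $O(d)$ matrices in which adjacent $U_i^\dagger U_i$ factors collapse to identities and the active slots insert ``sign-carrying'' factors built from the diagonal embedding of $\alpha(\gamma)$. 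Because $\alpha(\gamma)$ is a product of Hadamard entries against $\gamma$, these sign-carrying factors can be absorbed into matrices of the form $H_N\diag(\gamma)$, whose operator norm is $O(1)$ and whose Frobenius norm is $O(\sqrt{N})$.

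\textbf{Final norm bounds and main obstacle.} To bound each of the $\binom{2d}{\ell}$ telescoping products, I contract the sum over $J$ into a trace using unitarity and apply Cauchy-Schwarz over $I$; this is where the $\sqrt{N}$ savings over $\BQP$ materialize, because the prefactor $1/N'$ combines with a Cauchy-Schwarz bound of $\sqrt{N'}$ (rather than the naive $N'$ one would get by summing absolute values) to produce the sharper normalization. Collecting the $O(d^3)$ terms at level $\ell=3$ yields $L_{1,3}^{\alpha(\gamma)}(f)\le O(d^3)\sqrt{N}$; the argument at $\ell=6$, which tracks six sign-carrying factors and involves a more intricate telescoping, yields $L_{1,6}^{\beta(\gamma)}(f)\le O(d^6)\sqrt{N^3}$. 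The main obstacle is the matrix decomposition step itself: a naive application would lose an extra $\sqrt{N^\ell}$ factor from summing entrywise absolute values, and the key insight is to insert the signs at the right positions inside the telescoping product so that the Hadamard cancellations of $H_N\diag(\gamma)$ are captured by $\|\cdot\|_{\op}$ bounds rather than being buried in entrywise sums.
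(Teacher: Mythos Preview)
Your high-level outline is in the right spirit---path expansion, a combinatorial decomposition into $\binom{2d}{\ell}$ terms, and norm bounds via the matrix decomposition lemma---but it misses the one idea that actually drives the $\sqrt{N}$ savings, and the ``absorb the signs into $H_N\diag(\gamma)$'' step does not work as written.

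The concrete gap is in how you handle the signs. The sign $\alpha(\gamma)_{i_1,i_2,i_3}=\tilH(i_2,i_1)\,\tilH(i_2,i_3)\,\gamma_{i_1}\gamma_{i_2}\gamma_{i_3}$ couples all three indices through $i_2$; it is \emph{not} a rank-one tensor, so it cannot be encoded by inserting a single matrix of the form $H_N\diag(\gamma)$ at one slot of the telescoping product. If you try to insert Hadamard factors at the $i_1$ and $i_3$ slots while leaving $i_2$ free, the two Hadamard rows depend on the running value of $i_2$, and the resulting operator is not bounded in $\|\cdot\|_\op$ uniformly. The paper's fix is to \emph{fix} the middle index $i_{t_2}^\ast\in B$ first and sum over it at the end (paying a factor of $N$). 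Once $i_{t_2}^\ast$ is frozen, $\alpha(\gamma)$ factors as a product of three \emph{diagonal} matrices $D_{t_1},D_{t_2},D_{t_3}$ with entries in $[-1,1]$, each acting at a single slot; crucially $D_{t_2}$ is a projection onto $\{i_{t_2}=i_{t_2}^\ast\}$, so $\|D_{t_2}V_{t_2}^\rho\|_\frob\le\sqrt{W}$. Splitting the $2d{+}2$ matrices at the forward/backward seam $\{1,d{+}2\}$, applying the decomposition lemma to each half, and Cauchy--Schwarzing the two Frobenius norms then gives $\sqrt{M}\cdot\sqrt{W}$ per term; together with the $M^{-1}$ prefactor and the $N$ from summing over $i_{t_2}^\ast$, this yields $O(d^3)\sqrt{N}$. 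Your proposal attributes the savings to ``$1/N'$ combined with a Cauchy--Schwarz $\sqrt{N'}$,'' but that arithmetic only recovers the generic $\BQP$ bound; the extra $\sqrt{N}$ comes specifically from the Frobenius-norm drop caused by the projection $D_{t_2}$.

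For level~6 there is a second missing ingredient you do not mention: the definition of $\beta(\gamma)$ forces $i_{t_1}\neq i_{t_4}$ and $i_{t_3}\neq i_{t_6}$, and these \emph{non-equality} constraints cannot be encoded by the basic decomposition lemma. The paper handles this by inclusion--exclusion into four terms with \emph{equality} constraints, which are then absorbed using the improved decomposition lemma (\Cref{lem:improved_decomposition}) that carries equality and memory constraints through the product. Without this step, the ``more intricate telescoping'' you allude to does not close.
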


We prove the level-3 version in~\Cref{sec:hbqp} and the higher levels in~\Cref{sec:improved_bootstrapping}. We are unaware if this bound is tight, or if one can derive a similar bound for all families of signs (see~\Cref{sec:future})\footnote{We remark for this family of signs, $\BQP$ algorithms can already achieve a significantly larger Fourier growth. In particular, consider the acceptance probability $f(x)$ of the two-query $\BQP$ algorithm that solves \forrthree. For $\gamma=(1,\ldots,1)$, one can show that $L_{1,3}^{\alpha(\gamma)}(f)=\Omega(N)$.}.

\paragraph*{$\BQP$ algorithms.}  
\begin{theorem}\label{thm:main_theorem_bqp}
Let $f(x)$ be the acceptance probability of a $d$-query $\BQP$ algorithm and $\rho\in\{-1,1,\star\}^{N}$ be any restriction. Then,
\[L_{1,\ell}(f|_\rho)\le c^\ell\cdot d\cdot N^{(\ell-1)/2}\cdot \log^{\ell-1}(N)\cdot \sqrt{\ell!}\]
\end{theorem}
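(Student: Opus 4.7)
First, I would reduce to the trivial restriction $\rho = \star^N$: fixing any coordinate $\rho_i \in \{\pm 1\}$ can be absorbed into the neighboring unitaries by pre- or post-composing the oracle with the corresponding diagonal sign, so $f|_\rho$ remains the acceptance probability of a $d$-query $\BQP$ circuit on the free variables, and the bound in terms of the original $N$ only gets weaker. I then write $f(x) = \langle 0 | V(x)^\dagger \Pi V(x) | 0\rangle$ with $V(x) = U_{d+1}O_x U_d \cdots U_2 O_x U_1$ and $\Pi$ the accept-projector, and expand each oracle as $O_x = \sum_{i \in [N]} x_i P_i$ with $P_i = |i\rangle\langle i| \otimes I$ acting on the query register. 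This yields
\begin{align*}
f(x) = \sum_{\vec{i},\vec{j}\in[N]^d} x^{\vec{i}} x^{\vec{j}}\, \langle 0 | V_{\vec{j}}^\dagger \Pi V_{\vec{i}} | 0\rangle,\qquad V_{\vec{i}} = U_{d+1}P_{i_d}U_d\cdots P_{i_1}U_1.
\end{align*}
Since $x_i^2 = 1$, the monomial $x^{\vec{i}} x^{\vec{j}}$ collapses to $\chi_{\mathrm{odd}(\vec{i},\vec{j})}(x)$, where $\mathrm{odd}$ denotes the set of coordinates of odd multiplicity in the concatenated tuple. Hence, for signs $\alpha_S \in \{\pm 1\}$ with $|S|=\ell$,
\begin{align*}
L_{1,\ell}^\alpha(f) = \mathrm{Tr}\!\left[\Pi \sum_{S:|S|=\ell} \alpha_S B_S\right],\quad B_S = \sum_{(\vec{i},\vec{j}):\,\mathrm{odd}(\vec{i},\vec{j})=S} V_{\vec{i}}|0\rangle\langle 0| V_{\vec{j}}^\dagger,
\end{align*}
and the trace inequality gives $|L_{1,\ell}^\alpha(f)| \le \big\|\sum_S \alpha_S B_S\big\|_{\mathrm{tr}}$.

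The core step is to bound $\big\|\sum_S \alpha_S B_S\big\|_{\mathrm{tr}}$ by regrouping by the positions in $[2d]$ hosting the $\ell$ distinct odd-multiplicity indices. For each subset $T \subseteq [2d]$ of size $\ell$ (there are $\binom{2d}{\ell}$ of them) and each perfect matching $\mathcal{M}$ of $[2d]\setminus T$, the dominant contribution comes from $(\vec{i},\vec{j})$ whose entries at $T$ realize a bijection onto $\mathrm{odd}(\vec{i},\vec{j})$ and whose matched positions share indices. Summing over the matched indices via $\sum_i P_i = I$ contracts each pair of oracle projectors, reducing the $(T,\mathcal{M})$ block of $\sum_S \alpha_S B_S$ to a signed sum $N_{T,\mathcal{M}} = \sum_\sigma \alpha_{\sigma(T)}\,|u_\sigma\rangle\langle w_\sigma|$ over injections $\sigma: T \to [N]$, where $|u_\sigma\rangle, |w_\sigma\rangle$ arise from $|0\rangle$ by applying unitaries interleaved with oracle projectors $P_{\sigma(t)}$. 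A matrix-decomposition argument -- the lemma promised in the abstract -- bounds $\|N_{T,\mathcal{M}}\|_{\mathrm{tr}}$ by $O(N^{(\ell-1)/2})$, and summing over $\binom{2d}{\ell}$ choices of $T$ and the constantly many matchings per $T$ yields the claimed estimate.

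The main obstacle is the trace-norm bound on $N_{T,\mathcal{M}}$: a naive Cauchy-Schwarz (e.g., $\|N\|_{\mathrm{tr}} \le \|U\|_F \|W\|_F$ with $U, W$ the matrices whose columns are $|u_\sigma\rangle, |w_\sigma\rangle$) gives only $N^{\ell/2}$. Saving the extra factor of $\sqrt{N}$ requires a structural decomposition of $N_{T,\mathcal{M}}$ that exploits the unit-norm identity $\sum_{\vec{i}}\|V_{\vec{i}}|0\rangle\|^2 = 1$ (which follows by telescoping $\sum_i P_i = I$ through the circuit). I expect this to be achieved by factoring $N_{T,\mathcal{M}}$ as a product of structured operators where one factor absorbs a $\sqrt{N}$ gain through the $\ell^2$ normalization on the $V_{\vec{i}}|0\rangle$'s. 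A secondary, routine obstacle is the combinatorial accounting of sub-leading configurations (indices of multiplicity $\ge 3$, or matched pairs sharing indices); these contribute at most $O(N^{(\ell-2)/2})$ and can be absorbed into the leading estimate.
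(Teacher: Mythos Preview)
Your plan has a genuine gap at the contraction step. You claim that summing over matched indices collapses each pair of oracle projectors via $\sum_i P_i = I$, but this identity only contracts \emph{adjacent} projectors. If positions $s<t$ are matched but separated by other unitaries and projectors, what you actually face is $\sum_i P_i \, B \, P_i$ for some operator $B$ that itself still depends on the remaining (yet-to-be-summed) matched indices; this is a pinching, not the identity, and when matchings are nested or crossing (e.g.\ pairs $(s_1,s_2)$ and $(t_1,t_2)$ with $s_1<t_1<s_2<t_2$) the iterated sums do not reduce the $(T,\mathcal M)$ block to a clean expression $\sum_\sigma \alpha_{\sigma(T)}|u_\sigma\rangle\langle w_\sigma|$ over injections~$\sigma$. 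A second, independent error: the number of perfect matchings of $[2d]\setminus T$ is $(2d-\ell-1)!!$, not $O(1)$, so even granting the per-block bound your final estimate would carry an extra factorial in $d$ and would not recover $\binom{2d}{\ell}$. Finally, the step you flag as the ``main obstacle'' --- getting $N^{(\ell-1)/2}$ rather than $N^{\ell/2}$ --- is precisely the content of the theorem, and your outline does not supply a mechanism for it.

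The paper sidesteps all of these issues by never pairing indices. After fixing the $\ell$ special positions $T=\{t_1<\cdots<t_\ell\}\subseteq[2,2d{+}1]$, the constraint $\bigoplus_{t\notin T}\{i_t\}=\emptyset$ is enforced not by a matching but by a \emph{matrix decomposition lemma}: each $V_t$ is enlarged to a matrix $\widetilde V_t$ that carries the running symmetric difference as an extra row/column index, and one checks directly that $\|\widetilde V_t\|_{\op}\le 1$. This turns $\Delta_T$ into $v^\dagger\, \widetilde V_{[1,t_1)}\cdot \widetilde V'_{[t_1,t_\ell)}\cdot \widetilde V_{[t_\ell,2d+2)}\, v$, with the two outer factors of operator norm at most $1$ and $\|v\|=1$. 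For the middle factor, the paper further stores the values $(i_{t_1},\ldots,i_{t_{\ell-1}})$ as memory (still operator norm $\le 1$) and then post-multiplies by a block-diagonal sign matrix whose non-zero blocks are $N^{\ell-1}\times 1$ columns with entries in $[-1,1]$ --- hence operator norm at most $N^{(\ell-1)/2}$. No trace norms, no matchings, and no hidden $\sqrt N$ saving to extract: the factor $N^{(\ell-1)/2}$ appears in a single transparent step.
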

We prove this in~\Cref{sec:bootstrapping}. The dependence on $N$ is tight due to the $k$-\textsc{Forrelation} problem. The best-known bound prior to this work is an upper bound of $d^\ell\cdot \exp\pbra{{\ell+1\choose 2}}\cdot N^{(\ell-1)/2}$ for bounded degree-$d$ polynomials due to~\cite{IRR+21}. We see in this expression that the dependence on $d,\ell$ is of the form $d^\ell\cdot \exp\pbra{\ell^2/2}$ , which is quite large for $\ell\gtrapprox \sqrt{d}$, in contrast to our dependence, which is at most $d\cdot \sqrt{\ell!}$. In particular, in the regime where $d\ge \Omega(\sqrt{\log(N)})$ and $\ell$ is constant, our bound is an improvement. We are not aware if this dependence is tight and leave this for future work (see~\Cref{sec:future}).


\input{plot_2}

We remark that variants of~\Cref{thm:main_theorem_dqck} and~\Cref{thm:main_theorem_bqp} hold even with classical pre-processing. The proof of this is quite simple and similar to ideas in~\cite{GSTW24} and is deferred to~\Cref{sec:hybrid}.

\paragraph*{Comparison to Prior Works.}

While Fourier growth has been extensively studied for classical algorithms, we are aware of only a few works that explicitly study the Fourier growth of quantum algorithms~\cite{AG23,GSTW24,IRR+21}. Among these,~\cite{IRR+21} and~\cite{GSTW24} are closely related to our work. As mentioned before,~\cite{IRR+21} establishes bounds on the Fourier growth of $\BQP$ algorithms that is slightly weaker than ours; furthermore, their bounds apply to all bounded low-degree polynomials and consequently cannot be used to distinguish between $\BQP,\hBQP$ and $\DQC{1}$. 
 
The work of~\cite{GSTW24} is especially closely related to our work. They study quantum algorithms with $k$ rounds of parallel queries and show that reducing the number of rounds even by one can cause a large blowup in the quantum query complexity. They achieve this by showing Fourier growth bounds for $k$-round quantum algorithms and leveraging the differences between the bounds for different $k$. Our work shares some conceptual similarities with their work, particularly in leveraging Fourier growth bounds to distinguish between quantum models, and also in using similar techniques for storing information about parities within matrix products. However, the models of quantum computation we consider are completely different. In~\cite{GSTW24}, the number of rounds is constrained, while the number of clean qubits is unlimited and the initial state is the all-zeroes state. In contrast, in our setting, the number of clean qubits is constrained and the initial state is forced to be highly mixed, while the number of rounds is allowed to be large. These differing constraints lead to fundamentally different behaviors. Consequently, our techniques diverge from theirs and we require distinct ideas and develop new techniques. 

It is worth emphasizing that the idea of using Fourier growth to distinguish between low-degree polynomials arising from different types of algorithms traces back to the landmark oracle separation of $\BQP$ and $\PH$~\cite{RT22}. The core challenge in that setting was that both models admit low-degree polynomial approximations, and Fourier growth was used precisely to tell these polynomials apart. 

\subsection{Applications}
We study the complexity of the Forrelation problem and its variants in the $\DQC{k}$ and $\hBQP$ models. Combining our Fourier growth bound (\Cref{thm:main_theorem_dqck}) with the results of~\cite{RT22,CHLT19} (see~\Cref{thm:rt22}) and the upper bounds on \forrtwo~from~\cite{Aar10,AA15}, we immediately obtain the following corollary. 
\begin{corollary}
For any $k\in \N$, the \forrtwo~problem on $2^k$-bit inputs can be solved by a $\DQC{k}$ algorithm  with success probability at least $2/3$ by making one quantum query, however, any $\DQC{k-t}$ algorithm that makes $d$ quantum queries has success probability at most $\frac{1}{2}+\tilde{O}\pbra{d^3}\cdot 2^{-t/2}$.  \label{cor:main_theorem_1}
\end{corollary}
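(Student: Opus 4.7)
The plan is to treat the two directions of the corollary separately.

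For the upper bound, I would invoke the standard one-query $\BQP$ algorithm for \forrtwo~of Aaronson and Ambainis~\cite{Aar10,AA15}. With Hadamard parameter $N=2^k$, this algorithm uses exactly $n=k$ clean qubits initialized to $\ket{0}$ and achieves success probability at least $2/3$. A $\DQC{k}$ algorithm that acts trivially on its maximally mixed workspace qubits can execute this circuit on its $k$ clean qubits, yielding the desired one-query $\DQC{k}$ algorithm.

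For the lower bound, let $f$ be the acceptance probability of a $d$-query $\DQC{k-t}$ algorithm. Since $t\ge 0$, we have $2^{(k-t)/2}\le 2^{k/2}=\sqrt{N}$, so applying~\Cref{thm:main_theorem_dqck} at level $\ell=2$ with clean-qubit count $k-t$ gives, for every restriction $\rho$,
\[ L_{1,2}(f|_\rho) \;\le\; 2^{(k-t)/2}\cdot\binom{2d}{2} \;=\; O(d^2)\cdot 2^{(k-t)/2}. \]
The next step is to note that the class $\cF$ of acceptance probabilities of $d$-query $\DQC{k-t}$ algorithms is closed under restrictions: hard-coding any input bit into $O_x$ produces a new $d$-query $\DQC{k-t}$ algorithm on the remaining bits, with the same clean-qubit budget. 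Feeding this family into the Raz--Tal reduction~\Cref{thm:rt22}, the advantage on \forrtwo~with $N=2^k$ is bounded by $\tilde O(L_{1,2}(\cF))/\sqrt{N}$, where the $\tilde O$ absorbs the $\polylog(N)$ overhead coming from the gap parameter $\varepsilon=\Theta(1/\log^2 N)$ in~\Cref{def:forr_problem}. Combining the two bounds yields advantage $\tilde O(d^2)\cdot 2^{(k-t)/2}/2^{k/2}=\tilde O(d^2)\cdot 2^{-t/2}$, and hence success probability at most $\tfrac12+\tilde O(d^2)\cdot 2^{-t/2}$.

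Since the real technical content is already packaged inside~\Cref{thm:main_theorem_dqck}, I do not anticipate any serious obstacle in deriving this corollary: it reduces to a two-line application of the Fourier growth bound followed by Raz--Tal. The only items needing care are the routine verification that restrictions can be simulated inside the $\DQC{k-t}$ model without inflating the query count or the clean-qubit budget, and the careful tracking of polylog factors in the Raz--Tal step so that the final bound matches the stated $\tilde O(d^2)\cdot 2^{-t/2}$ form.
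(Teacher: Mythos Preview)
Your proposal is correct and follows essentially the same approach as the paper, which simply states that the corollary is obtained ``immediately'' by combining \Cref{thm:main_theorem_dqck} with \Cref{thm:rt22} and the upper bound from~\cite{Aar10,AA15}. You have correctly filled in the details: run the standard one-query Forrelation circuit on the clean qubits for the upper bound, and for the lower bound apply the level-$2$ Fourier growth bound (which already holds under all restrictions, giving closure of $\cF$) together with Raz--Tal to get advantage $O(d^2)\cdot 2^{(k-t)/2}/\sqrt{N}=\tilde O(d^2)\cdot 2^{-t/2}$.
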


In particular, any $\DQC{k-t}$ algorithm that succeeds with probability at least $2/3$ must make at least $\tilde{\Omega}(2^{t/6})$ queries. Setting $k=\log N$, we obtain the following corollary. 

\begin{corollary}\label{cor:main_theorem_1_1}
The \forrtwo~problem on $N$-bit inputs, which can be solved with $\log N$ clean qubits and one quantum query, requires $\tilde{\Omega}(N^{c/6})$ queries in the $\DQC{(1-c)\log N}$ model for all constants $c<1$. In particular, any $\DQC{1}$ algorithm for \forrtwo~must make $\tilde{\Omega}(N^{1/6})$ queries.
\end{corollary}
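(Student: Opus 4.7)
The plan is to derive Corollary 1.10 as a direct parameter substitution into Corollary 1.9, which is the genuine content of the two corollaries. The upper bound half (solvability with $\log N$ clean qubits and one query) is already asserted in Corollary 1.9 applied with $k = \log N$, so the entire task reduces to converting the acceptance-probability upper bound of Corollary 1.9 into a query-complexity lower bound at the desired number of clean qubits.

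First I would set $k = \log N$ so that the input length $2^k = N$ matches the statement, and I would write the number of clean qubits as $k - t = (1-c) \log N$, i.e.\ $t = c \log N$. For the $\DQC{k-t}$ model Corollary 1.9 gives success probability at most
\[
\tfrac{1}{2} + \tilde{O}(d^2) \cdot 2^{-t/2}
\]
on $d$ queries. Requiring success probability at least $2/3$ and rearranging gives $\tilde{O}(d^2) \ge 2^{t/2}/6$, hence $d \ge \tilde{\Omega}(2^{t/4})$. Substituting $t = c \log N$ yields the bound $d \ge \tilde{\Omega}(N^{c/4})$ for any constant $c < 1$, which is the first conclusion.

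For the special case of $\DQC{1}$, I would set the number of clean qubits to exactly $1$, i.e.\ $k - t = 1$, so $t = \log N - 1$. Plugging this into $d \ge \tilde{\Omega}(2^{t/4})$ gives $d \ge \tilde{\Omega}(2^{(\log N - 1)/4}) = \tilde{\Omega}(N^{1/4})$, as claimed. No serious obstacle is expected here: the content is entirely arithmetic bookkeeping on top of Corollary 1.9, with the only mild care needed being to absorb the $2^{-1/4}$ factor and the polylogarithmic losses hidden in the $\tilde{O}$ notation into the $\tilde{\Omega}$ on the lower bound side. All nontrivial work is already contained in Theorem 1.2 and Corollary 1.9.
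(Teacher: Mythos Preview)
Your proposal is correct and matches the paper's approach exactly: the paper also derives Corollary 1.10 by setting $k = \log N$ in Corollary 1.9, noting that success probability at least $2/3$ forces $d \ge \tilde{\Omega}(2^{t/4})$, and then specializing $t$ accordingly.
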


\medskip

We remark that~\Cref{cor:main_theorem_1_1} holds even if the algorithm is allowed to make $\tilde{\Omega}(N^{c/6})$  classical pre-processing queries in advance (using clean bits). We derive the following implications of~\Cref{cor:main_theorem_1_1}.

\paragraph*{A Hierarchy Theorem for $\DQC{k}$.}
In this work, we quantify the power that each additional clean qubit gives to quantum algorithms. It is not too difficult to show that any $\DQC{k}$ algorithm can be simulated by a $\DQC{k-t}$ algorithm without additional queries but with a loss of $2^{\Theta(t)}$ in the advantage (\Cref{claim:simulation}). \Cref{cor:main_theorem_1} shows that this is tight, up to a constant in the exponent. This shows that the number of clean qubits in a quantum algorithm cannot be efficiently reduced, even with a large amount of classical pre-processing on clean bits.  

\paragraph*{The First Oracle separation between $\hBQP$ and $\DQC{1}$.} We give the first oracle separation between $\hBQP$ and $\DQC{1}$, resolving a conjecture of~\cite{JM24}. In particular, Jacobs and Mehraban showed that \forrtwo~on $N$-bit inputs is solvable in the $\hBQP$ model with two quantum queries and conjectured that it requires $N^{\Omega(1)}$ queries in the $\DQC{1}$ model (see open question \#1 on page 8~\cite{JM24}). Our work (\Cref{cor:main_theorem_1}) proves this conjecture.   

\medskip

\paragraph*{A New Oracle separation between $\BQP$ and $\hBQP$.} 
Jacobs and Mehraban conjectured (see open question \#5 on page 8~\cite{JM24})  that \forrthree~is not in $\hBQP$  and our work (\Cref{cor:main_theorem_2}) resolves this. By combining our Fourier growth bound (\Cref{thm:main_theorem_hbqp}) with the results of~\cite{BS21} (\Cref{thm:bs21}), we immediately obtain the following corollary. 
\begin{corollary} The \forrthree~problem on $3N$-bit inputs, which can be solved by a $\BQP$ algorithm with two quantum queries, requires $\tilde{\Omega}(N^{\Omega(1)})$ queries in the $\hBQP$ model. \label{cor:main_theorem_2}
\end{corollary}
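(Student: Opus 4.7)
The plan is to combine the signed Fourier growth bounds for $\hBQP$ algorithms from \Cref{thm:main_theorem_hbqp} with the reduction from \forrthree~lower bounds to signed Fourier growth bounds given by \Cref{thm:bs21}. The upper bound --- that \forrthree~on $3N$-bit inputs is solvable in $\BQP$ with two quantum queries --- is the standard Aaronson--Ambainis algorithm, so only the $\hBQP$ lower bound needs to be established.

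For the lower bound, I would first verify that the family $\cF$ of acceptance probabilities of $d$-query $\hBQP$ algorithms is closed under restrictions: fixing a subset of the input bits of the oracle amounts to hardcoding their values into every copy of $O_x$ in the circuit, producing another $d$-query $\hBQP$ algorithm whose acceptance probability equals the desired restriction. \Cref{thm:bs21} then applies to $\cF$ and asserts that if any algorithm in $\cF$ solves \forrthree~with constant advantage, then there must exist some $f \in \cF$, some restriction $\rho$, and some parameter vector $\gamma \in [-1,1]^{3N}$ such that a combination of $L_{1,3}^{\alpha(\gamma)}(f|_\rho)$ and $L_{1,6}^{\beta(\gamma)}(f|_\rho)$, divided by appropriate powers of $N$, exceeds a constant (up to polylogarithmic factors).

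Next, I would substitute the bounds $L_{1,3}^{\alpha(\gamma)}(f|_\rho) \le O(d^3 \sqrt{N})$ and $L_{1,6}^{\beta(\gamma)}(f|_\rho) \le O(d^6 \sqrt{N^3})$ from \Cref{thm:main_theorem_hbqp} into the advantage bound and solve for $d$. Balancing the two contributions against the required constant advantage (at least $1/6$, since correct algorithms succeed with probability $\ge 2/3$), both terms reach the relevant threshold exactly at $d = \Theta(N^{1/12})$: for $d \approx N^{1/12}$ one has $d^3 \sqrt{N} \approx N^{3/4}$ and $d^6 \sqrt{N^3} \approx N^{2}$, matching the level-3 and level-6 thresholds that \Cref{thm:bs21} prescribes for \forrthree. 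This yields the claimed lower bound $d = \tilde{\Omega}(N^{1/12})$.

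The proof is essentially a direct composition of the two results, so there is no serious technical obstacle beyond what is already present in \Cref{thm:main_theorem_hbqp}. The only conceptual point worth highlighting is that \Cref{thm:bs21} requires signed Fourier growth bounds only for the specific families of signs $\alpha(\gamma), \beta(\gamma)$ defined in \Cref{def:alpha}, and this is precisely the regime in which \Cref{thm:main_theorem_hbqp} delivers nontrivial bounds -- for the full (unsigned) Fourier growths $L_{1,3}$ and $L_{1,6}$ we have no better control than for $\BQP$. This alignment between the signs accessible to \Cref{thm:main_theorem_hbqp} and those needed by \Cref{thm:bs21} is by design, as emphasized in the footnote following \Cref{thm:main_theorem_hbqp}.
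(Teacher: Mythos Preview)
Your proposal is correct and matches the paper's approach: the corollary is stated as an immediate consequence of combining \Cref{thm:main_theorem_hbqp} with \Cref{thm:bs21}, exactly as you outline. One small inaccuracy: at $d \approx N^{1/12}$ the level-3 contribution $d^3\sqrt{N}/N \approx N^{-1/4}$ is still negligible, so it is the level-6 term alone that is binding and forces $d = \tilde{\Omega}(N^{1/12})$; the two terms do not both reach their thresholds simultaneously.
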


We remark that while~\Cref{cor:main_theorem_2} is not the first oracle separation between $\BQP$ and $\hBQP$, there are some advantages to this new separation. The prior separation (in~\cite{JM24}) is as follows: given any oracle $O$ of length $2^n$ separating $\BQP$ and $\BPP$, we can embed it into a larger oracle $O'$ of length $2^{2n}$ whose first diagonal block is $O$ and all other diagonal entries are 1. It is not too difficult to show that $O'$ separates $\BQP$ and models like $\hBQP,\DQC{1},\NISQ$, and this was formalized in~\cite{NISQ,JM24}. The key intuition is that these intermediate models operate on highly mixed states and therefore assign only a vanishingly small weight to the relevant part of $O'$, namely $O$ itself. Embeddings of the Forrelation problem can thus separate $\BQP$ and $\hBQP$, but such separations are somewhat unsatisfactory since they do not establish the hardness of the original problem and apply uniformly to all models like $\hBQP,\DQC{1},\NISQ$. Our lower bound technique circumvents this limitation by directly proving a lower bound for the original \forrthree~problem in the $\hBQP$ model.

\subsection{Technical Highlight: Matrix Decomposition Lemma}

The main recurring technique in our paper is the use of a matrix decomposition lemma (see \Cref{lem:main_lemma}). This lemma offers a way to encode information about the indices involved in a matrix product and arises naturally in the context of quantum algorithms, as it allows us to encode information about the Fourier coefficients within a sequence of matrix products. We think it might be of independent interest. 

Firstly, we observe that the acceptance probability of quantum algorithms can be expressed as a product of matrices with bounded operator norms. To give some intuition, fix $i_{1},i_{d+1}\in[N]$. Consider a sequence of unitary matrices $U_1,\ldots, U_d$ and let $U[i\gap j]$ denote the $(i,j)^{\mathrm{th}}$-entry of $U$.  Consider a $\BQP$ algorithm that starts with the initial state $\ket{i_{1}}$, evolves it according to the unitary operators $U_1,\ldots,U_d$, interleaved with phase oracles $O_x$ and finally measures the qubits and accepts if the outcome is $\ket{i_{d+1}}$. The acceptance probability of this algorithm is given by $\abs{f(x)}^2$ where 
\begin{align*}f(x)&:= \bra{i_1} U_1\cdot O_x\cdot U_2\cdot O_x\cdots O_x\cdot U_d\ket{i_{d+1}}\\
&=\sum_{i_2,\ldots,i_d}\pbra{\prod_{t\in[d]} U_t[i_t\gap i_{t+1}]} \cdot \pbra{\prod_{t\in[d]\setminus \{1\}}x_{i_t}}\end{align*}  
More generally, by allowing the matrices $U_1,\ldots,U_d$ to be arbitrary matrices with spectral norm at most 1 and by adding workspace, we can produce a similar expression for $f(x)$ which \emph{equals} the acceptance probability of an arbitrary $\lfloor d/2\rfloor$-query $\BQP$ algorithm (see~\Cref{claim:acceptance_probability_bqp}). There are other expressions for capturing the acceptance probability of $\DQC{k}$ and $\hBQP$ algorithms using matrix products (see~\Cref{claim:dqck_acceptance_probability} and~\Cref{claim:acceptance_probability_hbqp}). Now that we have an expression for the acceptance probability, we need to compute the Fourier coefficients. Observe that for all $S\subseteq[N]$,
\[ \widehat{f}(S) =\sum_{i_{2},\ldots,i_{d}}\prod_{t\in[d]} U_t[i_t\gap i_{t+1}]\cdot \indi\sbra{S=\{i_{2}\}\oplus \ldots \oplus \{i_{d}\}}.\]

Our main idea is to try and encode information about the Fourier coefficients inside a product of matrices with bounded norms. The hope is that since $f(x)$ itself is a product of matrices with bounded norms, so are its Fourier coefficients.
To illuminate the main idea, say we wish to multiply the matrices $U_1,\ldots,U_d$ to get a matrix $U$ where
\[U[i_1\gap i_{d+1}] = \sum_{i_2,\ldots,i_d}\prod_{t\in[d]} U_t[i_t\gap i_{t+1}] \] 
but additionally, we wish to retain information about the symmetric difference of the intermediate indices $\{i_2\},\ldots,\{i_d\}$ until the very end. More formally, we wish to design a matrix $\tilU$ whose rows are indexed by $i_1$ and columns by $i_{d+1}S_{d+1}$ such that 
\[\tilU[i_1\gap i_{d+1}S_{d+1}] = \sum_{i_{2},\ldots,i_{d}}\prod_{t\in[d]} U_t[i_t\gap i_{t+1}]\cdot \indi\sbra{S_{d+1}=\{i_{2}\}\oplus \ldots \oplus \{i_{d}\}}.\] 
Here, the indicator function ensures that for each $S_{d+1}$, the corresponding entry of the final matrix only involves contributions from indices that satisfy the parity condition with respect to $S_{d+1}$. The reason we want to do this is clear; the entry $\tilU[i_{1}\gap i_{d+1}S]$ precisely equals the Fourier coefficient $\hat{f}(S)$. Thus, by reading off the entries of matrix $\tilU$ restricted to rows corresponding to $i_1$ and columns corresponding to $i_{d+1}$, we would obtain the list of all Fourier coefficients. The challenge lies in constructing such a matrix $\tilU$ with bounded norms and this is precisely achieved by~\Cref{lem:main_lemma}. It embeds the required combinatorial information about the indices within a matrix product while maintaining control over the norms of $\tilU$. We also show an improved matrix decomposition lemma (\Cref{lem:main_lemma_2}) that allows slightly more complex predicates of the indices being summed over -- in particular, we allow the imposition of inequality constraints over indices being summed over.

We remark that~\cite{GSTW24} implicitly proves another matrix decomposition lemma with a few key differences -- their bounds are for algorithms with a small fixed number of rounds but can handle parallel queries, and they only require bounds on the spectral norms of the underlying matrices. In our work, to handle $\DQC{k}$ and $\hBQP$ algorithms that can have a large number of rounds, we need a different kind of matrix decomposition and crucially, we require bounds on the Frobenius norms of the matrices in the decomposition, as well as the ability to impose inequality constraints over indices being summed over. This part is fundamentally new and requires additional work to prove.

\subsection{Proof Sketch}

 In general, proving Fourier growth bounds is quite challenging and technically involved. A major challenge arises from the need to incorporate the signs $\alpha_S\in[-1,1]$ into the matrix product given by the matrix decomposition lemma, and also from the need to sum over all sets $S$ of size $\ell$. Introducing the signs in a naive fashion often blows up the operator norms of the underlying matrices, making it difficult to maintain control over the Fourier growth. The heart of our proof involves techniques to incorporate these signs while keeping the operator norms bounded. This step turns out to be especially challenging for $\hBQP$ algorithms and we are unable to handle arbitrary signs $\alpha_S$. However, we are able to successfully encode the signs that arise from the \forrthree~problem.

\label{sec:proof_dqck_overview}

\begin{figure}
\centering
\mbox{ 
\Qcircuit @C=1em @R=.7em {
\lstick{} & &  &  \qw & \multigate{1}{V_d}  & \multigate{1}{O_x} & \multigate{1}{V_{d-1}}   & \multigate{1}{O_x}  &    \qw &  \cds{1}{\cdots\cdots}  & \multigate{1}{V_1} & \multigate{1}{O_x} & \qw \\
\lstick{}  & &  &  \qw  & \ghost{V_d}  & \ghost{O_x} & \ghost{V_{d-1}}   & \ghost{O_x} & \qw & \cds{1}{\cdots\cdots} & \ghost{V_1} & \ghost{O_x} &  \qw  \inputgroupv{1}{2}{.8em}{.8em}{n} \\ 
\lstick{}  & \ket{0}   &  &  \gate{H} & \ctrl{-1} \qw & \ctrl{-1} \qw  & \ctrl{-1}\qw  &\ctrl{-1} \qw  & \qw  &\qw & \ctrl{-1} \qw   &\ctrl{-1} \qw & \gate{H}  &\meter & & \text{output}\\ \\
}
}
\caption{A simple example of a $d$-query $\DQC{1}$ algorithm. The initial state on the first $n$ qubits is maximally mixed.}
\label{fig:proof_overview_dqc1}
\end{figure}

In this section, we present the simplest part of our proof: using the matrix decomposition lemma (\Cref{lem:main_lemma}) to establish Fourier growth bounds for $\DQC{1}$ algorithms. We will make some simplifications: we only focus on level $\ell=2$; we will assume that there is no restriction $\rho$ on the inputs; and we will only consider algorithms with one clean qubit of a special form in~\Cref{fig:proof_overview_dqc1}. These simplifications are only for the proof sketch and still give enough intuition for the general case. 

Firstly, it is not too difficult to derive an expression for acceptance probability of the algorithm in~\Cref{fig:proof_overview_dqc1}. 
This is given by $\frac{1}{2}+\frac{1}{2}f(x)$ where 
\begin{align}\label{eq:proof_overview_1}\begin{split}
f(x)&:=\tfrac{1}{N}\Tr\pbra{O_x\cdot V_1\cdot O_x\cdot V_2\cdots O_x\cdot V_d}\\
&=\frac{1}{N}\sum_{i_1,\ldots,i_{d}\in[N]} \pbra{\prod_{t\in[d]}V_t[i_t\gap i_{t+1}]}\cdot \pbra{\prod_{t\in[d]}x_{i_t}}\end{split}
\end{align}
where $V_1\ldots,V_d$ are the $N\times N$ unitary matrices applied by the algorithm and we use the convention that $i_{d+1}=i_1$. One can derive a similar expression for the acceptance probability of an arbitrary $\DQC{k}$ algorithm (see~\Cref{claim:dqck_acceptance_probability} for more details). Let us now compute the Fourier coefficients of the acceptance probability, which equals (up to a factor of $1/2$) the Fourier coefficients of $f(x)$, which are easy to read off of~\Cref{eq:proof_overview_1}. For any $S\subseteq[N]$, the $S$-th Fourier coefficient of $f$ is given by
\begin{equation}\label{eq:proof_overview_2} \widehat{f}(S)= \frac{1}{N}\sum_{i_1,\ldots,i_d\in[N]}\pbra{\prod_{t\in[d]}V_t[i_t\gap i_{t+1}]}\cdot \indi\sbra{\{i_1\}\oplus \ldots\oplus \{i_d\}=S }.\end{equation}

The quantity we wish to bound is the level-2 Fourier growth of $f$, i.e., $L_{1,2}(f)=\max_\alpha L_{1,2}^\alpha(f)$, where \begin{equation}\label{eq:proof_overview_2.5}L_{1,2}^\alpha(f)\triangleq \sum_{|S|=2}\alpha_S\cdot \widehat{f}(S)\end{equation}
for signs $\alpha_S\in[-1,1]$ for $S\subseteq [N]$ of size 2. Fix any such signs $\alpha$. Substituting the expression for Fourier coefficients $\hat{f}(S)$ (\Cref{eq:proof_overview_2}) in the expression for $L_{1,2}^\alpha(f)$ (\Cref{eq:proof_overview_2.5}), we see that our goal is to upper bound
\begin{align}\label{eq:proof_overview_3} L_{1,2}^\alpha(f)=\sum_{|S|=2}\alpha_S\cdot \frac{1}{N}\sum_{i_1,\ldots,i_d\in[N]}\pbra{\prod_{t\in[d]}V_t[i_t\gap i_{t+1}]} \cdot \indi\sbra{\{i_1\}\oplus \ldots\oplus \{i_d\}=S }.\end{align}  

\paragraph*{Decomposing $L_{1,2}^\alpha$ into a few terms.} First, we will group the terms in~\Cref{eq:proof_overview_3} into a few terms. We will express $L_{1,2}(f)^\alpha$ as a sum over pairs $(t_1,t_2)$ such that $t_1\neq t_2\in[d]$ of a quantity $\Delta_{t_1,t_2}^\alpha$. We describe this below.

Observe that for a term to contribute to~\Cref{eq:proof_overview_3}, the symmetric difference of $i_1,\ldots,i_d$ has size 2. In this case, there must exist a pair of indices $t_1<t_2\in[d]$ such that $i_{t_1}$ and $i_{t_2}$ are distinct and the symmetric difference of the rest of the $i_t$ is the empty set. More precisely, if $\{i_1\}\oplus \ldots\oplus \{i_d\}=S$ for a set $S$ of size 2, then
\[\exists t_1<t_2\in[d]\text{ such that }\{i_{t_1},i_{t_2}\}=S\text{ and }\oplus_{t\in[d]\setminus\{t_1,t_2\}}\{i_t\}=\emptyset.\] 
Conversely, any such $t_1,t_2\in[d]$ and $i_1,\ldots,i_d$ satisfying the above equation defines a unique $S=\{i_{t_1},i_{t_2}\}$.
For any pair of indices $t_1<t_2\in[d]$, let $\Delta_{t_1,t_2}^\alpha$ be the contribution of the corresponding terms to $L_{1,2}^\alpha(f)$, i.e.,
 \begin{align} \Delta_{t_1,t_2}^\alpha:=\frac{1}{N}\sum_{i_{t_1}\neq i_{t_2}\in[N]} \alpha_{\{i_{t_1},i_{t_2}\}}\cdot \sum_{\substack{i_{t_1+1},\ldots,i_{t_2-1}\in[N]\\ i_{t_2+1},\ldots,i_{t_1-1}\in[N]}}   \pbra{\prod_{t\in[d]}V_t[i_t\gap i_{t+1}]} \cdot \indi\sbra{\bigoplus_{t\in[d]\setminus\{t_1,t_2\}}\{i_t\}=\emptyset } . \label{eq:dqc1_delta}\end{align}
Ideally, we would like to say that  $L_{1,2}^\alpha(f)=\sum_{t_1<t_2\in[d]}\Delta^\alpha_{t_1,t_2}$ and to bound the latter quantity, we observe that there are $O(d^2)$ choices of $t_1< t_2\in[d]$ and for any such choice, we will show in the second step that $\Delta_{t_1,t_2}^\alpha\le 1$, obtaining $L_{1,2}^\alpha(f)\le O(d^2)$ as desired. Unfortunately, it is not true that $L_{1,2}^\alpha(f)=\sum_{t_1<t_2\in[d]}\Delta^\alpha_{t_1,t_2}$, as $t_1,t_2$ are not uniquely defined for a given set of indices $i_1,\ldots,i_d$.\footnote{We thank Francisco Escudero Gutierrez
and Miquel Saucedo Cuesta for
pointing this out.} In order to address this, we take two different approaches for $\DQC{1}$ and $\hBQP$ algorithms. In the former case, we let $t_1$ denote the first time an element of $S$ is seen, and let $t_2$ denote the next time an element of $S$ is seen. As a result, we will have to modify the definition of $\Delta_{t_1,t_2}^\alpha$ as follows.
\begin{align*} \Delta_{t_1,t_2}^\alpha&:=\frac{1}{N}\sum_{i_{t_1}\neq i_{t_2}\in[N]} \alpha_{\{i_{t_1},i_{t_2}\}}\cdot \sum_{\substack{i_{t_1+1},\ldots,i_{t_2-1}\in[N]\\ i_{t_2+1},\ldots,i_{t_1-1}\in[N]}}   \pbra{\prod_{t\in[d]}V_t[i_t\gap i_{t+1}]} \cdot \indi\sbra{\bigoplus_{t\in[d]\setminus\{t_1,t_2\}}\{i_t\}=\emptyset } \\
&\cdot \prod_{t\in[1,t_1)}\indi\sbra{i_t\neq i_{t_1}}\cdot \prod_{t\in[1,t_2)} \indi\sbra{i_t\neq i_{t_2}}.\end{align*}
For the $\hBQP$ bound on the other hand, we take an alternate approach where we don't uniquely identify $t_1,t_2$, instead, we indeed sum over all $\Delta_{t_1,t_2}$, each occurring with a coefficient that precisely cancels out to give an expression for the $L_{1,\ell}^{\alpha}(f)$. For the rest of the proof overview, we ignore these subtleties and imagine for now that $L_{1,2}^\alpha(f)=\sum_{t_1<t_2\in[d]}\Delta^{\alpha}_{t_1,t_2}$ for $\Delta^{\alpha}_{t_1,t_2}$ as in~\Cref{eq:dqc1_delta} and proceed.

\paragraph*{Showing that $\Delta_{t_1,t_2}^\alpha\le 1$.} This is where we will use the matrix decomposition lemma (\Cref{lem:main_lemma}). We will group the terms $t\in[d]$ into circular intervals $[t_1,t_2)$ and $[t_2,t_1)$\footnote{We arrange $1,\ldots,d$ in a clock-wise circle and define the intervals clock-wise. For instance, the interval $[d-2,2]$ refers to the set $\{d-2,d-1,d,1,2\}$. The intervals $(t_1,t_2)$ and $(t_2,t_1)$ are well-defined but would be empty if $t_2=t_1\pm 1$ modulo $d$. In each of these cases, it is understood that the summation over $i_{t_1+1},\ldots,i_{t_2-1}$ and $i_{t_2+1},\ldots,i_{t_1-1}$ respectively is to be ignored.}. We will apply the matrix decomposition lemma on $V_{t_1},\ldots,V_{t_2-1}$ to remember the symmetric difference of $\{i_t\}$ for $t\in(t_1,t_2)$ and similarly on the matrices $V_{t_2},\ldots,V_{t_1-1}$ to remember the symmetric difference of $\{i_t\}$ for $t\in(t_2,t_1)$ and then enforce equality between these sets. More precisely, apply~\Cref{lem:main_lemma} (with $T=\emptyset$) on the matrices $V_{t_1}^\rho,\ldots,V_{t_2-1}^\rho$ to obtain $\tilV_{[t_1,t_2)}$ and to $V_{t_2}^\rho,\ldots,V_{t_1-1}^\rho$ backwards to obtain $\tilV_{[t_2,t_1)}$ such that for all $i_{t_1},i_{t_2}\in[N],S_{t_2}\subseteq[N]$, 
\begin{equation}\label{eq:dqc1_7} \tilV_{[t_1,t_2)}[i_{t_1}\gap i_{t_2}S_{t_2}]=\sum_{i_t\in[N]\text{ for }t\in(t_1,t_2)}  \pbra{ \prod_{ t\in [t_1,t_2)}V_t[i_{t}\gap i_{t+1}]}\cdot \indi\sbra{\bigoplus_{t\in (t_1,t_2)} \{i_t\}=S_{t_2}}, \end{equation}
\begin{equation}\label{eq:dqc1_100} \tilV_{[t_2,t_1)} [i_{t_1}\gap i_{t_2}S_{t_2}]=\sum_{ i_t\in[N]\text{ for }t\in(t_2,t_1)}  \pbra{ \prod_{ t\in [t_2,t_1)}V_t[i_{t}\gap i_{t+1}]}\cdot \indi\sbra{\bigoplus_{t\in (t_2,t_1)}\{i_t\}=S_{t_2}}. \end{equation}
Substituting~\Cref{eq:dqc1_7,eq:dqc1_100}  in~\Cref{eq:dqc1_delta}, we see that
\begin{align*} \Delta_{t_1,t_2}^\alpha&\triangleq \frac{1}{N}\sum_{i_{t_1}\neq i_{t_2}\in[N]}\alpha_{\{i_{t_1},i_{t_2}\}} \sum_{S_{t_2}\subseteq[N]} \tilV_{[t_1,t_2)}[i_{t_1}\gap i_{t_2}S_{t_2}] \cdot \tilV_{[t_2,t_1)}[i_{t_1}\gap i_{t_2}S_{t_2}]   \\
&\le \frac{1}{N}\sum_{i_{t_1}\neq i_{t_2}\in[N]}\sum_{S_{t_2}\subseteq[N]} \abs{\tilV_{[t_1,t_2)}[i_{t_1}\gap i_{t_2}S_{t_2}]} \cdot \abs{\tilV_{[t_2,t_1)}[i_{t_1}\gap i_{t_2}S_{t_2}]}  \tag{since $\alpha_{\{i_{t_1},i_{t_2}\}}\in[-1,1]$} \\ 
&\le  \frac{1}{N}\cdot \vabs{\tilV_{[t_1,t_2)}}_\frob\cdot \vabs{\tilV_{[t_2,t_1)}}_\frob \tag{\Cref{fact:frob_product}}\end{align*}
Firstly, observe that \[\max\pbra{\|\tilV_{[t_1,t_2)}\|_\frob,\|\tilV_{[t_2,t_1)} \|_\frob} \le \sqrt{N}.\] 
This is because both matrices have operator norm at most one and either have at most $N$ rows or $N$ columns. This implies that $\Delta_{t_1,t_2}^\alpha \le N^{-1}\cdot N\le 1$. This completes the proof sketch.

\paragraph*{}
We now describe some of the additional ideas involved in generalizing this proof.

\paragraph*{Generalizing to higher levels.} We provide a general bootstrapping argument that proves higher level Fourier growth bounds assuming bounds for lower levels. This is applicable to any restriction-closed family of boolean functions and allows us to prove improved bounds on the Fourier growth of $\BQP$ algorithms, as well as higher-level and level-6 bounds for $\DQC{k}$ and $\hBQP$ algorithms respectively.

\paragraph*{$\hBQP$ algorithms.} It is not too hard to show that the expression for the acceptance probability of a $d$-query $\hBQP$ algorithm is quite similar to~\Cref{eq:proof_overview_1}, except, there are $2d+2$ matrices $V_1,\ldots,V_{2d+2}$, and more importantly, there is an extra term of the form $F_{i_1,i_{d+1}}\in\{0,1\}$ inside the summation, which corresponds to the post-processing of the measurement outcomes of the initial and final states. (See~\Cref{eq:hbqp_1} and~\Cref{claim:acceptance_probability_hbqp} for a formal expression.) This additional term $F_{i_1,i_{d+1}}$ is challenging to incorporate while keeping the norms bounded. As a result, proving bounds for $\hBQP$ algorithms turns out to be more technically involved. We need to use an improved matrix decomposition lemma (\Cref{lem:main_lemma_4}).

Furthermore, we are only able to prove level-3 and level-6 Fourier growth bounds for a particular family of signs as in~\Cref{def:alpha}. The reason why the signs $\alpha(\gamma)$ and $\beta(\gamma)$ in~\Cref{def:alpha} are easier to deal with than general signs, is that once we fix $i_2$, $\alpha(\gamma)_{i_1,i_2,i_3}$ becomes a product of three terms, the first depending only on $i_1$, the second on $i_3$ and the third on $\gamma$ in a product fashion. Similarly, once we fix $i_2,i_5$, then $\beta(\gamma)_{i_1,\ldots,i_6}$ becomes a product of five terms, the first depending only on $i_1$, the second on $i_4$, the third on $i_3$, the fourth on $i_6$, and the fifth on $\gamma$ in a product fashion. These kinds of signs that are products across the indices are much easier to handle than general families of signs and often exhibit a Fourier growth that is much smaller than the Fourier growth for arbitrary signs\footnote{Indeed, for general bounded degree-$d$ polynomials, the level-$\ell$ Fourier growth with arbitrary signs can be as large as $N^{\Omega(\ell)}$, whereas for signs that are a product across the indices, the Fourier growth is at most $d^{O(\ell)}$~\cite{IRR+21}.}. We then show that summing over the $i_2$, or over the $i_2,i_5$ doesn't blow up the Fourier growth by much. (See~\Cref{sec:hbqp} for more details.)

\subsection{Outlook \& Future Directions}
\label{sec:future}

Broadly, our results suggest that Fourier growth provides a powerful analytic lens to separate models of quantum computation. Several natural next steps emerge in this direction and we highlight some open questions in this section.

\begin{enumerate} 
\item \textbf{Fourier Growth of $\NISQ$.}
Researchers have attempted to model $\NISQ$ (noisy intermediate scale quantum) algorithms through the lens of query complexity, in the hopes of understanding the computational power of near-term quantum devices~\cite{NISQ,chia2024}. 
There has been recent interest in using \forrtwo~to show quantum advantages in near-term experiments~\cite{Geo25,Shu25} and this prompts the natural question, can we solve \forrtwo~in $\NISQ$? If not, can we prove bounds on the Fourier growth of $\NISQ$? 

\item \textbf{The Power of $\DQC{1}$.} Where does $\DQC{1}$ fit within the landscape of classical complexity, and in particular, is it contained in $\PH$? 
The differences between the Fourier growth of $\DQC{1}$ and $\PH$ are quite stark, 
but it is not clear how to leverage this into an oracle separation, as existing approaches rely on the Forrelation problem, which is hard for $\DQC{1}$. Developing new techniques here would not only clarify the power of $\DQC{1}$, but also expand the toolkit for proving lower bounds on classical computation. 

\item \textbf{The Power of $\IQP$.} Another intriguing intermediate model is $\IQP$, whose power derives from its ability to perform Fourier sampling. How does this model compare to $\DQC{1}$ and $\hBQP$? Understanding the relationship between these models would help chart the intermediate landscape between $\BPP$ and $\BQP$ and reveal the relative power of various quantum capabilities like Fourier sampling and trace estimation. It was shown by~\cite{JM24} that $\IQP$ can be simulated by $\hBQP$ and they conjectured that this containment is strict. Is \forrtwo~solvable in $\IQP$~\footnote{This question has since been resolved affirmatively by~\cite{IQPinFORR}.} and if not, can we prove Fourier growth bounds?
 
\item \textbf{Tight Bounds on the Fourier Growth of Quantum Algorithms.}
Finally, many of our upper bounds on the Fourier growth are not known to be tight. Are the dependencies on $d$ and $\ell$ tight in~\Cref{thm:main_theorem_dqck,thm:main_theorem_hbqp,thm:main_theorem_bqp}? What is the Fourier growth of $\hBQP$ with respect to arbitrary families of signs? Tight bounds on Fourier growth could provide a precise handle for quantum computational power, and help map the landscape between classical, intermediate, and fully quantum models.
\end{enumerate}

\subsection{Organization.}

\Cref{sec:prelims} consists of preliminaries, where we formally describe the various models of computation and state the results we need from prior works on Forrelation. In~\Cref{sec:decomposition}, we describe and prove some of the basic matrix decomposition lemmas (\Cref{lem:main_lemma,lem:main_lemma_2}). We prove our Fourier growth bounds for $\DQC{k}$ in~\Cref{sec:proof_dqck} (proof of \Cref{thm:main_theorem_dqck}), $\hBQP$ in~\Cref{sec:hbqp} (proof of \Cref{thm:main_theorem_hbqp}) and $\BQP$ in~\Cref{sec:bootstrapping} (proof of \Cref{thm:main_theorem_bqp}).


\section{Preliminaries \& Notation}

\paragraph*{Restrictions.} For a restriction $\rho\in\{-1,1,\star\}^N$ and a vector $x\in\{-1,1\}^N$, the $i$-th coordinate of the restricted vector $\rho(x)\in\{-1,1\}^N$ is $\rho_i$ if $\rho_i\in\{-1,1\}$ and $x_i$ if $\rho_i=\star$ for $i\in[N]$. For a boolean function $f:\{-1,1\}^N\to \mathbb{R}$, and a restriction $\rho\in\{-1,1,\star\}^N,$ we use $f|_\rho$ to denote the restricted function which maps $x$ to $f(\rho(x))$ for $x\in\{-1,1\}^N$.

\label{sec:prelims}
\paragraph*{Sets.} For $x\in \R^N$ and $S\subseteq[N]$, we use $\chi_S(x)$ or $x_S$ to denote $\prod_{i\in S}x_i$. For indices $i_1,\ldots,i_k\in [N]$, we use $\{i_1\}\oplus\ldots\oplus \{i_k\}$ to denote the symmetric difference $\oplus_{t\in [k]}\{i_t\}$ and similarly $S_1\oplus S_2$ denotes the symmetric difference of the sets $S_1$ and $S_2$.

We will often use uppercase letters to denote $2$ to the power of lowercase letters, in particular, $N=2^n,W=2^w,K=2^k$ and $M=2^m$.

\paragraph*{Circular Intervals.} For $i,j\in[n]$, we use $[i,j]$ to denote the clockwise sequence of points from $i$ to $j$ when $1,\ldots,n$ are arranged clock-wise in a circle. For example, $[n,2]=\{n,1,2\}$ and $[1,3]=\{1,2,3\}$. We use $(,]$ and $[,)$ and $(,)$ to denote half-open or open intervals.

 \paragraph*{Vectors and Inner Products.} We identify the space $\{0,1\}^n$ with $[N]$ under the natural correspondence $(a_1,\ldots,a_n)\to 1+\sum_i a_i2^{i-1}$. We also identify $\{0,1\}^n$ with $\{-1,1\}^n$ under the correspondence that maps $0$ to $1$ and $1$ to $-1$. For $u,v\in[N]$, we use $\langle u,v\rangle_2:=\sum_{i\in[n]} u_i v_i \mod 2$ to denote the inner product over $\F_2$ under the aforementioned correspondence. For $u\in\C^N$ and $U\in \C^{N\times N}$, we use $u^\dagger, U^\dagger$ to denote the conjugate-transpose. For complex vectors $u,v\in\C^N$, we use $\langle u\mid v \rangle$, $v^\dagger u$, and $\langle u, v \rangle$ to denote $\sum_i u_i\overline{v}_i$, the complex inner product.

\paragraph*{Matrices.} We use $\bI$ to denote the identity matrix, where the dimensions are clear from context. 
We will often encounter matrices whose rows and columns are indexed by $(i,w)$ for $i\in[N],w\in[W]$, or by $(i,w,k)$ for $i\in[N],w\in[W],k\in[K]$. For ease of notation, we use $\iw{}$ as a shorthand for $(i,w)$ or $(i,w,k)$, where the distinction will be clear from the context. For $\iw{t},\iw{t+1}\in [M]$, we use either $U_t[\iw{t}\gap \iw{t+1}]$ or  $U_t[\iw{t}, \iw{t+1}]$ to denote the $(\iw{t},\iw{t+1})$-the entry of $U_t$. For matrices $U_1,\ldots,U_d$, we use $U_{[t_1,t_2]}$ to denote the product $\prod_{t\in[t_1,t_2]}U_t=U_{t_1}\cdots U_{t_2}$ of the matrices in the circular interval $[t_1,t_2]$ in clockwise order. We define $U_{[t_1,t_2)},U_{(t_1,t_2]},U_{(t_1,t_2)}$ analogously.

\begin{definition}[Hadamard Matrix] \label{def:hadamard}
For $N=2^n$, the Hadamard matrix $H_N$ is defined to be
\[ H_N=\frac{1}{\sqrt{N}} \begin{bmatrix}1 & 1 \\ 1  & -1\end{bmatrix}^{\otimes n}.\]
\end{definition}

\paragraph*{Matrix Norms \& Inequalities.} Let $\|\cdot \|_\op$ and $\|\cdot \|_\frob$ denote the spectral and Frobenius norm, or equivalently, the Schatten-$\infty$ and Schatten-2 norms. The following basic fact follows from Holder's Inequality for Schatten norms.

\begin{fact} \label{fact:frob_op}
Let $A,B,C$ be rectangular matrices with $A=BC$. Then, 
\[\|A\|_\frob \le \min(\|B\|_\op\cdot \|C\|_\frob,\|B\|_\frob\cdot \|C\|_\op)\]  \end{fact}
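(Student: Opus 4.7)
The plan is to establish the Frobenius-norm inequality $\|A\|_\frob \le \|B\|_\op \cdot \|C\|_\frob$ by a column-by-column argument, which is the most elementary path and avoids invoking the full Schatten-Hölder machinery. Concretely, I would write $A$ and $C$ in terms of their columns, $A = [a_1 \mid \cdots \mid a_k]$ and $C = [c_1 \mid \cdots \mid c_k]$. Since $A = BC$, we have $a_j = B c_j$ for every column index $j$. The definition of the operator norm gives $\|a_j\|_2 = \|B c_j\|_2 \le \|B\|_\op \cdot \|c_j\|_2$.

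Next, I would recall that the Frobenius norm can be written as the sum of squared column $\ell_2$-norms: $\|A\|_\frob^2 = \sum_j \|a_j\|_2^2$, and similarly for $C$. Squaring the bound from the previous step and summing across columns yields
\[ \|A\|_\frob^2 = \sum_j \|a_j\|_2^2 \le \|B\|_\op^2 \sum_j \|c_j\|_2^2 = \|B\|_\op^2 \cdot \|C\|_\frob^2, \]
and taking square roots gives the claimed inequality.

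As a sanity check, I would note that this is precisely the Schatten $\infty$-$2$-$2$ Hölder inequality that the text alludes to: the singular values of $BC$ are bounded, termwise after ordering, by $\sigma_1(B)$ times the singular values of $C$, so $\sum_i \sigma_i(BC)^2 \le \sigma_1(B)^2 \sum_i \sigma_i(C)^2$. There is no meaningful obstacle here; the only thing to be careful about is that $B$ and $C$ are rectangular, so one should verify that the column decomposition and the operator-norm bound $\|Bv\|_2 \le \|B\|_\op \|v\|_2$ both remain valid in the rectangular setting, which they do since $\|B\|_\op$ is defined as $\sup_{v \neq 0} \|Bv\|_2 / \|v\|_2$ regardless of the shape of $B$. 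A symmetric argument applied to rows (using $A^\dagger = C^\dagger B^\dagger$) also yields the dual bound $\|A\|_\frob \le \|B\|_\frob \cdot \|C\|_\op$, which may be useful elsewhere in the paper.
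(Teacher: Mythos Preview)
Your proof is correct. The paper does not actually give a proof of this fact; it simply states that it follows from H\"older's inequality for Schatten norms, which is exactly the connection you spell out in your sanity-check paragraph. Your column-by-column argument is a clean, self-contained derivation of the same bound and requires nothing beyond the definition of the operator norm, so it is if anything more elementary than what the paper invokes.
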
 
\begin{proof}
 $\|A\|_\frob^2\triangleq \Tr\pbra{ AA^\dagger}=\Tr(BC\cdot C^\dagger B^\dagger)=\Tr(B^\dagger B\cdot CC^\dagger)\le \|C\|_\op^2 \cdot \Tr(B^\dagger B\cdot \id)=\|C\|_\op^2\cdot \|B\|_\frob^2$. Applying the same argument on $A^{\dagger}$ gives the other inequality.
\end{proof}


The Cauchy-Schwarz inequality implies the following fact.
\begin{fact} \label{fact:frob_product}
For rectangular matrices $A,B$, and any subset $T$ of indices, we have 
\[ \sum_{(i,j)\in T}|A[i\gap j]|\cdot |B[i\gap j]| \le \|A\|_\frob\cdot \|B\|_\frob.\]  \end{fact}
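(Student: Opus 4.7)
The plan is to deduce this directly from the Cauchy--Schwarz inequality applied to the sequences of entry-magnitudes of $A$ and $B$ indexed by the pairs in $T$. Concretely, I would treat $\{|A[i\gap j]|\}_{(i,j)\in T}$ and $\{|B[i\gap j]|\}_{(i,j)\in T}$ as vectors in $\R^{|T|}$ and apply Cauchy--Schwarz to their coordinate-wise product.

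First, by Cauchy--Schwarz,
\[
\sum_{(i,j)\in T}|A[i\gap j]|\cdot |B[i\gap j]|
\;\le\;
\Bigl(\sum_{(i,j)\in T}|A[i\gap j]|^2\Bigr)^{1/2}
\Bigl(\sum_{(i,j)\in T}|B[i\gap j]|^2\Bigr)^{1/2}.
\]
Next, since the set $T$ is a subset of all valid index pairs and every term in the sums is nonnegative, each factor is bounded by the corresponding full sum of squared entries:
\[
\sum_{(i,j)\in T}|A[i\gap j]|^2 \;\le\; \sum_{i,j}|A[i\gap j]|^2 \;=\; \|A\|_\frob^2,
\]
and similarly for $B$. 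Combining these two observations and taking square roots yields the desired bound $\|A\|_\frob\cdot \|B\|_\frob$.

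There is no real obstacle here, since the statement is just the coordinate-wise Cauchy--Schwarz inequality together with the definition of the Frobenius norm as the $\ell_2$-norm of the entries; the only mild subtlety is the (trivial) observation that restricting to an index subset $T$ can only decrease the sum of squares, which allows the bound to be stated in terms of $\|A\|_\frob$ and $\|B\|_\frob$ rather than truncated norms.
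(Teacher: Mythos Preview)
Your proof is correct and matches the paper's approach: the paper simply states that the fact follows from the Cauchy--Schwarz inequality, and your argument is exactly the standard unpacking of that claim.
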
 




\subsection{Fourier Growth}
Recall the definition of the Fourier growth as in~\Cref{def:signed_fourier_growth} and~\Cref{def:fourier_growth}. For a family of functions $\cF$, we use $L_{1,\ell}(\cF)$ to denote $\max_{f\in\cF}L_{1,\ell}(f)$.

\paragraph*{Lower Bounds for Forrelation from Fourier Growth.}

The results of~\cite{RT22,CHLT19} imply that to show lower bounds on the \forrtwo~problem, it suffices to prove Fourier growth bounds for level 2. 
\begin{theorem}[\cite{RT22,CHLT19}]
     Let $\cF$ be any family of $2N$-variate boolean functions closed under restrictions. Then, the maximum advantage with which $\cF$ solves \forrtwo~is at most
\[ O\pbra{\frac{L_{1,2}(\cF)}{\sqrt{N}}}. \]\label{thm:rt22}
\end{theorem}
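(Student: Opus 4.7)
The plan is to follow the pseudorandomness framework of Raz--Tal, as sharpened by Chattopadhyay--Hatami--Lovett--Tal (CHLT). The starting point is a hard distribution $\mu$ on $\bin^{2N}$ for \forrtwo: sample a truncated Gaussian $G=(G^{(1)},G^{(2)})\in\R^{2N}$ with cross-block covariance $\Cov(G^{(1)}_i, G^{(2)}_j)=c\cdot H_N[i,j]$ and zero covariance within each block, then set $x:=\sign(G)$. A direct computation gives $\E_\mu[\forr{2}(x)]=\Theta(c)$ with concentration, so that for $c=\Theta(1/\log^2 N)$ one has $\forr{2}(x)\ge 2\varepsilon$ with constant probability under $\mu$, while $\forr{2}(x)\le \varepsilon$ with high probability under the uniform distribution. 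Consequently, the advantage with which any $f\in\cF$ solves \forrtwo~is bounded, up to negligible truncation error, by $O\pbra{|\E_\mu[f]-\E_{\unif}[f]|}$.

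I would then Fourier-expand the discrepancy as
\[ \E_\mu[f]-\E_{\unif}[f] \;=\; \sum_{\ell\ge 1}\;\sum_{|S|=\ell}\widehat{f}(S)\cdot \E_\mu[\chi_S(x)]. \]
Since $G$ has pairwise correlations of magnitude $O(1/\sqrt{N})$ and no correlations within a block, Hermite/Wick-type moment bounds for Gaussian signs give $|\E_\mu[\chi_S(x)]|\le \tilO\pbra{1/\sqrt{N}}^{\lceil \ell/2\rceil}$ for $|S|=\ell$, with the odd-$\ell$ contributions vanishing by the symmetry of the construction. The level-$2$ term therefore contributes at most $L_{1,2}(f)\cdot O(1/\sqrt{N})$, which is already the target bound.

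The main obstacle is that the higher even-level contributions $\ell\ge 4$ are not directly controlled by the hypothesis, which bounds only $L_{1,2}(\cF)$. This is exactly where closure of $\cF$ under restrictions is indispensable. Following CHLT, I would average the expansion above over a random $p$-biased restriction $\rho$, where each of the $2N$ coordinates is kept free with probability $p$ and fixed to a uniformly random sign otherwise. Averaging scales the contribution of a set $S$ by $p^{|S|}$ while simultaneously boosting the effective per-pair correlation of the hard distribution on the free coordinates to $O(1/(p\sqrt{N}))$; choosing $p$ small enough makes each level $\ell\ge 4$ reduce to a level-$2$ estimate on $f|_\rho$. Because $f|_\rho\in\cF$ by hypothesis, each such term is at most $L_{1,2}(\cF)\cdot O(1/\sqrt{N})$, and a geometric series in $\ell$ collapses to the claimed bound. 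The delicate point I expect to spend the most care on is tracking the polylog factors in the restriction/Hermite exchange so that they cancel rather than accumulate, which is the technical heart of the CHLT argument.
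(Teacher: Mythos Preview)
The paper does not prove this theorem; it is quoted as a black box from \cite{RT22,CHLT19}, so there is no in-paper proof to compare against.

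Your outline correctly identifies the main ingredients: the truncated-Gaussian hard distribution with cross-block covariance $\Theta(1/\sqrt{N})\cdot H_N$, the Fourier expansion of the discrepancy, and the Wick-type moment bounds $|\E_\mu[\chi_S]|\le \tilO(1/\sqrt{N})^{\lceil |S|/2\rceil}$. These are exactly the Raz--Tal components, and the level-$2$ contribution is indeed $O(L_{1,2}(f)/\sqrt{N})$.

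Your description of how closure under restrictions is used, however, does not match the cited proofs and, as written, does not work. A single $p$-biased restriction followed by sampling $\mu$ on the surviving coordinates does \emph{not} boost the per-pair correlation to $O(1/(p\sqrt{N}))$: the pairwise covariances of $\mu$ on the free coordinates are unchanged. The $p^{|S|}$ damping you describe hits every level uniformly, so it cannot by itself collapse levels $\ell\ge 4$ down to level $2$. The actual mechanism in \cite{RT22,CHLT19} is a telescoping over $T\to\infty$ small Gaussian increments: one writes $\E_\mu[f]-\E_{\unif}[f]$ as a sum of $T$ one-step differences along a discretized Brownian path, Taylor-expands each increment around the current point $z\in[-1,1]^{2N}$, and observes that the quadratic term is exactly a signed level-$2$ Fourier sum of the multilinear extension of $f$ centered at $z$. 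Since evaluating the multilinear extension at $z$ is an average of Boolean restrictions of $f$, closure under restrictions gives $L_{1,2}(\cF)$ control at every step. The order-$\ell$ term contributes $O\pbra{(\|\Sigma\|_\op/T)^{\ell/2}}$ per step, hence $O(T^{1-\ell/2})\to 0$ after summing over $T$ steps for $\ell\ge 4$. Your geometric-series endpoint is the right one; the route there is the multi-step interpolation, not a single global $p$-restriction.
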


The results of~\cite{BS21} imply that to show lower bounds on the \forrthree~problem, it suffices to prove signed-Fourier growth bounds for level 3 and 6, for the following family of signs. 
 
\medskip
\noindent

\begin{definition}
Partition $[3N]$ into $A:=[N],B:=(N,2N],C:=(2N,3N]$ and fix any ordering $<$ of the elements in $A,B,C$. There is a natural correspondence $B\leftrightarrow[N]$ given by $b\leftrightarrow b-N$ for all $b\in B$ and a similar correspondence $C\leftrightarrow[N]$ given by $c\leftrightarrow c-2N$ for all $c\in C$. Let $\gamma\in[-1,1]^{3N},\gamma'\in[-1,1]^{3N}$. Define $\alpha(\gamma)\in[-1,1]^{\binom{3N}{3}}$ and  $\beta(\gamma,\gamma')\in[-1,1]^{\binom{3N}{6}}$ as follows. Let $\tilH\in\{-1,1\}^{N\times N}$ be the matrix whose $(i,j)$-th entry is $(-1)^{\abra{i,j}_2}=\mathrm{sign}(H_N[i\gap j])$ for $i,j\in[N]$.
 For $i_1,i_2,i_3\in[3N]$, let
\[ \alpha(\gamma)_{i_1,i_2,i_3}:=\begin{cases} \tilH(i_2, i_1)\cdot \tilH(i_2, i_3)\cdot\pbra{ \prod_{t\in[3]}\gamma_{i_t}}  &\text{if } i_1\in A,i_2\in B,i_3\in C \\  
0 &\text{otherwise.}
\end{cases} \]
For $i_1,\ldots,i_6\in[3N]$, let
\[ \beta(\gamma,\gamma')_{i_1,\ldots,i_6}:=\begin{cases} \alpha(\gamma)_{i_1,i_2,i_3}  \cdot   \alpha(\gamma')_{i_4,i_5,i_6}  &\text{if } i_1<i_4\in A,i_2< i_5\in B,i_3< i_6\in C \\  
0 &\text{otherwise.}
\end{cases}\]
\label{def:alpha}
\end{definition} 

The following theorem is implicit in~\cite{BS21}.\footnote{In particular, see equation (5.7) and the equation above in~\cite{BS21} for the level-3 contribution and equation (5.13) and the preceding paragraph for the level-6 contribution.}

\begin{theorem}[Implicit in~\cite{BS21}] Let $\cF$ be any family of $3N$-variate boolean functions that is closed under restrictions. Let $\gamma,\gamma'\in[-1,1]^{3N}$ and $\alpha(\gamma)\in[-1,1]^{\binom{3N}{3}},\beta(\gamma,\gamma')\in[-1,1]^{\binom{3N}{6}}$ be as in~\Cref{def:alpha}. Then, the maximum advantage with which $\cF$ solves \forrthree~is at most
\[ \max_{\gamma,\gamma'\in[-1,1]^{3N}}O\pbra{\frac{L_{1,3}^{\alpha(\gamma)}(\cF)}{N} + \frac{L_{1,6}^{\beta(\gamma,\gamma')}(\cF)}{N^2}}. \]\label{thm:bs21}
\end{theorem}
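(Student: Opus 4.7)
The plan is to trace through the Gaussian-distinguisher strategy of~\cite{BS21}, specialized to $k=3$, and carefully unpack which signed Fourier coefficients survive. I would begin by replacing the Boolean inputs with truncated Gaussians: sample $z = (z^{(1)}, z^{(2)}, z^{(3)}) \in \R^{3N}$ according to one of two Gaussian distributions whose covariances encode whether $\forr{3}(x)$ is close to $2\varepsilon$ or $\varepsilon$, and set $x = \sign(z)$. In the YES covariance, the off-diagonal structure between blocks $A,B,C$ is built from the Hadamard tensor factors appearing in the definition of $\forr{3}$; in the NO covariance the three blocks are independent. Using the standard truncation of Gaussians to $[-1,1]$ with coordinate-wise weights $\gamma_i$, the advantage of any family $\cF$ closed under restrictions reduces to bounding $|\mathbb{E}[f(x_Y)] - \mathbb{E}[f(x_N)]|$, with truncation errors absorbed into the $O(\cdot)$ and the $\gamma$ becoming the free parameter in the statement.

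Next, I would expand this expectation difference using the Fourier expansion of $f$ together with Hermite expansion in the Gaussians. For each $S \subseteq [3N]$, the contribution of $\widehat{f}(S)$ is a tensor of correlations drawn from the YES vs. NO covariance difference. Because the YES covariance is block-off-diagonal of the form $H_N \otimes H_N$, a term $\widehat{f}(S)$ survives the difference only when $S$ selects an equal number of indices from each of the three blocks $A,B,C$. The smallest contributing sizes are $|S|=3$ (one index per block) and $|S|=6$ (two indices per block), yielding the two terms in the bound. Higher equal-split levels ($9, 12, \dots$) would appear in principle but are killed by the Aaronson--Ambainis truncation argument for the relevant parameter regime $\varepsilon = \Theta(1/\log^3 N)$, which is precisely why \forrthree{} only requires level-$3$ and level-$6$ bounds.

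Finally, I would identify the explicit signs produced by the Gaussian expansion. At level $3$, a surviving triple $(i_1, i_2, i_3) \in A \times B \times C$ acquires a coefficient that factors into three contributions: one Hadamard entry $\tilH(i_2, i_1)$ from the $H_N$ between the first and second blocks of $\forr{3}$, a second Hadamard entry $\tilH(i_2, i_3)$ from the $H_N$ between the second and third blocks, and a product of truncation weights $\prod_{t \in [3]} \gamma_{i_t}$. This is precisely $\alpha(\gamma)_{i_1,i_2,i_3}$ from Definition~\ref{def:alpha}, and the corresponding Hadamard normalization $N^{-1/2} \cdot N^{-1/2}$ accounts for the $1/N$ prefactor. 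At level $6$, the two-per-block structure implies that the coefficient tensor factors as a product of two independent level-$3$ contributions on disjoint index sets, producing $\beta(\gamma)$ and an additional $1/N$ factor, hence $1/N^2$.

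The main obstacle will be the bookkeeping: verifying that the Gaussian expansion truly vanishes off the equal-split levels, controlling the truncation error so that higher equal-split levels are negligible in the $\varepsilon = \Theta(1/\log^3 N)$ regime, and carefully tracking the Hadamard signs through the tensor product so that the surviving coefficients match the explicit form of $\alpha(\gamma), \beta(\gamma)$ in Definition~\ref{def:alpha} rather than merely some abstract product. Once the signs are correctly identified, the final inequality follows by taking absolute values inside the sum and applying the definition of signed Fourier growth $L_{1,\ell}^{\cdot}(\cF)$.
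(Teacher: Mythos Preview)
The paper does not give its own proof of this theorem; it is stated as implicit in~\cite{BS21}, with a footnote pointing to specific equations there. So the comparison is against the argument in~\cite{BS21}, not against anything in the present paper.

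Your identification of the signs is correct: the level-$3$ contribution is indeed a sum over $(i_1,i_2,i_3)\in A\times B\times C$ weighted by $\tilH(i_2,i_1)\tilH(i_2,i_3)\prod_t\gamma_{i_t}$ with a $1/N$ normalization, and the level-$6$ contribution factors as a product of two such triples with a $1/N^2$ normalization. That part of your sketch matches~\Cref{def:alpha} exactly.

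The gap is in your explanation of why only levels $3$ and $6$ appear. You attribute this to the parameter regime $\varepsilon=\Theta(1/\log^3 N)$ together with an ``Aaronson--Ambainis truncation'' making levels $9,12,\ldots$ negligible. This is not how~\cite{BS21} proceeds, and a direct Hermite expansion of $\E[f(x_Y)]-\E[f(x_N)]$ would produce contributions at \emph{every} level $3m$, each of the rough form $\varepsilon^m\cdot L_{1,3m}^{(\text{signs})}(\cF)/N^m$. For an arbitrary bounded $f$ one only has $L_{1,3m}\le N^{3m/2}$, so these terms do not decay in $m$ for $\varepsilon=1/\mathrm{polylog}(N)$, and there is no way to bound them using only $L_{1,3}$ and $L_{1,6}$. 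The actual mechanism in~\cite{BS21} is structural: they interpolate along a stochastic path and analyze the increment via a recursive argument in which closure under restrictions is invoked at each step. For $k$-\textsc{Forrelation} this recursion has depth $k-1$, so for $k=3$ it terminates after two steps, yielding exactly the level-$3$ and level-$6$ terms and nothing higher. Your plan needs to incorporate this recursive restriction step (or an equivalent martingale / It\^o-type argument) rather than a one-shot Hermite expansion; otherwise the higher levels cannot be eliminated and the stated bound does not follow.
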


\subsection{Quantum Query Complexity}

In the setting of quantum query complexity, the input is accessed by an oracle. This oracle is typically an operator $\widetilde{O}_x$ for $x\in\{0,1\}^N$ which maps $\ket{b}\ket{i}\to \ket{b\oplus x_i}\ket{i}$ for $b\in\{0,1\},i\in[N]$.
One can alternatively define an oracle $O_x$ for $x\in\{-1,1\}^N$ which maps $\ket{b}\ket{i}$ to itself if $b=0$ and to $\ket{b}\ket{i}x_i$ if $b=1$ and $i\in[N]$. It is not too difficult to show that these two definitions are equivalent, up to a Hadamard gate on the first qubit. We will work with the oracle $O_x$ and later introduce some additional simplifications.

The most general model of a quantum query algorithm is the $\BQP$ model defined below. For the following definition, we interpret $n$ as the number of qubits on which the oracle acts and $w$ as the number of qubits of extra workspace. As mentioned before, we use $\bI$ to denote the identity matrix, where the dimension is implicit.

\begin{definition}[$\BQP$ Algorithm with $d$ Queries]\label{def:bqp_algorithm}
    Let $n,w\in \N$, $N=2^n,W=2^w$ and $M=N W$. A $\BQP$ algorithm acts on $n+w$ qubits initialized to $\ket{0,\ldots,0}$. Let $U_0,U_1,\ldots,U_{d}\in \C^{M\times M}$ be $M\times M$ unitary matrices. The algorithm applies the unitary operators $U_0,\ldots,U_{d}$ interleaved with the oracle $O_x\otimes \bI$ and measures all the qubits at the end to obtain an outcome $\iw{d+1}$. The algorithm accepts iff $\iw{d+1}\in \cF$ where $\final \subseteq [M]$ is a subset. (See~\Cref{fig:bqp} for a depiction.)
\end{definition}

The following claim expresses the acceptance probability of a $d$-query $\BQP$ algorithm and is not too difficult to prove. 
\begin{claim} The acceptance probability of a $d$-query $\BQP$ algorithm can be expressed as 
\[  f(x):= \abra{ v\gap   O\cdot V_1\cdots   V_{2d-1}\cdot  O\gap v}\]
where $O=O_x\otimes \bI$, $V_1,\ldots,V_{2d-1}\in \mathbb{C}^{M\times M}$ are matrices with $\|V_t\|_\op \le 1$ for all $t\in[2d-1]$ and $v\in \C^M$ is a unit vector.\label{claim:acceptance_probability_bqp}
\end{claim}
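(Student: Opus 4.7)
The plan is to start from the explicit circuit description of a $d$-query $\BQP$ algorithm and rewrite the acceptance probability as an expectation of a projector, then fold the resulting expression into a symmetric "forward-then-backward" product of $2d+1$ bounded operators with $2d$ copies of $O$ interleaved. Concretely, let $U_1, \dots, U_{d+1}$ and $\final \subseteq [M]$ be the unitaries and accepting set in \Cref{def:bqp_algorithm}, write $O = O_x \otimes \bI$, and let $P_\final := \sum_{j \in \final} \ketbra{j}{j}$ denote the projector onto accepting basis states. The state just before measurement is $\ket{\psi_x} := U_{d+1} \, O \, U_d \, O \cdots O \, U_1 \ket{I_1}$, so the acceptance probability equals $\bra{\psi_x} P_\final \ket{\psi_x}$, which expands into
\[
\bra{I_1}\, U_1^\dagger O U_2^\dagger O \cdots O U_{d+1}^\dagger \, P_\final \, U_{d+1} O U_d O \cdots O U_1 \,\ket{I_1}.
\]
Both sides of the middle have exactly $d$ oracle insertions, for a total of $2d$ copies of $O$, which is exactly what the statement requires.

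The next step is to identify the $V_t$'s. The key observation is that in the expression above, the operator $P_\final$ appears sandwiched directly between $U_{d+1}^\dagger$ and $U_{d+1}$ with no oracle in between, so we can absorb all three into a single middle matrix. I would define
\[
V_t := U_t^\dagger \ \text{for } 1 \le t \le d, \qquad
V_{d+1} := U_{d+1}^\dagger P_\final U_{d+1}, \qquad
V_{d+1+t} := U_{d+1-t} \ \text{for } 1 \le t \le d,
\]
so that $V_{d+2}, \dots, V_{2d+1}$ are $U_d, U_{d-1}, \dots, U_1$. By construction this gives exactly $2d+1$ matrices, and substituting into $\bra{I_1} V_1 O V_2 O \cdots O V_{2d+1} \ket{I_1}$ reproduces the displayed expansion of $\bra{\psi_x} P_\final \ket{\psi_x}$.

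It remains to verify the norm bound $\|V_t\|_\op \le 1$. For $t \ne d+1$ this is immediate since the $V_t$ are unitaries (or adjoints of unitaries). For $t = d+1$, we use submultiplicativity of the spectral norm together with $\|U_{d+1}\|_\op = \|U_{d+1}^\dagger\|_\op = 1$ and $\|P_\final\|_\op \le 1$ (since $P_\final$ is an orthogonal projector), giving $\|V_{d+1}\|_\op \le 1$. None of these steps look like a real obstacle—this claim is essentially bookkeeping rather than a genuine lemma—so the only thing to be a bit careful about is matching up indices correctly so that the $O$'s line up in the correct positions on both sides of $P_\final$, and so that the total count of matrices is $2d+1$ rather than $2d+2$ or $2d$. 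Once the definitions above are written down, the identity follows by direct substitution.
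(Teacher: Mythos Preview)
Your proof is correct and essentially identical to the paper's: both expand the acceptance probability as $\bra{I_1}U_1^\dagger O\cdots U_{d+1}^\dagger\,\Pi_{\final}\,U_{d+1}\cdots O\,U_1\ket{I_1}$, set $V_t=U_t^\dagger$ for $t\le d$, $V_{d+1}=U_{d+1}^\dagger\Pi_{\final}U_{d+1}$, and $V_{2d+2-t}=U_t$ for $t\le d$, then check the operator-norm bounds by submultiplicativity and the fact that $\Pi_{\final}$ is a projector.
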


\begin{figure}
\centering
\mbox{ 
\Qcircuit @C=1em @R=.7em {
\lstick{} & \ket{0} &  &  \qw & \multigate{3}{U_0}  & \multigate{1}{O_x} & \multigate{3}{U_1}   & \multigate{1}{O_x}  &    \qw &  \cds{1}{\cdots\cdots}  & \multigate{3}{U_{d-1}} & \multigate{1}{O_x} & \multigate{3}{U_{d}} &\meter \\
\lstick{} & \ket{0} &  &  \qw  & \ghost{U_1}  & \ghost{O_x} & \ghost{U_2}   & \ghost{O_x} & \qw & \cds{1}{\cdots\cdots} & \ghost{U_{d-1}} & \ghost{O_x} &\ghost{U_{d}} &\meter  \inputgroupv{1}{2}{.8em}{1.8em}{n\hspace{2em}} \\ 
\lstick{} & \ket{0} &  &  \qw & \ghost{U_1} &\qw  &  \ghost{U_2}  & \qw &\qw & \qw & \ghost{U_{d-1}} &\qw  &\ghost{U_{d}} &  \meter    \\ 
\lstick{} & \ket{0} &  &  \qw & \ghost{U_1} &\qw  &  \ghost{U_2}  & \qw &\qw & \qw & \ghost{U_{d-1}} &\qw  &\ghost{U_{d}} &  \meter  \inputgroupv{3}{4}{.8em}{1.8em}{w\hspace{2em}}  \\ 
}
}
\caption{A $d$-query $\BQP$ algorithm. }
\label{fig:bqp}
\end{figure}

In the following sections, we will define $\DQC{k}$ and $\hBQP$ algorithms.

\subsection{$\DQC{k}$ algorithms} 

We interpret $n$ as the number of qubits on which the oracle acts, $k$ as the number of clean qubits, and $w$ as the number of qubits of extra workspace.

\begin{definition}[$\DQC{k}$ Algorithm with $d$ Queries] \label{def:dqck}
Let $n,w,k\in \N$ and $N=2^n,W=2^w,K=2^k$ and $M=NWK$. A $\DQC{k}$ algorithm acts on $k$ clean qubits initialized to the $\ket{0\ldots 0}$ state and $n+w$ maximally noisy qubits which consist of $n$ qubits on which the oracle acts and $w$ qubits of workspace. Let $U_1,\ldots, U_{d+1}\in\C^{M\times M}$ be $M\times M$ unitary matrices. Let $\start=[NW]\times\{1\}$ be the set of all possible starting basis states of the algorithm and $\final\subseteq[NWK]$ be the subset of final basis states that is accepted by the algorithm. The algorithm starts with a uniformly random basis state sampled from $\start$, applies the unitary operators $U_1,\ldots,U_{d+1}$, interleaved with the oracle $O_x\otimes \bI$, measures all the qubits at the end and accepts if the outcome is in $\final$. (See~\Cref{fig:dqc1} for a depiction.)
\end{definition}

\paragraph*{Remark.} In our model, the oracles are not allowed to directly act on the clean qubits, nevertheless, we can effectively implement this type of operation by swapping the clean qubits with the noisy qubits, applying the oracle on those noisy qubits and swapping them back with the clean qubits. While this transformation does require the use of $k$ extra (potentially noisy) qubits to do the swap operation, our formalism has the advantage that we can talk about oracle separations where $k$, the number of clean qubits is significantly smaller than $n$, where the length of the input is $2^n$. This is important, since when $k\gg n$, many problems become solvable with a few quantum queries with $O(k)$ clean qubits.

 We will now provide an expression for the acceptance probability of a $\DQC{k}$ algorithm, which we will prove in the appendix (\Cref{sec:appendix_acceptance_probability}). As mentioned before, estimating the trace of a unitary matrix described by a quantum circuit is known to be complete for the class $\DQC{1}$~\cite{KL98} and a similar statement is true in query complexity as well.

\begin{claim}\label{claim:dqck_acceptance_probability}
The acceptance probability of a $d$-query $\DQC{k}$ algorithm can be expressed as 
\begin{align*} f(x)&=(NW)^{-1}\cdot \Tr\pbra{ O \cdot V_1 \cdots   O\cdot  V_{2d}}\end{align*}
where $O=O_x\otimes \bI$, $V_1,\ldots,V_{2d}\in\C^{M\times M}$ satisfy $\|V_t\|_\op\le 1$ for $t\in[2d]$, furthermore, $\|V_1\|_\frob \le \sqrt{NW}$. 
\end{claim}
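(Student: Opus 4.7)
The plan is to write the acceptance probability as a trace against the initial mixed density matrix and the final projection, and then use the cyclicity of the trace to reorganize everything into the desired alternating product of oracles and bounded matrices. The rank of the mixed-state projector is what ultimately yields the Frobenius bound on $V_1$.

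First, I would encode the initial state and the measurement. By \Cref{def:dqck}, the initial state is the uniform mixture over basis states in $\start=[NW]\times\{1\}$, so
\[
\rho_0 \;=\; \frac{1}{NW}\,P_{\start}, \qquad P_{\start} \;\triangleq\; \sum_{(i,w)\in[N]\times[W]} \ket{i,w,1}\!\bra{i,w,1},
\]
and the accept event at the end corresponds to the projector $\Pi_{\final}=\sum_{j\in\final}\ket{j}\!\bra{j}$. The overall unitary applied by the algorithm is $W_{\mathrm{circ}} \;=\; U_{d+1}\, O\, U_d\, O\, U_{d-1}\cdots O\, U_2\, O\, U_1$, where $O=O_x\otimes \bI$. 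Note $O^\dagger=O$ since $O_x$ is a diagonal $\pm 1$ matrix. The acceptance probability is then
\[
f(x)\;=\;\Tr\!\left(\Pi_{\final}\, W_{\mathrm{circ}}\, \rho_0\, W_{\mathrm{circ}}^\dagger\right)\;=\;\frac{1}{NW}\,\Tr\!\left(\Pi_{\final}\, W_{\mathrm{circ}}\, P_{\start}\, W_{\mathrm{circ}}^\dagger\right).
\]

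Next, I would expand $W_{\mathrm{circ}}$ and $W_{\mathrm{circ}}^\dagger$ and apply cyclicity of the trace. After substituting, the argument of the trace reads
\[
\Pi_{\final}\cdot U_{d+1}\, O\, U_d\cdots O\, U_1\cdot P_{\start}\cdot U_1^\dagger\, O\, U_2^\dagger\cdots O\, U_d^\dagger\, O\, U_{d+1}^\dagger,
\]
which contains exactly $2d$ factors of $O$ separated by $2d$ ``padding'' blocks. To hit the target format $\Tr(O V_1 O V_2\cdots O V_{2d})$, I would cyclically rotate so that an $O$ appears at the very front and the block containing $P_{\start}$ sits in the $V_1$ slot. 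Concretely, this rotation yields $V_1=U_1 P_{\start} U_1^\dagger$, and the remaining $V_t$'s are (in order) the factors $U_2^\dagger, U_3^\dagger,\ldots, U_d^\dagger,\; U_{d+1}^\dagger \Pi_{\final} U_{d+1},\; U_d, U_{d-1},\ldots,U_2$.

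Finally I would verify the norm bounds. Every $V_t$ with $t\ge 2$ is a single unitary or a unitary conjugate of a projection ($\Pi_{\final}$), so $\|V_t\|_{\op}\le 1$. For $V_1 = U_1 P_{\start} U_1^\dagger$, unitary invariance of both norms gives $\|V_1\|_{\op}=\|P_{\start}\|_{\op}=1$ and $\|V_1\|_{\frob}=\|P_{\start}\|_{\frob}=\sqrt{\Tr(P_{\start})}=\sqrt{NW}$, as required. The only thing that needs attention is the bookkeeping in the cyclic rotation; since this is just a relabeling, there is no genuine obstacle, and the factor $(NW)^{-1}$ in front is inherited unchanged from the normalization of $\rho_0$.
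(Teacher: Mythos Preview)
Your proposal is correct and essentially identical to the paper's proof: the paper likewise writes $f(x)=(NW)^{-1}\Tr(\Pi_{\final}\,W_{\mathrm{circ}}\,P_{\start}\,W_{\mathrm{circ}}^\dagger)$, applies cyclicity to obtain exactly your choice of $V_1=U_1 P_{\start} U_1^\dagger$, $V_t=U_t^\dagger$ for $t\in[2,d]$, $V_{d+1}=U_{d+1}^\dagger \Pi_{\final} U_{d+1}$, and $V_{d+1+t}=U_{d+1-t}$ for $t\in[d-1]$, and argues the norm bounds via the rank of $P_{\start}$ just as you do.
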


\subsection{$\hBQP$ algorithms}

We interpret $n$ as the number of qubits on which the oracle acts and $w$ as the number of qubits of extra workspace.

\begin{definition}[$\hBQP$ Algorithm with $d$ Queries]\label{def:hbqp_algorithm}
    Let $n,w\in \N$, $N=2^n,W=2^w$ and $M=N W$. A $\hBQP$ algorithm acts on $n+w$ qubits initialized to $\ket{\iw{1}}$ for a uniformly random $\iw{1}\sim [M]$. The algorithm does not have knowledge of $\iw{1}$.  Let $U_1,\ldots,U_{d+1}\in \C^{M\times M}$ be $M\times M$ unitary matrices. The algorithm applies the unitary operators $U_1,\ldots,U_{d+1}$ interleaved with the oracle $O_x\otimes \bI$ and measures all the qubits at the end to obtain an outcome $\iw{d+2}$. Finally, the algorithm then learns $\iw{1}$. The algorithm accepts iff $(\iw{1},\iw{d+2})\in \cF$ where $\final\subseteq[M]\times[M]$ is a subset. (See~\Cref{fig:hbqp} for a depiction.)
\end{definition}

We provide an expression for the acceptance probability of a $d$-query $\hBQP$ algorithm, which is proved in~\Cref{sec:appendix_acceptance_probability}. 

\begin{claim} The acceptance probability of a $d$-query $\hBQP$ algorithm can be expressed as 
\[f(x):=M^{-1}\sum_{\iw{1},\iw{d+2}\in[M]}F_{\iw{1},\iw{d+2}}\cdot \bra{\iwk{1}} U_1^\dagger\cdot O \cdots O\cdot  U_{d+1}^\dagger  \ket{\iw{d+2}}\bra{\iwk{d+2}}  U_{d+1}\cdot O\cdots O\cdot  U_1 \ket{\iwk{1}}  \]
where $O=O_x\otimes \bI$,  and $U_1,\ldots,U_{d+1}\in \mathbb{C}^{M\times M}$ are matrices with $\|U_t\|_\op \le 1$ for all $t\in[d+1]$.
\label{claim:acceptance_probability_hbqp}
\end{claim}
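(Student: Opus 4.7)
The plan is to simply unpack the definition of the acceptance probability given by \Cref{def:hbqp_algorithm} and recognize that squaring the amplitude produces the two factors appearing in the claimed expression. There is no substantive difficulty here; the steps are largely bookkeeping, and the only small observation needed is that the phase oracle is self-adjoint.

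First, I would write out the state produced by a $d$-query $\hBQP$ algorithm on a fixed initial basis state $\ket{\iw{1}}$. After applying $U_1$, the oracle $O:=O_x\otimes \bI$, $U_2$, $O$, \ldots, $O$, $U_{d+1}$ (with $d$ interleaved queries), the final state is
\[
\ket{\psi_{\iw{1}}(x)}\;=\;U_{d+1}\cdot O\cdots O\cdot U_1\ket{\iw{1}}.
\]
Measuring in the computational basis yields outcome $\iw{d+2}\in[M]$ with probability $\abs{\bra{\iw{d+2}}\psi_{\iw{1}}(x)\rangle}^2$. Since $\iw{1}$ is uniformly random over $[M]$ and the algorithm accepts iff $(\iw{1},\iw{d+2})\in\final$, setting $F_{\iw{1},\iw{d+2}}:=\indi[(\iw{1},\iw{d+2})\in\final]\in\{0,1\}$ we get
\[
f(x)\;=\;M^{-1}\sum_{\iw{1},\iw{d+2}\in[M]} F_{\iw{1},\iw{d+2}}\cdot \abs{\bra{\iw{d+2}} U_{d+1}\cdot O\cdots O\cdot U_1\ket{\iw{1}}}^2.
\]

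Next, I would expand $|a|^2=a\cdot\overline{a}$. The conjugate of $\bra{\iw{d+2}} U_{d+1}\cdot O\cdots O\cdot U_1\ket{\iw{1}}$ equals $\bra{\iw{1}} U_1^\dagger\cdot O^\dagger\cdots O^\dagger\cdot U_{d+1}^\dagger\ket{\iw{d+2}}$. Here is the only real observation: $O=O_x\otimes\bI$ is a real diagonal matrix with $\pm1$ entries, hence $O^\dagger=O$. Substituting gives
\[
\abs{\bra{\iw{d+2}} U_{d+1}\cdot O\cdots O\cdot U_1\ket{\iw{1}}}^2
=\bra{\iw{1}} U_1^\dagger\cdot O\cdots O\cdot U_{d+1}^\dagger\ket{\iw{d+2}}\bra{\iw{d+2}} U_{d+1}\cdot O\cdots O\cdot U_1\ket{\iw{1}},
\]
which is exactly the expression in the claim. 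The bound $\|U_t\|_\op\le 1$ is immediate since each $U_t$ is unitary (in fact equality holds).

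Since every step is a direct substitution, there is no main obstacle; the only conceptual point worth emphasizing in the write-up is that the $F$ matrix is exactly the $0/1$ indicator of the post-processing predicate $\final$, and that the hermiticity of $O$ lets both the "ket side" and the "bra side" of the squared amplitude be written using the same ordered product $O\cdots O$ rather than needing to reverse the oracle. This completes the proposed proof.
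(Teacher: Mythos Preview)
Your proposal is correct and matches the paper's own proof essentially line for line: the paper also writes $f(x)=M^{-1}\sum F_{\iw{1},\iw{d+2}}\abs{\bra{\iw{d+2}}U_{d+1}O\cdots O U_1\ket{\iw{1}}}^2$ and then expands the squared modulus, implicitly using $O^\dagger=O$. If anything, you are slightly more explicit than the paper in calling out that $O$ is self-adjoint.
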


\paragraph*{Some Remarks.}
 
\begin{itemize}
\item While our way of defining $\DQC{k}$ and $\hBQP$ doesn't clearly subsume $\BPP$, there is a simple way to fix this. We can define variants of these models where the algorithm is allowed to make up to $d$ classical pre-processing queries on clean bits, and based on the query outcomes, choose a $d$-query quantum algorithm to run. When defined this way, these models immediately subsume $\BPP$, since we can implement any $\BPP$ algorithm in the pre-processing part. Interestingly, many of the results in our paper, especially the lower bounds hold even for algorithms with a large amount of classical pre-processing. See~\Cref{sec:hybrid} for more details.
\item Unlike~\cite{GSTW24}, our model does not allow parallel queries. This is without loss of generality, as our model has unrestricted depth and we can simulate $k$ parallel queries by $k$ adaptive queries. If we allow parallel queries but limit the depth, we suspect that it might lead improved Fourier growth bounds in terms of the depth of the algorithm, but we leave this to future work. 
\item In the rest of this paper, we will work with the oracle $O'_x$ which maps $\ket{i}$ to $\ket{i}x_i$ for all $i\in[N]$ where $x$ is of length $N$. Note that the aforementioned oracle $O_x$ is the controlled version of $O'_x$ and generally offers more functionality than $O'_x$. However, in all our proofs, it suffices to work with the oracle $O'_x$ since we allow restrictions $\rho\in\{-1,1,\star\}^{N}$ to act on our input. In particular, if we consider $O'_x$ for bit-strings of length $2N$ and apply the restriction which fixes the first $N$ coordinates to 1, we obtain the oracle $O_x$ on bit-strings of length $N$ as desired. Since all our Fourier growth bounds work even under restrictions of the input, it suffices to work with oracles of the form $O'_x$ and all our Fourier growth bounds will carry over to oracles of the form $O_x$ if $N$ is replaced by $2N$. Henceforth, we will refer to the oracle $O'_x$ as $O_x$ and work with this oracle.
\end{itemize}

\section{Main Technical Tool: Matrix Decomposition Lemma}
\label{sec:decomposition}

The following matrix decomposition lemma is a  recurring tool in this paper. It allows us to encode information about the indices in a matrix multiplication by embedding them inside a larger matrix multiplication. In this lemma, we have matrices $U_1,\ldots,U_{d}$ where the rows and columns of $U_t$ are indexed by $\iw{t}$ and $\iw{t+1}$ respectively. Here, $\iw{}$ is a shorthand for either $(i,w,k)$ or $(i,w)$ where $i\in[N]$ corresponds to indices we want to remember information about and $w\in[W],k\in[K]$ corresponds to auxiliary workspace indices. The set $T$ corresponds to the complement of matrices whose index information we want to retain, i.e., we don't care about the matrices in $T$. The set $L$ indicates that we do not store parity information for indices $i_t$ with $i_t\notin L$ and the set $S_{d+1}$ corresponds to the information aggregated after multiplying the matrices.

\begin{lemma} Let $U_1,\ldots,U_{d}$ be $M\times M$ matrices with $\|U_t\|_\op \le 1$ for $t\in[d]$ and let $T\subseteq[d]$ and $L\subseteq [N]$. Define a matrix $\tilU$ such that for all $\iwk{1},\iwk{d+1}\in[M],S_{d+1}\subseteq[N],$
\begin{align*} \tilU[\iwk{1}\gap \iwks{d+1}] 
&=\sum_{\iwk{2},\ldots,\iwk{d}\in[M]} \pbra{\prod_{t\in[1,d]} U_{t}[\iwk{t}\gap\iwk{t+1}]}
\cdot \indi\sbra{S_{d+1}=\bigoplus_{\substack{t\in [1,d]\setminus T\\i_t\in L}}\{i_{t}\} }.\end{align*}
Then, $\|\tilU\|_\frob \le \min_{t\in[d]}\|U_t\|_\frob$.
\label{lem:main_lemma}
\end{lemma}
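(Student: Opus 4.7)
The plan is to augment the index space of each $U_t$ by an auxiliary ``parity register'' $S\subseteq [\tilN]$ that accumulates the relevant symmetric difference as we traverse $t=1,\ldots,d$. For $t\in [2,d]$ I would define $\tilU_t$ with rows indexed by $(\iwk{t},S_t)$ and columns by $(\iwk{t+1},S_{t+1})$, with entries
\[\tilU_t[\iwks{t}\gap \iwks{t+1}] \;=\; U_t[\iwk{t}\gap\iwk{t+1}]\cdot \indi\sbra{S_{t+1}=S_t\oplus \Delta_t(i_{t+1})},\]
where $\Delta_t(i_{t+1})=\{i_{t+1}\}$ if $t+1\in [2,d]\setminus T$ and $i_{t+1}\le \tilN$, and $\Delta_t(i_{t+1})=\emptyset$ otherwise. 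For $t=1$ I would only retain the row $S_1=\emptyset$, so $\tilU_1$ has rows indexed by $\iwk{1}$ alone. Multiplying these together, the indicators telescope and force $S_{d+1}=\bigoplus_{t\in [2,d]\setminus T,\,i_t\le \tilN}\{i_{t}\}$, directly verifying the product identity claimed in the lemma.

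For the operator-norm bound, I would check $\vabs{\tilU_t}_\op\le 1$ directly on an arbitrary vector. The action of $\tilU_t$ on a vector $v$ indexed by $(\iwk{t+1},S_{t+1})$ amounts to, for each fixed $S_t$, forming the slice $w^{(S_t)}[\iwk{t+1}] := v[\iwk{t+1},\, S_t\oplus \Delta_t(i_{t+1})]$ and then applying $U_t$. Since for every fixed $\iwk{t+1}$ the map $S_t \mapsto S_t\oplus \Delta_t(i_{t+1})$ is a bijection on $2^{[\tilN]}$, a short calculation gives $\sum_{S_t}\vabs{w^{(S_t)}}^2 = \vabs{v}^2$, so $\vabs{\tilU_t v}^2 \le \vabs{U_t}_\op^2\vabs{v}^2 \le \vabs{v}^2$.

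The main obstacle is the Frobenius bound $\vabs{\tilU}_\frob \le \min_t \vabs{U_t}_\frob$, which must hold for \emph{every} $t\in [d]$ simultaneously. The naive approach of invoking~\Cref{fact:frob_op} at a chosen coordinate fails, because for $t\ge 2$ the norm $\vabs{\tilU_t}_\frob$ is inflated by a factor of $\sqrt{2^{\tilN}}$ coming from the $S$-register. My plan is instead to use Parseval. For $x\in\bin^{\tilN}$ and $t\in [d]$, let $D_t(x)$ denote the diagonal $M\times M$ matrix whose $(\iwk{},\iwk{})$-entry is $x_i$ when $t\in [2,d]\setminus T$ and $i\le \tilN$, and $1$ otherwise. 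Expanding the matrix product coordinatewise shows that the function
\[g_{\iwk{1},\iwk{d+1}}(x) \;:=\; \bra{\iwk{1}}\, U_1\, D_2(x)\, U_2\cdots D_d(x)\, U_d\, \ket{\iwk{d+1}}\]
has Fourier coefficient at $S\subseteq[\tilN]$ equal to exactly $\tilU[\iwk{1}\gap \iwks{d+1}]$. By Parseval's identity,
\[\vabs{\tilU}_\frob^2 \;=\; \sum_{\iwk{1},\iwk{d+1}} \Ex_{\bx}\sbra{\abs{g_{\iwk{1},\iwk{d+1}}(\bx)}^2} \;=\; \Ex_{\bx}\vabs{U_1\, D_2(\bx)\, U_2\cdots D_d(\bx)\, U_d}_\frob^2.\]
For any $t^*\in [d]$, splitting the product at $U_{t^*}$ and applying~\Cref{fact:frob_op} twice, using $\vabs{U_t}_\op\le 1$ and $\vabs{D_t(\bx)}_\op\le 1$, bounds the right-hand side by $\vabs{U_{t^*}}_\frob^2$. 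Taking $t^*$ to minimize $\vabs{U_{t^*}}_\frob$ finishes the proof.
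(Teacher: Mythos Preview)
Your construction and operator-norm argument are essentially the same as the paper's (the paper indexes the update by $i_t$ rather than $i_{t+1}$, and phrases the norm bound via block-diagonality after a row permutation rather than via a direct vector computation, but these are cosmetic differences).

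Your Frobenius-norm argument, however, is \emph{genuinely different} from the paper's, and arguably cleaner. The paper observes that $\|\tilU\|_\frob \le \|\tilU_{[1,t]}\|_\frob$ and then---to avoid the $2^{\tilN}$ blowup you correctly identify---reapplies the same construction to $U_t^T,\ldots,U_1^T$ in reverse order, obtaining a matrix $\tilU'$ whose entries coincide with those of $\tilU_{[1,t]}$ but whose ``first'' factor $\tilU'_t$ now contains only a single copy of $U_t$; this yields $\|\tilU_{[1,t]}\|_\frob = \|\tilU'\|_\frob \le \|\tilU'_t\|_\frob = \|U_t\|_\frob$. Your Parseval route sidesteps this reverse-construction trick entirely: by recognizing $\tilU[\iwk{1}\gap \iwks{d+1}]$ as the Fourier coefficient $\widehat{g_{\iwk{1},\iwk{d+1}}}(S_{d+1})$, Parseval collapses the $S$-register into an expectation over $\bx$, after which a single application of \Cref{fact:frob_op} at position $t^*$ (pointwise in $\bx$) finishes. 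This is more direct and gives a clear conceptual reason \emph{why} the $S$-register costs nothing in Frobenius norm. The paper's approach, on the other hand, stays purely linear-algebraic and is the template they reuse for the more intricate \Cref{lem:improved_decomposition}, where the added memory and equality constraints do not obviously admit a Fourier interpretation.
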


\begin{proof}[Proof of~\Cref{lem:main_lemma}]
Define a diagonal matrix $O_x$ whose $i$-th entry is a variable $x_i$ if $i\in L$ and is 1 otherwise. Define matrices $V_i(x)=U_i$ if $i\in T$ and $V_i(x)=O_x\cdot U_i$ otherwise. Observe that $\|V_i\|_\op \le 1$ for all $i\in[d]$. Consider the matrix-valued function
$M(x):=V_1(x)\cdots V_d(x)$. Observe that
\[ M(x)[\iw{1}\gap \iw{d+1}] 
=\sum_{\iwk{2},\ldots,\iwk{d}\in[M]} \pbra{\prod_{t\in[1,d]} U_{t}[\iwk{t}\gap\iwk{t+1}]}
\cdot \prod_{\substack{t\in[1,d]\setminus T\\ i_t\in L}}x_{i_t}\]
Hence, for all $S\subseteq[N]$, we have $\widehat{M}(S)[\iw{1}\gap\iw{d+1}]= \tilU[\iw{1}\gap\iws{d+1}]$. This implies that \begin{align*}\|\tilU\|_\frob^2 &= \sum_{S\subseteq[N]} \sum_{\iw{1},\iw{d+1}\in[M]}\widehat{M}(S)[\iw{1}\gap\iw{d+1}]^2\\
&=\sum_{S\subseteq[N]} \|\widehat{M}(S)\|_\frob^2 =\E_{x}\sbra{\|M(x)\|_\frob^2} \tag{by Parseval's.}
\end{align*}
Finally, let $t\in[d]$ and $x\in \{\pm 1\}^N$. Observe that $M(x)=V_{<t}(x)\cdot V_t(x)\cdot V_{>t}(x)$ where $V_{<t}(x):=\prod_{t'<t}V_t(x)$ and $V_{>t}(x):=\prod_{t'>t}V_t(x)$ have operator norm at most 1, hence $\|M(x)\|_\frob \le \|V_t(x)\|_\frob =\|U_t\|_\frob.$ This completes the proof.
\end{proof}

We will need a slightly more complicated version of this lemma that incorporates inequality constraints over the indices being summed over. 

Define functions $g_1,g_2:[N]\times [N]\to \mathbb{R}$ as follows. $g_1(i,j)=\indi\sbra{i\neq j}$ and $g_2(i,j)=(-1)^{\indi\sbra{i=j}}$.

\begin{lemma} Let $U_1,\ldots,U_{d}$ be $M\times M$ matrices with $\|U_t\|_\op \le 1$ for $t\in[d]$ and let $t_1\in [d]$. For all $\gamma\in[2]$, define a matrix $U_\gamma$ such that for all $\iwk{1},\iwk{d+1}\in[M]$
\begin{align*} U_{\gamma}[\iwk{1}\gap \iwk{d+1}] 
&=\sum_{\iwk{2},\ldots,\iwk{d}\in[M]} \pbra{\prod_{t\in[1,d]} U_{t}[\iwk{t}\gap\iwk{t+1}]} \cdot \prod_{t\in (t_1,d]}g_\gamma(i_t,i_{t_1}).\end{align*}
Then, for all $\gamma\in[2]$, $\|U_\gamma\|_\frob \le   O(d)\cdot \min_{t\in[d]}  \|U_t\|_\frob$ and $\|U_\gamma\|_\op \le O(d)$. Furthermore, $\|U_\gamma\|_\frob \le \min_{t\in [t_1,d]}(O(t)\cdot \|U_{t}\|_\frob,\sqrt{M})$.
\label{lem:main_lemma_2}
\end{lemma} 

\begin{corollary} Under the same hypothesis as~\Cref{lem:main_lemma_2}, let $T\subseteq[d]$ and $L\subseteq [N]$. Define a matrix $\tilU$ such that for all $\iwk{1},\iwk{d+1}\in[M],S_{d+1}\subseteq[N],$
\begin{align*} \tilU_\gamma[\iwk{1}\gap \iwks{d+1}] 
&=\sum_{\iwk{2},\ldots,\iwk{d}\in[M]} \pbra{\prod_{t\in[1,d]} U_{t}[\iwk{t}\gap\iwk{t+1}]}
\cdot \indi\sbra{S_{d+1}=\bigoplus_{\substack{t\in [1,d]\setminus T\\i_t\in L}}\{i_{t}\} }\cdot \prod_{t\in (t_1,d]}g_\gamma(i_t,i_{t_1}).\end{align*}
Then, $\|\tilU_\gamma \|_\frob \le  O(d)\cdot \min_t  \|U_t\|_\frob$ and $\|\tilU_\gamma \|_\frob\le O(\sqrt{M})$.
\label{corollary:main_lemma_2}
\end{corollary} 
\begin{figure}[t!]
\centering

\tikzset{every picture/.style={line width=0.75pt}} 

\begin{tikzpicture}[x=0.75pt,y=0.75pt,yscale=-1,xscale=1]

\draw  [draw opacity=0] (80,61) -- (279.14,61) -- (279.14,94.57) -- (80,94.57) -- cycle ; \draw   (80,61) -- (80,94.57)(113,61) -- (113,94.57)(146,61) -- (146,94.57)(179,61) -- (179,94.57)(212,61) -- (212,94.57)(245,61) -- (245,94.57)(278,61) -- (278,94.57) ; \draw   (80,61) -- (279.14,61)(80,94) -- (279.14,94) ; \draw    ;
\draw    (146.14,34.57) -- (146.01,59) ;
\draw [shift={(146,61)}, rotate = 270.29] [color={rgb, 255:red, 0; green, 0; blue, 0 }  ][line width=0.75]    (10.93,-3.29) .. controls (6.95,-1.4) and (3.31,-0.3) .. (0,0) .. controls (3.31,0.3) and (6.95,1.4) .. (10.93,3.29)   ;
\draw    (179.14,34.57) -- (179.01,59) ;
\draw [shift={(179,61)}, rotate = 270.29] [color={rgb, 255:red, 0; green, 0; blue, 0 }  ][line width=0.75]    (10.93,-3.29) .. controls (6.95,-1.4) and (3.31,-0.3) .. (0,0) .. controls (3.31,0.3) and (6.95,1.4) .. (10.93,3.29)   ;
\draw    (212.14,34.57) -- (212.01,59) ;
\draw [shift={(212,61)}, rotate = 270.29] [color={rgb, 255:red, 0; green, 0; blue, 0 }  ][line width=0.75]    (10.93,-3.29) .. controls (6.95,-1.4) and (3.31,-0.3) .. (0,0) .. controls (3.31,0.3) and (6.95,1.4) .. (10.93,3.29)   ;
\draw    (245.14,34.57) -- (245.01,59) ;
\draw [shift={(245,61)}, rotate = 270.29] [color={rgb, 255:red, 0; green, 0; blue, 0 }  ][line width=0.75]    (10.93,-3.29) .. controls (6.95,-1.4) and (3.31,-0.3) .. (0,0) .. controls (3.31,0.3) and (6.95,1.4) .. (10.93,3.29)   ;
\draw    (113.14,34.57) -- (113.01,59) ;
\draw [shift={(113,61)}, rotate = 270.29] [color={rgb, 255:red, 0; green, 0; blue, 0 }  ][line width=0.75]    (10.93,-3.29) .. controls (6.95,-1.4) and (3.31,-0.3) .. (0,0) .. controls (3.31,0.3) and (6.95,1.4) .. (10.93,3.29)   ;
\draw  [draw opacity=0] (321,62) -- (520.14,62) -- (520.14,95.57) -- (321,95.57) -- cycle ; \draw   (321,62) -- (321,95.57)(354,62) -- (354,95.57)(387,62) -- (387,95.57)(420,62) -- (420,95.57)(453,62) -- (453,95.57)(486,62) -- (486,95.57)(519,62) -- (519,95.57) ; \draw   (321,62) -- (520.14,62)(321,95) -- (520.14,95) ; \draw    ;
\draw [color={rgb, 255:red, 255; green, 0; blue, 0 }  ,draw opacity=1 ][line width=3.75]    (354,62) -- (354,95) ;
\draw [color={rgb, 255:red, 0; green, 93; blue, 235 }  ,draw opacity=1 ][line width=3.75]    (387,62) -- (387,95) ;
\draw [color={rgb, 255:red, 0; green, 93; blue, 235 }  ,draw opacity=1 ][line width=3.75]    (420,62) -- (420,95) ;
\draw [color={rgb, 255:red, 0; green, 93; blue, 235 }  ,draw opacity=1 ][line width=3.75]    (453,62) -- (453,95) ;
\draw [color={rgb, 255:red, 0; green, 93; blue, 235 }  ,draw opacity=1 ][line width=3.75]    (486,62) -- (486,95) ;

\draw (89,69.4) node [anchor=north west][inner sep=0.75pt]  [font=\small]  {$U_{1}$};
\draw (120,69.4) node [anchor=north west][inner sep=0.75pt]  [font=\small]  {$U_{2}$};
\draw (252,70.4) node [anchor=north west][inner sep=0.75pt]  [font=\small]  {$U_{t}$};
\draw (213,69.4) node [anchor=north west][inner sep=0.75pt]  [font=\small]  {$U_{t-1}$};
\draw (104,9.4) node [anchor=north west][inner sep=0.75pt]    {$ \begin{array}{l}
\mathnormal{Q^{( i_{t_{1}})}}\\
\end{array}$};
\draw (108,96.4) node [anchor=north west][inner sep=0.75pt]    {$t_{1}$};
\draw (142,11.4) node [anchor=north west][inner sep=0.75pt]    {$ \begin{array}{l}
P^{( i_{t_{1}})}\\
\end{array}$};
\draw (176,11.4) node [anchor=north west][inner sep=0.75pt]    {$ \begin{array}{l}
P^{( i_{t_{1}})}\\
\end{array}$};
\draw (211,11.4) node [anchor=north west][inner sep=0.75pt]    {$ \begin{array}{l}
P^{( i_{t_{1}})}\\
\end{array}$};
\draw (245,12.4) node [anchor=north west][inner sep=0.75pt]    {$ \begin{array}{l}
P^{( i_{t_{1}})}\\
\end{array}$};
\draw (330,70.4) node [anchor=north west][inner sep=0.75pt]  [font=\small]  {$U_{1}$};
\draw (361,70.4) node [anchor=north west][inner sep=0.75pt]  [font=\small]  {$U_{2}$};
\draw (493,71.4) node [anchor=north west][inner sep=0.75pt]  [font=\small]  {$U_{t}$};
\draw (454,70.4) node [anchor=north west][inner sep=0.75pt]  [font=\small]  {$U_{t-1}$};
\draw (349,99.4) node [anchor=north west][inner sep=0.75pt]    {$t_{1}$};
\draw (289,69.4) node [anchor=north west][inner sep=0.75pt]    {$=$};
\draw (347,39.4) node [anchor=north west][inner sep=0.75pt]    {$ \begin{array}{l}
i_{t_{1}}\\
\end{array}$};
\draw (383,39.4) node [anchor=north west][inner sep=0.75pt]    {$ \begin{array}{l}
i_{t_{1}}\\
\end{array}$};
\draw (416,39.4) node [anchor=north west][inner sep=0.75pt]    {$ \begin{array}{l}
i_{t_{1}}\\
\end{array}$};
\draw (449,39.4) node [anchor=north west][inner sep=0.75pt]    {$ \begin{array}{l}
i_{t_{1}}\\
\end{array}$};
\draw (482,39.4) node [anchor=north west][inner sep=0.75pt]    {$ \begin{array}{l}
i_{t_{1}}\\
\end{array}$};

\end{tikzpicture}

\caption{We use red and blue lines indexed by $i_{t_1}$ to depict the insertion of $Q^{(i_{t_1})}$ and $P^{(i_{t_1})}$ respectively, where we omit the superscript $\gamma$.}
\label{fig_1:definition}

\end{figure}

\begin{proof}[Proof of~\Cref{corollary:main_lemma_2} from~\Cref{lem:main_lemma_2}]
Define a diagonal matrix $O_x$ whose $i$-th entry is a variable $x_i$ if $i\in L$ and is 1 otherwise, and define matrices $V_t=V_t(x)=U_t$ if $t\in T$ and $V_t=V_t(x)=(O_x\otimes \id)\cdot U_t$ otherwise. Apply~\Cref{lem:main_lemma_2} to $V_1(x),\ldots,V_d(x)$ to obtain a matrix $U_\gamma(x)$ with \[\|U_\gamma(x)\|_\frob \le O(d)\cdot \min_t\pbra{\|V_t(x)\|_\frob}=O(d)\cdot \min_t\pbra{  \|U_t\|_\frob}\]
such that for all $\iwk{1},\iwk{d+1}\in [M]$,
\begin{align*}
U_\gamma(x)[\iwk{1}\gap\iwk{d+1}]
&=\sum_{\iwk{2},\ldots,\iwk{d}\in[M]} \pbra{\prod_{t\in[1,d]} V_{t}(x)[\iwk{t}\gap\iwk{t+1}]}\cdot \prod_{t\in (t_1,d]}g_\gamma(i_t, i_1)\\
&=\sum_{\iwk{2},\ldots,\iwk{d}\in[M]} \pbra{\prod_{t\in[1,d]} U_{t}[\iwk{t}\gap\iwk{t+1}]}
\cdot \prod_{\substack{t\in[1,d]\setminus T\\ i_t\in L}}x_{i_t}\cdot \prod_{t\in (t_1,d]}g_\gamma(i_t, i_1).
\end{align*}
Observe that for all $S\subseteq[N]$, we have $\widehat{U}(S)[\iw{1}\gap\iw{d+1}]= \tilU[\iw{1}\gap\iw{d+1}S]$. By Parseval's, \begin{align*}\|\tilU\|_\frob^2= \sum_{S\subseteq[N]} \sum_{\iw{1},\iw{d+1}\in[M]}\abs{\widehat{U}(S)[\iw{1}\gap\iw{d+1}]}^2 \triangleq
\sum_{S\subseteq[N]} \|\widehat{U}(S)\|_\frob^2 =\E_{x}\sbra{\|U(x)\|_\frob^2}.
\end{align*}
This completes the proof.
\end{proof}
\begin{proof}[Proof of~\Cref{lem:main_lemma_2}] 

\input{fig_2_telescoping}

For $i\in[N]$, and $\gamma\in [2]$, define matrices $P^{(i,\gamma)}$ as follows. Define $M\times M$ matrices $P^{(i,1)},P^{(i,2)},Q^{(i)}$ as follows.\begin{align*} P^{(i,1)}&:=(\id-\ket{i}\bra{i})\otimes \id\\ P^{(i,2)}&:=(\id - 2\ket{i}\bra{i})\otimes \id\\Q^{(i)}&:=\ket{i}\bra{i}\otimes \id.\end{align*}
Observe that $\|Q^{(i)}\|_\op,\|P^{(i,1)}\|,\|P^{(i,2)}\|_\op\le 1$, furthermore, 
\begin{equation}\label{eq:Q_i_difference} Q^{(i)}=\id-P^{(i,1)}=\frac{1}{2}\pbra{\id -P^{(i,2)}}. \end{equation}
Secondly, the $Q^{(i)}$ are orthogonal, i.e., for all $i,i'\in[N]$ we have
\begin{equation} \label{eq:Q_i_ortho} Q^{(i')}\cdot Q^{(i)\dagger}=Q^{(i)\dagger}\cdot Q^{(i')}=0\quad\text{ if }i\neq i'.\end{equation}
For $\gamma\in[2]$, define $\tilP_t^{i,\gamma)}= U_t\cdot P^{(i,\gamma)}$ for all $i\in[N],t\in[t_1,d-1]$ and let $\tilP_d^{(i,\gamma)}=U_d$. Observe that for all $\gamma\in[2]$,
\[U_\gamma=U_{[1,t_1)}\cdot \sum_{i_{t_1}\in [N]} Q^{(i_{t_1})}\cdot \tilP_{t_1}^{(i_{t_1},\gamma)}\cdots \tilP_d^{(i_{t_1},\gamma)}\triangleq U_{[1,t_1)}\cdot\sum_{i_{t_1}\in [N]} Q^{(i_{t_1})}\cdot \tilP_{[t_1,d]}^{(i_{t_1},\gamma)}.\]
An example of this is depicted in~\Cref{fig_1:definition}. Fix any $t\in[t_1,d]$ and $\gamma\in[2]$. We have
\begin{align}
\|U_\gamma\|_\frob^2&\triangleq \vabs{U_{[1,t_1)}\cdot\sum_{i_{t_1}\in[N]} Q^{(i_{t_1})} \cdot \tilP_{[t_1,d]}^{(i_{t_1},\gamma)}}_\frob^2\le \vabs{\sum_{i_{t_1}\in[N]} Q^{(i_{t_1})} \cdot \tilP_{[t_1,d]}^{(i_{t_1},\gamma)}}_\frob^2 \tag{since $\|U_{[1,t_1)}\|_\op\le 1$}\\
&=\sum_{i_{t_1}\in[N]}  \vabs{Q^{(i_{t_1})} \cdot \tilP_{[t_1,d]}^{(i_{t_1},\gamma)}}_\frob^2 \tag{by~\Cref{eq:Q_i_ortho}}\\
&\le \sum_{i_{t_1}\in [N]} \vabs{ Q^{(i_{t_1})} \cdot \tilP_{[t_1,t)}^{(i_{t_1},\gamma)}\cdot  U_t}_\frob^2\tag{by~\Cref{fact:frob_op} and since other terms have $\|\cdot\|_\op\le 1$}\\
&= \vabs{\sum_{i_{t_1}\in [N]} Q^{(i_{t_1})} \cdot \tilP_{[t_1,t)}^{(i_{t_1})}\cdot  U_t}_\frob^2 \tag{by~\Cref{eq:Q_i_ortho}}\\
&\le \|U_t\|_\frob^2\cdot \vabs{\sum_{i_{t_1}\in [N]} Q^{(i_{t_1})} \cdot \tilP_{[t_1,t)}^{(i_{t_1})}}_\op^2\tag{by~\Cref{fact:frob_op}}
\end{align}

Setting $t=t_1$ and recalling that $\sum_{i_{t_1}\in[N]}Q^{(i_{t_1})}=\id$, we get $\|U_\gamma\|_\frob\le \|U_{t_1}\|_\frob \le \sqrt{M}$. We now show that $\vabs{\sum_{i_{t_1}\in[N]}Q^{(i_{t_1})}\cdot \tilP^{i_{t_1}}_{[t_1,t)}}_\op\le O(t)$. By the telescoping sum, we have
\begin{align*} Q^{(i_1)}\cdot U_{[t_1,t)}-Q^{(i_1)}\cdot \tilP^{(i_1,\gamma)}_{[t_1,t)}&=\sum_{t'\in [t_1,t-1]}Q^{(i_1)}\cdot \tilP^{(i_1,\gamma)}_{[t_1,t')}\cdot \pbra{U_{t'}-\tilP^{(i_1,\gamma)}_{t'}}\cdot U_{(t',t)}\\
&= \sum_{t'\in [t_1,t-1]}Q^{(i_1)}\cdot \tilP^{(i_1,\gamma)}_{[t_1,t')}\cdot U_{t'}\cdot \pbra{\id-\tilP^{(i_1,\gamma)}}\cdot U_{(t',t)}\tag{by definition}\\
&=\sum_{t'\in [t_1,t-1]}Q^{(i_1)}\cdot \tilP^{(i_1,\gamma)}_{[t_1,t')}\cdot U_{t'}\cdot \gamma\cdot Q^{(i_1)}\cdot U_{(t',t)}.\tag{by~\Cref{eq:Q_i_difference}}
\end{align*}
An example of this telescoping sum is depicted in~\Cref{fig_2:telescoping}. Summing over $i_1\in [N]$ and using the fact that  $\sum_{i_1\in[N]}Q^{(i_1)}=\id$, we see that 
\[ \sum_{i_1\in [N]}Q^{(i_1)}\cdot \tilP^{(i_1)}_{[t_1,t)}=U_{[t_1,t)} -\gamma\sum_{t'\in [t_1,t-1]}\sum_{i_1\in [N]} Q^{(i_1)}\cdot \tilP^{(i_1,\gamma)}_{[1,t')}\cdot U_{t'}\cdot Q^{(i_1)}\cdot U_{(t',t)}.\]
Hence, using the fact that $\|U_{t'}\|_\op \le 1$ for all $t'\in[1,d]$, we get
\[ \vabs{\sum_{i_1\in [N]}Q^{(i_1)}\cdot \tilP^{(i_1,\gamma)}_{[t_1,t)}}_\op \le 1 + (t-1)\cdot \gamma\cdot \max_{t'\in [t_1,t-1]}\vabs{\sum_{i_1\in [N]}Q^{(i_1)}\cdot \tilP^{(i_1,\gamma)}_{[t_1,t')}\cdot U_{t'}\cdot Q^{(i_1)}}_\op.\]
We observe that $\gamma\le 2$ for all $\gamma\in[2]$. Finally, we observe that each of the terms inside the maximum is at most 1, due to the following: 
\begin{align*}
&\pbra{\sum_{i_1\in [N]}Q^{(i_1)}\cdot \tilP^{(i_1,\gamma)}_{[t_1,t')}\cdot U_{t'}\cdot Q^{(i_1)}}\cdot \pbra{\sum_{i_1\in [N]}Q^{(i_1)}\cdot \tilP^{(i_1,\gamma)}_{[t_1,t')}\cdot U_{t'}\cdot Q^{(i_1)}}^\dagger \\ 
&=\sum_{i_1\in [N]}Q^{(i_1)}\cdot \tilP^{(i_1,\gamma)}_{[t_1,t')}\cdot U_{t'}\cdot Q^{(i_1)}\cdot Q^{(i_1)\dagger}\cdot U_{t'}^\dagger\cdot  \tilP^{(i_1,\gamma)\dagger}_{[t_1,t')}\cdot  Q^{(i_1)\dagger} \tag{by~\Cref{eq:Q_i_ortho}} \\
&\preceq \sum_{i_1\in [N]} Q^{(i_1)}\cdot Q^{(i_1)\dagger}=\id\tag{since $\vabs{\tilP^{(i_1,\gamma)}_{[1,t')}\cdot U_{t'}\cdot Q^{(i_1)}}_\op \le 1$}
\end{align*}
The desired operator norm is thus at most $O(t)$, and we have argued that $\|U_\gamma\|_\frob \le O(d)\cdot \min_{t\in[t_1,d]}\|U_t\|_\frob$. The same argument as above implies that
\begin{align*}\|U_\gamma\|_\op&\le \vabs{\sum_{i\in[N]}Q^{(i_1)}\cdot \tilP^{(i_1,\gamma)}_{[t_1,d]}}_\op \le 1+2d\cdot \max_{t'\in[t_1,d]}\vabs{\sum_{i_1\in[N]} Q^{(i_1)}\cdot \tilP^{(i_1,\gamma)}_{[t_1,t')}\cdot U_{t'}\cdot Q^{(i_1)} }_\op\le O(d).
\end{align*}
Finally, we see that 
\begin{align*}
\|U_\gamma\|_\frob^2&\triangleq \vabs{U_{[1,t_1)}\cdot\sum_{i_{t_1}\in[N]} Q^{(i_{t_1})} \cdot \tilP_{[t_1,d]}^{(i_{t_1},\gamma)}}_\frob^2\\
&\le \vabs{U_{[1,t_1)}}_\frob^2\cdot  \vabs{\sum_{i_{t_1}\in[N]} Q^{(i_{t_1})} \cdot \tilP_{[t_1,d]}^{(i_{t_1},\gamma)}}_\op^2 \tag{by \Cref{fact:frob_op}}\\
&\le \min_{t\in[1,t_1)} \|U_t\|_\frob^2 \cdot O(d^2).
\end{align*}
Altogether, we have $\|U_\gamma\|_\frob\le O(d)\cdot \min_{t\in[d]}\|U_t\|_\frob.$ This completes the proof. 
\end{proof}

\subsection{Matrix Decomposition Lemma for $\DQC{1}$}

\begin{lemma} Let $U_1,\ldots,U_{d}$ be $M\times M$ matrices with $\|U_t\|_\op \le 1$ for $t\in[d]$. Let $t^*\in (1,d]$. Define a matrix $U$ such that for all $\iwk{1},\iwk{d+1}\in[M],$
\begin{align*} \tilU[\iwk{1}\gap \iwk{d+1}] 
&=\sum_{\iwk{2},\ldots,\iwk{d}\in[M] } \pbra{\prod_{t\in[1,d]} U_{t}[\iwk{t}\gap\iwk{t+1}]}
\cdot \prod_{t\in [t^*,d]}\indi\sbra{i_{t}\neq i_1,i_{d+1}} .\end{align*}
Then, $\|\tilU\|_\frob \le O(d)\cdot  \|U_{t^*}\|_\frob$.
\label{lem:main_lemma_3}
\end{lemma} 
\begin{corollary} Under the same hypothesis as~\Cref{lem:main_lemma_3}, let $T\subseteq[d]$ and $L\subseteq [N]$. Define a matrix $\tilU$ such that for all $\iwk{1},\iwk{d+1}\in[M],S_{d+1}\subseteq[N],$
\begin{align*} \tilU[\iwk{1}\gap \iwks{d+1}] 
&=\sum_{\iwk{2},\ldots,\iwk{d}\in[M] } \pbra{\prod_{t\in[1,d]} U_{t}[\iwk{t}\gap\iwk{t+1}]}
\cdot \indi\sbra{S_{d+1}=\bigoplus_{\substack{t\in [1,d]\setminus T\\i_t\in L}}\{i_{t}\} }\cdot \prod_{t\in [t^*,d]}\indi\sbra{i_{t}\neq i_1,i_{d+1}}.\end{align*}
Then, $\|\tilU\|_\frob \le O(d)\cdot  \|U_{t^*}\|_\frob$.
\label{corollary:main_lemma_3}
\end{corollary} 
The proof of~\Cref{corollary:main_lemma_3} from~\Cref{lem:main_lemma_3} is identical to that of~\Cref{corollary:main_lemma_2} from~\Cref{lem:main_lemma_2} and is omitted. The proof of~\Cref{lem:main_lemma_3} is deferred to~\Cref{sec:app_main_lemma_3}.

\subsection{Matrix Decomposition Lemma for $\hBQP$}

Let $g:[N]\times [N]\to \mathbb{R}$ where $g(i,j)=(-1)^{\indi\sbra{i=j}}$ for $i,j\in[N]$. 

\begin{lemma} Let $U_1,\ldots,U_{d}$ be $M\times M$ matrices with $\|U_t\|_\op \le 1$ for $t\in[d]$. Let $t_1<t_2\in[2,d]$. Define a matrix $U$ such that for all $\iwk{1},\iwk{d+1}\in[M],$
\begin{align*} U[\iwk{1}\gap \iwk{d+1}] 
&=\sum_{\iwk{2},\ldots,\iwk{d}\in[M]} \pbra{\prod_{t\in[1,d]} U_{t}[\iwk{t}\gap\iwk{t+1}]}\cdot \prod_{t\in (t_1,d]}g(i_t,i_{t_1})\cdot \prod_{t\in (t_2,d]}g(i_t,i_{t_2}).\end{align*}
Then, $\|U\|_\frob \le O(d^4)\cdot \min_{t\in[d]}\pbra{\|U_t\|_\frob} $.
\label{lem:main_lemma_4}
\end{lemma}

\begin{corollary} Under the same hypothesis as~\Cref{lem:main_lemma_4}, let $T\subseteq[d]$ and $L\subseteq [N]$. Define a matrix $\tilU$ such that for all $\iwk{1},\iwk{d+1}\in[M],S_{d+1}\subseteq[N],$
\begin{align*} \tilU[\iwk{1}\gap \iwks{d+1}] 
&=\sum_{\iwk{2},\ldots,\iwk{d}\in[M]} \pbra{\prod_{t\in[1,d]} U_{t}[\iwk{t}\gap\iwk{t+1}]}\cdot \indi\sbra{S_{d+1}=\bigoplus_{\substack{t\in [1,d]\setminus T\\i_t\in L}}\{i_{t}\} }\\
&\cdot \prod_{t\in (t_1,d]}g(i_t, i_{t_1})\cdot \prod_{t\in (t_2,d]}g(i_t, i_{t_2}).\end{align*}
Then, $\|\tilU\|_\frob \le O(d^4)\cdot \min_{t\in[d]}\pbra{\|U_t\|_\frob } $.
\label{corollary:main_lemma_4}
\end{corollary}

The proof of~\Cref{corollary:main_lemma_4} from~\Cref{lem:main_lemma_4} is identical to that of~\Cref{corollary:main_lemma_2} from~\Cref{lem:main_lemma_2} and is omitted. The proof of~\Cref{lem:main_lemma_4} is deferred to~\Cref{sec:proof_main_lemma_4}.

\section{Fourier Growth of $\DQC{k}$: Level-Two}
\label{sec:proof_dqck}
In this section, we will show the $\ell=2$ case of~\Cref{thm:main_theorem_dqck}, i.e., the level-two Fourier growth of a $d$-query $\DQC{k}$ algorithm is at most $O(d^3\sqrt{K})$. Since $\DQC{k}$ algorithms are a sub-class of $\BQP$ algorithms, the bounds from~\Cref{thm:main_theorem_bqp} immediately apply to $\DQC{k}$ algorithms and complete the proof when $\min\pbra{2^{k/2},\sqrt{N}}=\sqrt{N}$. It suffices to handle the other case, i.e., $\min\pbra{2^{k/2},\sqrt{N}}=2^{k/2}$ which will be the focus of this section.

\label{sec:acceptance_dqck}
Throughout this section, to simplify notation, we use the shorthand $\iwk{t}$ to denote $(i_t,w_t,k_t)$ where $i_t\in[N],w_t\in[W],k_t\in [K]$ for $N=2^n,W=2^w,K=2^k$. We use $\bI$ to denote the identity matrix, where the dimension is implicit.

Let $f(x)$ be the acceptance probability of a $d$-query $\DQC{k}$ algorithm and $\rho$ be any restriction of the input variables. 
We will now derive an expression for the Fourier coefficients of $f|_\rho(x)$. Let $\rho$ keep the variables in $L\subseteq[N]$ alive and fix the rest. Thus, only Fourier coefficients corresponding to $S\subseteq[L]$ are non-zero and are described by the following claim.

\begin{claim}\label{claim:dqck_fourier_coefficients}
Let $f(x)$ be the acceptance probability of a $d$-query $\DQC{k}$ algorithm and let $\rho\in\{-1,1,\star\}^N$ be any restriction that leaves $L\subseteq[N]$ unfixed and fixes the rest. Then, there exist matrices $V^\rho_1,\ldots,V^\rho_{2d}$ such that for all $S\subseteq[L]$,
\begin{align*}  \widehat{f|_\rho}(S)&=(NW)^{-1}\sum_{\iwk{1},\ldots,\iwk{2d}\in[M]} \pbra{\prod_{t\in[2d]} V^\rho_t[\iwk{t}\gap\iwk{t+1}]} \cdot \indi\sbra{\bigoplus_{\substack{t\in[2d]\\\text{with }i_t\in L}} \{i_t\}=S}.\end{align*} where $V^\rho_1,\ldots,V^\rho_{2d}\in\C^{M\times M}$ satisfy $\|V_t^\rho\|_\op\le 1$ for $t\in[2d]$ and $\|V_1^\rho\|_\frob \le \sqrt{M/K}$ 
\end{claim}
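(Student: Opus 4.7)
} The plan is to start from the matrix-product expression for the unrestricted acceptance probability provided by \Cref{claim:dqck_acceptance_probability}, fold the restriction $\rho$ into the query-interleaving matrices, expand the resulting trace as a multilinear polynomial in the free variables, and read off the Fourier coefficients. The restriction $\rho$ is just a convenient relabeling once we observe that the oracle factors $x_{i_t}$ coming from fixed coordinates are scalar phases that can be absorbed into the adjacent matrices without inflating any norm.

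First I would invoke \Cref{claim:dqck_acceptance_probability} to write
\begin{equation*}
f(x) \;=\; (NW)^{-1}\,\Tr\bigl(O\cdot V_1\cdots O\cdot V_{2d}\bigr),
\end{equation*}
with $O=O_x\otimes \bI$, $\|V_t\|_\op\le 1$ for all $t\in[2d]$, and $\|V_1\|_\frob\le \sqrt{NW}$. Expanding the trace as a sum over the cyclic product of matrix entries (using $\iwk{2d+1}\equiv \iwk{1}$) and remembering that $O\cdot V_t$ multiplies the $(\iwk{t},\iwk{t+1})$-entry of $V_t$ by $x_{i_t}$, we obtain
\begin{equation*}
f(x) \;=\; (NW)^{-1}\!\!\sum_{\iwk{1},\ldots,\iwk{2d}\in[M]}\ \Bigl(\prod_{t\in[2d]}V_t[\iwk{t}\gap\iwk{t+1}]\Bigr)\cdot \prod_{t\in[2d]}x_{i_t}.
\end{equation*}

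Next I would absorb the restriction. Writing $O_{\rho(x)}=O_x^{\mathrm{free}}\cdot O^{\mathrm{fixed}}$, where $O^{\mathrm{fixed}}$ is the diagonal $\pm 1$ matrix with $O^{\mathrm{fixed}}[i,i]=\rho_i$ for $i>\tilN$ and $O^{\mathrm{fixed}}[i,i]=1$ otherwise, and $O_x^{\mathrm{free}}$ acts as $x_i$ on $i\le\tilN$ and trivially elsewhere, I would define $V^\rho_t:=(O^{\mathrm{fixed}}\otimes \bI)\,V_t$. Since $O^{\mathrm{fixed}}\otimes \bI$ is unitary, we have $\|V^\rho_t\|_\op=\|V_t\|_\op\le 1$ for every $t$, and $\|V^\rho_1\|_\frob=\|V_1\|_\frob\le \sqrt{NW}$, so all norm bounds transfer. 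The identity
\begin{equation*}
f|_\rho(x) \;=\; (NW)^{-1}\,\Tr\bigl(O_x^{\mathrm{free}}\cdot V^\rho_1\cdots O_x^{\mathrm{free}}\cdot V^\rho_{2d}\bigr)
\end{equation*}
then follows by distributing the $O^{\mathrm{fixed}}$ factors into the adjacent $V_t$'s (the relabeling of which coordinates are ``free'' — so that they are the first $\tilN$ — is a cosmetic reindexing of the oracle register and can be done WLOG by the same unitary absorption trick).

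Expanding this restricted trace in the same way as before gives
\begin{equation*}
f|_\rho(x)=(NW)^{-1}\!\!\sum_{\iwk{1},\ldots,\iwk{2d}\in[M]}\Bigl(\prod_{t\in[2d]} V^\rho_t[\iwk{t}\gap\iwk{t+1}]\Bigr)\cdot\!\!\prod_{\substack{t\in[2d]\\ i_t\le\tilN}}\!\! x_{i_t},
\end{equation*}
since $O_x^{\mathrm{free}}$ contributes the factor $x_{i_t}$ exactly when $i_t\le\tilN$. The last product equals $\chi_{S'}(x)$ for $S'=\bigoplus_{t:\,i_t\le\tilN}\{i_t\}$, which is a subset of $[\tilN]$. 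Grouping terms by the value of $S'$ and matching with the unique multilinear expansion $f|_\rho(x)=\sum_S \widehat{f|_\rho}(S)\,\chi_S(x)$ yields exactly the claimed formula for $\widehat{f|_\rho}(S)$ for every $S\subseteq[\tilN]$ (Fourier coefficients for $S\not\subseteq[\tilN]$ vanish). The only subtle step is verifying that the phase absorption does not change operator or Frobenius norms, which follows from unitarity of $O^{\mathrm{fixed}}\otimes\bI$; all other steps are routine bookkeeping.
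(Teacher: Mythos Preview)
Your proposal is correct and follows essentially the same approach as the paper: start from \Cref{claim:dqck_acceptance_probability}, expand the trace, absorb the fixed coordinates of $\rho$ into a diagonal sign matrix (the paper calls it $D^\rho$, you call it $O^{\mathrm{fixed}}\otimes\bI$) multiplied into each $V_t$, and then read off the Fourier coefficients from the resulting multilinear expansion. The paper only uses $\|D^\rho\|_\op\le 1$ to transfer the norm bounds, whereas you observe the slightly stronger fact that the diagonal sign matrix is unitary; either suffices.
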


The proof of this is fairly simple and is deferred to~\Cref{sec:appendix_fourier_coefficients}. We will now establish $L_{1,2}$ bounds for $\DQC{k}$ algorithms. The goal of this section is to upper bound 
\begin{align}
    L_{1,2}(f|_\rho) \triangleq \max_{\alpha\in [-1,1]^{\binom{N}{2}}} L_{1,2}^\alpha(f|_\rho)=\max_{\alpha\in [-1,1]^{\binom{N}{2}}}\sum_{S\in{\binom{[N]}{2}}} \alpha_S\cdot  \widehat{f|_\rho}(S).\label{eq:dqck_proof_1}
\end{align} 
Fix any $\alpha_S\in [-1,1]$ for each $S\in{\binom{L}{2}}$. From \Cref{eq:dqck_proof_1} and~\Cref{claim:dqck_fourier_coefficients}, we see that our goal is to upper bound
\begin{align} \label{eq:dqc1_6}L_{1,2}^\alpha(f|_\rho)= \sum_{\substack{S\subseteq L\\|S|=2}} (NW)^{-1}\sum_{\iwk{1},\ldots,\iwk{2d}\in[M]} \pbra{\prod_{t\in[2d]} V^\rho_t[\iwk{t}\gap\iwk{t+1}]} \cdot \indi\sbra{\bigoplus_{\substack{t\in[2d]\\\text{with }i_t\in L}} \{i_t\}=S}\cdot \alpha_S
\end{align}

Observe that if $\bigoplus_{ {t\in[2d],i_t\in L}} \{i_t\}$ has size $2$, then there must exist a subset $T=\{t_1,t_2\}\subseteq [2d]$ for $t_1\neq t_2$ such that $\{i_{t_1},i_{t_2}\}$ has two distinct elements in $L$ and $\bigoplus_{\substack{t\in[2d]\setminus T \\\text{with }i_t\in L}}\{i_t\}=\emptyset.$ For any fixed $S$, there may be many such $T$, but to uniquely identify one, we define $T$ to correspond to the first times that variables in $S$ appear. More precisely, let $t_1$ be the first time $t$ for which $i_t\in S$, and let $t_2$ be the first time $t>t_1$ for which $i_t\in S\setminus \{i_{t_1}\}$. This says that $i_{t_1}$ is the first element of $S$ to appear, $i_{t_2}$ is the next unseen element of $S$ to appear. Conversely, for any $T$ and $\{i_t\}_{t\in T}$ satisfying the above conditions, it defines a unique $S=\{i_t:t\in T\}$. 

Fix $T\subseteq[2d]$ of size $2$ (this can be done in $\binom{2d}{2}$ ways) and let $T=\{t_1,t_2\}$ for $t_1<t_2$. Define
\begin{align}\label{eq:dqc1_delta_higher}\begin{split}
\Delta_{T}&:=  \sum_{\iwk{1},\ldots,\iwk{2d}\in[M]}   \pbra{\prod_{t\in[2d]} V^\rho_t[\iwk{t}\gap\iwk{t+1}]} \cdot  \indi\sbra{\bigoplus_{\substack{t\in[2d]\setminus T\\\text{with }i_t\in L}} \{i_t\}=\emptyset}\\ 
& \cdot \prod_{t\in [1,t_1)}\indi\sbra{i_{t}\neq i_{t_1}}\cdot \prod_{t\in [1,t_2)}\indi\sbra{i_{t}\neq i_{t_2}} \cdot \indi\sbra{i_{t_1}\neq i_{t_2}\in L}\cdot \alpha_{\{i_{t_1},i_{t_2}\}}.\end{split}\end{align} 
From the above paragraph, it follows that
\[L_{1,2}^\alpha(f|_\rho)= (NW)^{-1} \sum_{T\in \binom{[2d]}{2}}\Delta_T \le \binom{2d}{2}\cdot (NW)^{-1}\cdot \max_{T\in\binom{[2d]}{2}} \abs{\Delta_T}.\]
We will now show that for all $T\in \binom{[2d]}{2}$, we have $\abs{\Delta_T}\le O(d)\cdot M\cdot K^{-1/2}$. This, along with the above equation (and the fact that $M=KNW$) would imply that $L_{1,2}^\alpha(f|_\rho)\le \sqrt{K} \cdot O(d^3)$ as desired. We now show the desired bound of $\Delta_T\le O(d)\cdot M\cdot K^{-1/2}$.
 
We will group the terms $t\in[2d]$ into circular intervals $[t_1,t_2),[t_2,t_1)$. Define matrices $\tilV_{[t_1,t_2)}$ and $\tilV_{[t_2,t_1)}$ such that for all $\iwk{t_1},\iwk{t_2}\in[M],S_{t_1}\subseteq [N]$, we have 
\begin{align}\begin{split}\label{eq:dqc1_7_higher} \tilV_{[t_1,t_2)}[\iwks{t_1}\gap\iwk{t_2}]&=\sum_{\iwk{t_1+1},\ldots,\iwk{t_2-1}\in[M]}  \pbra{ \prod_{ t\in [t_1,t_2)}V^\rho_t[\iwk{t}\gap\iwk{t+1}]}\cdot \indi\sbra{S_{t_1}=\bigoplus_{\substack{t\in (t_1,t_2)\\ i_t\in L }} \{i_t\}}\\
&\cdot \prod_{t\in (t_1,t_2)}\indi \sbra{i_{t}\neq i_{t_2}},\text{ and }\end{split}\\ \label{eq:dqc1_8}
\begin{split}\tilV_{[t_2,t_1)}[\iwk{t_2}\gap\iwks{t_1}]&=\sum_{\iwk{t_2+1},\ldots,\iwk{t_1-1}\in[M]}  \pbra{ \prod_{ t\in [t_2,t_1)}V^\rho_t[\iwk{t}\gap\iwk{t+1}]}\cdot \indi\sbra{S_{t_1}=\bigoplus_{\substack{t\in (t_2,t_1)\\ i_t\in L }} \{i_t\}}\\
&\cdot \prod_{t\in [1,t_1)}\indi \sbra{i_{t}\neq i_{t_2},i_{t_1}},\end{split}\end{align}
Define $\alpha'_{\{i_{t_1},i_{t_2}\}}=\alpha_{\{i_{t_1},i_{t_2}\}}\cdot \indi\sbra{i_{t_1},i_{t_2}\in L}$ for all $i_{t_1},i_{t_2}\in[N]$. Combining~\Cref{eq:dqc1_delta_higher} with~\Cref{eq:dqc1_7_higher,eq:dqc1_8}, and using the fact that
\[ \prod_{t\in [1,t_1)}\indi\sbra{i_{t}\neq i_{t_1}}\cdot \prod_{t\in [1,t_2)}\indi\sbra{i_{t}\neq i_{t_2}} = \indi\sbra{i_{t_1}\neq i_{t_2}}\cdot  \prod_{t\in [1,t_1)}\indi \sbra{i_{t}\neq i_{t_2},i_{t_1}}\cdot \prod_{t\in (t_1,t_2)}\indi \sbra{i_{t}\neq i_{t_2}}  \]
we obtain that
\begin{align*}
        \Delta_T& = \sum_{\substack{\iwk{t_1},\iwk{t_2}\in[M]\\ S_{t_2}\subseteq[N]}} \tilV_{[t_1,t_2)}[  \iwks{t_1}\gap \iwk{t_2}]\cdot \tilV_{[t_2,t_1)}[\iwk{t_2}\gap\iwks{t_1}]\cdot \alpha'_{\{i_{t_1},i_{t_2}\}}\cdot\indi\sbra{i_{t_1}\neq i_{t_2}}  \\
        |\Delta_T|&\le \| \tilV_{[t_1,t_2)}\|_\frob\cdot \| \tilV_{[t_2,t_1)}\|_\frob  \tag{by \Cref{fact:frob_product}}
\end{align*}
It suffices to upper bound $\|\tilV_{[t_1,t_2)}\|_\frob\cdot \| \tilV_{[t_2,t_1)}\|_\frob$. We consider two cases:

\paragraph*{Case 1:} $t_1=1$. In this case, observe that $1\in [t_1,t_2)$. We observe that the matrix $\tilV_{[t_1,t_2)}$ in~\Cref{eq:dqc1_7_higher} is obtained by applying~\Cref{corollary:main_lemma_2} on the matrices $V_{t_1}^\rho,\ldots,V_{t_2-1}^\rho$ (with omitted set $T$) in reverse order with transposes and hence, $\|\tilV_{[t_1,t_2)}\|_\frob \le O(d)\cdot \|V_1^\rho \|_\frob\le O(d)\cdot  \sqrt{M/K} $. Since $[1,t_1)=\emptyset$, the matrix $\tilV_{[t_2,t_1)}$ is obtained by applying~\Cref{lem:main_lemma} on matrices $V_{t_2}^\rho,\ldots,V_{t_1-1}^\rho$ (with omitted set $T$) in this order and hence, $\|\tilV_{[t_2,t_1)}\|_\frob \le \sqrt{M}$. In this case, \[\|\tilV_{[t_1,t_2)}\|_\frob\cdot \| \tilV_{[t_2,t_1)}\|_\frob \le O(d)\cdot \sqrt{M/K}\cdot \sqrt{M}\le O(d)\cdot M/\sqrt{K}.\]
\paragraph*{Case 2:} $t_1>1$. In this case, observe that $1\in (t_2,t_1)$. Observe that the matrix $\tilV_{[t_2,t_1)}$in~\Cref{eq:dqc1_8} is obtained by applying~\Cref{corollary:main_lemma_3} on matrices $V_{t_2}^\rho,\ldots,V_{t_1-1}^\rho$ (with omitted set $T$ and $t^*=1$, which is well-defined) in forward order, and hence, $\|\tilV_{[t_2,t_1)}\|_\frob \le O(d)\cdot \|V^\rho_1\|_\frob \le O(d)\cdot\sqrt{M/K}$. As before, the matrix $\tilV_{[t_1,t_2)}$ in~\Cref{eq:dqc1_7_higher} is obtained by applying~\Cref{corollary:main_lemma_2} on the matrices $V_{t_1}^\rho,\ldots,V_{t_2-1}^\rho$ (with omitted set $T$) in reverse order with transposes and hence, $\|\tilV_{[t_1,t_2)}\|_\frob \le  \sqrt{M}$. In this case, \[\|\tilV_{[t_1,t_2)}\|_\frob\cdot \| \tilV_{[t_2,t_1)}\|_\frob \le O(d)\cdot \sqrt{M}\cdot \sqrt{M/K} \le O(d)\cdot M/\sqrt{K}.\]

\subsection{Tightness of our Bounds for $\DQC{1}$}
\label{sec:tightness_DQC}

\begin{figure}
\centering
\mbox{ 
\Qcircuit @C=1em @R=.7em {
\lstick{} & &  &  \qw & \multigate{1}{H_N}  & \multigate{1}{O_{x^{(d)}}} &     \qw &  \cds{1}{\cdots\cdots}  & \multigate{1}{H_N} & \multigate{1}{O_{x^{(1)}}} & \qw \\
\lstick{}  & &  &  \qw  & \ghost{H_N}  & \ghost{O_{x^{(d)}}} & \qw & \cds{1}{\cdots\cdots} & \ghost{H_N} & \ghost{O_{x^{(1)}}} &  \qw  \inputgroupv{1}{2}{.8em}{.8em}{n} \\ 
\lstick{}  & \ket{0}   &  &  \gate{H} & \ctrl{-1}  & \ctrl{-1}\qw    & \qw  &\qw & \ctrl{-1} \qw   &\ctrl{-1} \qw & \gate{H}  &\meter & & \text{output}\\ \\
}
}
\caption{A $d$-query $\DQC{1}$ algorithm with $n$ maximally mixed qubits.}
\label{fig:tight_dqc1}
\end{figure}
In this section, we will show that the dependence on $k$ and $N$ is tight in~\Cref{thm:main_theorem_dqck}.

\paragraph*{Dependence on $N$.} First, we consider the case $k=1$ and show that $\DQC{1}$ algorithms can indeed achieve level-$\ell$ Fourier growth of roughly $N^{(\ell-2)/2}$, i.e., the dependence on $N$ is tight in~\Cref{thm:main_theorem_dqck}. We will do so by producing an algorithm on inputs of length $dN$ which makes $d$ oracle queries and whose level-$\ell$ Fourier growth for $\ell=d$ is $\Omega\pbra{N^{(\ell-2)/2}}$.

Let $H_N$ be the Hadamard matrix as in~\Cref{def:hadamard} and view this matrix as an $n$-qubit unitary operator. For $t\in[d]$, let $N_t$ denote the interval $((t-1) N,t N]$ so that $N_1\sqcup\ldots\sqcup N_d=[dN]$. We view the input $x\in\{-1,1\}^{Nd}$ as comprising of $d$ input strings $x^{(1)},\ldots,x^{(d)}$ of length $N$ each such that $x^{(t)}$ is supported on $N_t$. Instead of the oracle $O_x$, we will consider $d$ oracles $O_{x^{(1)}},\ldots,O_{x^{(d)}}$. Consider the $d$-query $\DQC{1}$ algorithm as in~\Cref{fig:tight_dqc1}.\footnote{ Typically, we express $\DQC{1}$ in terms of a single oracle $O_x$, as opposed to $d$ smaller oracles $O_{x^{(1)}},\ldots,O_{x^{(d)}}$, nevertheless, it is easy to embed the circuit in~\Cref{fig:tight_dqc1} into a larger one consisting only of $O_x$ oracle calls for $x=(x^{(1)},\ldots,x^{(d)})$ by applying the following sequence of operators $d$ times: $H_N\otimes \bI$, followed $O_x$, followed by the permutation matrix $\Pi$ that maps $\ket{i}\to \ket{i-N\text{ (mod }Nd)}$ for all computational basis states $i\in[Nd]$.} As we saw in~\Cref{eq:proof_overview_1}, it is not too difficult to show the bias of this algorithm is precisely
\[ f(x)=\tfrac{1}{2N}\Tr\pbra{O_{x^{(1)}}\cdot H_N \cdots O_{x^{(d)}}\cdot H_N}.\]
We observe the Fourier coefficients of $f$ correspond to subsets $S\subseteq[Nd]$ that pick exactly one element from each $N_t$. There are $N^d$ such non-zero Fourier coefficients and they are given by
\begin{align*} \widehat{f}(S)&=\tfrac{1}{2N}\sum_{\substack{i_t\in N_t\\\text{ for }t\in[d]}}(-1)^{\langle i_1,i_2\rangle+ \ldots + \abra{i_d,i_1}}\cdot \frac{1}{N^{d/2}}\cdot  \indi\sbra{S=\{i_1, \ldots, i_d\}}. 
\end{align*}
Each such $S$ uniquely identifies $i_1\in N_1,\ldots,i_d\in N_d$ and we set $\alpha_S:=(-1)^{\langle i_1,i_2\rangle+\ldots+\langle i_d,i_1\rangle}$. Thus, we obtain that the level-$d$ Fourier growth is at least
\[N^d\cdot \frac{1}{2N} \cdot \frac{1}{N^{d/2}}\ge \Omega\pbra{N^{(d-2)/2}}. \]
This completes the proof.

\paragraph*{Dependence on $k$.} It is clear to see that a $\DQC{k}$ algorithm can solve the Forrelation problem on inputs of length $2^k$, since we can run the $k$-qubit Forrelation circuit on the clean qubits. As the Forrelation function on $2^k$-bit inputs has level-two Fourier growth of $2^{k/2}$, this saturates the bound from~\Cref{thm:main_theorem_dqck} for level two.

\section{Fourier Growth of $\hBQP$: Level-Three}
\label{sec:hbqp}

In this section, we will show the $\ell=3$ case of~\Cref{thm:main_theorem_hbqp}, i.e., the level-three Fourier growth of a $d$-query $\hBQP$ algorithm is at most $O(d^7\sqrt{N})$. Throughout this section, to simplify notation, we use the shorthand $\iwk{t}$ to denote $(i_t,w_t)$ where $i\in[N],w\in[W]$ for $N=2^n,W=2^w$. We use $\bI$ to denote the identity matrix, where the dimension is implicit.

Given the expression for the acceptance probability of a $d$-query $\hBQP$ algorithm (\Cref{claim:acceptance_probability_hbqp}), it is not too difficult to derive an expression for the Fourier coefficients under any restriction -- this part is similar to the proof of~\Cref{claim:dqck_fourier_coefficients} from~\Cref{claim:dqck_acceptance_probability}. We obtain the following claim, whose proof is deferred to~\Cref{sec:appendix_fourier_coefficients}. 

\begin{claim}\label{claim:fourier_coefficients_hbqp}
Let $f(x)$ be the acceptance probability of a $d$-query $\hBQP$ algorithm and $\rho\in\{-1,1,\star\}^N$ be any restriction that leaves the coordinates in $L\subseteq[N]$ free and fixes the rest. Then, there exist matrices $V_1^\rho,\ldots,V_{2d+2}^\rho \in \mathbb{C}^{M\times M}$ such that for all $S\subseteq L$, 
\begin{align}\label{eq:hbqp_3} \widehat{f|_\rho}(S)&=M^{-1} \sum_{\iw{1},\ldots,\iw{2d+2}\in[M]}  F_{\iw{1},\iw{d+2}}\cdot \prod_{t\in[2d+2]} V^\rho_t[\iw{t}\gap\iw{t+1}] \cdot   \indi\sbra{ \bigoplus_{\substack{t\in[2d+2]\setminus\{1,d+2\}\\i_t\in L}} \{i_t\} =S}
\end{align}
where $\|V_t^\rho\|_\op \le 1$ for all $t\in[2d+2]$.
\end{claim}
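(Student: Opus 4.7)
The plan is to start from~\Cref{claim:acceptance_probability_hbqp} and extract Fourier coefficients directly by expanding the two matrix inner products as sums over computational basis states, after absorbing the restriction into the unitaries.

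First, I would factor the restriction out of the oracle. Without loss of generality assume $\rho$ leaves the coordinates $i \le \tilN$ free and fixes $i > \tilN$ to $\rho_i \in \{-1,+1\}$. The restricted oracle factors as $O_{\rho(x)} = D_\rho \cdot O'_x$, where $D_\rho$ is the diagonal matrix with $D_\rho[i,i] = \rho_i$ for $i > \tilN$ and $1$ for $i \le \tilN$, and $O'_x$ is diagonal with $O'_x[i,i] = x_i$ for $i \le \tilN$ and $1$ otherwise. Since $\|D_\rho \otimes \bI\|_\op = 1$, each $D_\rho \otimes \bI$ can be absorbed into the adjacent unitary. Writing the $2d+2$ unitaries of~\Cref{claim:acceptance_probability_hbqp} in cyclic order as $V_1 := U_1^\dagger, \ldots, V_{d+1} := U_{d+1}^\dagger, V_{d+2} := U_{d+1}, \ldots, V_{2d+2} := U_1$, this absorption yields modified matrices $V_1^\rho,\ldots,V_{2d+2}^\rho$ with $\|V_t^\rho\|_\op \le 1$ by submultiplicativity of the operator norm.

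Next, I would expand by inserting a resolution of identity between each oracle and unitary. In the first factor $\bra{\iw{1}} V_1^\rho (O'_x \otimes \bI) \cdots (O'_x \otimes \bI) V_{d+1}^\rho \ket{\iw{d+2}}$, this introduces summation indices $\iw{2},\ldots,\iw{d+1}$, and the second factor analogously introduces $\iw{d+3},\ldots,\iw{2d+2}$. Each of the $2d$ applications of $O'_x \otimes \bI$ evaluated at an intermediate state $\ket{\iw{t}}$ contributes the scalar $x_{i_t}$ when $i_t \le \tilN$ and $1$ otherwise. Concatenating both halves with the cyclic convention $\iw{2d+3} \equiv \iw{1}$ gives
\begin{equation*}
f|_\rho(x) = M^{-1} \sum_{\iw{1},\iw{d+2}} F_{\iw{1},\iw{d+2}} \sum_{\substack{\iw{2},\ldots,\iw{d+1}\\ \iw{d+3},\ldots,\iw{2d+2}}} \pbra{\prod_{t=1}^{2d+2} V_t^\rho[\iw{t},\iw{t+1}]} \cdot \prod_{\substack{t \in [2d+2]\setminus\{1,d+2\} \\ i_t \le \tilN}} x_{i_t}.
\end{equation*}

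Finally, since $x \in \{-1,+1\}^{\tilN}$, the trailing product equals $\chi_T(x)$ where $T = \bigoplus_{t \in [2d+2]\setminus\{1,d+2\},\ i_t \le \tilN} \{i_t\}$. Grouping the terms of the inner sum by the value of $T$ and reading off the coefficient of $\chi_S(x)$ for each $S \subseteq [\tilN]$ replaces the product of $x_{i_t}$'s by $\indi\sbra{T = S}$, which is exactly the formula in the statement. This proof is largely mechanical; the only nontrivial check is the operator norm bound $\|V_t^\rho\|_\op \le 1$, which is immediate because $D_\rho$ is a diagonal $\pm 1$ matrix of operator norm $1$.
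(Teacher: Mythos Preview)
Your proposal is correct and follows essentially the same route as the paper: both start from \Cref{claim:acceptance_probability_hbqp}, relabel the $2d+2$ unitaries cyclically as $V_t$, absorb the restriction by left-multiplying each $V_t$ for $t\notin\{1,d+2\}$ by the diagonal $\pm 1$ matrix $D_\rho$, expand over intermediate basis states, and read off the Fourier coefficients from the resulting monomial $\prod_{i_t\le\tilN} x_{i_t}$. The only cosmetic difference is that the paper first writes out the unrestricted expansion and then substitutes $\rho(x)$, whereas you factor $O_{\rho(x)}=D_\rho\cdot O'_x$ up front; the computations are otherwise identical.
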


Now that we have an expression for the Fourier coefficients, we turn our attention to proving Fourier growth bounds. Let $\tilA=A\cap L,\tilB=B\cap L,\tilC=C\cap L$, where $A,B,C$ are as in~\Cref{def:alpha}. Let $D=[2d+2]\setminus \{1,d+2\}$. We use $I$ to denote $\iw{1},\ldots,\iw{2d+2}$. As mentioned before, we will only be able to bound $L_{1,3}^{\alpha(\gamma)}(f|_\rho)$ where $\gamma\in[-1,1]^{3N}$ and $\alpha(\gamma)$ is as in~\Cref{def:alpha}. Fix any such $\alpha(\gamma)$. Recalling~\Cref{def:alpha}, for $\alpha(\gamma)_S$ to be non-zero, we must have $S=\{a,b,c\}$ where $a\in \tilA,b\in \tilB, c\in \tilC$. Fix any such $S$. For $\bigoplus_{\substack{t\in D\\i_t\in L}} \{i_t\} =S$, there exist distinct $t_1,t_2,t_3\in D$ such that 
\[i_{t_1}=a,i_{t_2}=b,i_{t_3}=c\text{ and } \bigoplus_{\substack{t\in D\setminus\{t_1,t_2,t_3\}\\i_t\in L}} \{i_t\} =\emptyset,\]
however, these $t_1,t_2,t_3$ may not be unique. Instead of defining them uniquely, we add each of them but with a multiplier function that cancels out redundancies. 

Let $g(i_1,i_2)=(-1)^{\indi\sbra{i_1=i_2}}$ for $i_1,i_2\in[N]$ be the function as before. For $q\in[3]$, let 
\begin{equation} r_{t_q}[I]:=(-1)^{\# \{ t<t_q, t\in D\gap i_t=i_{t_q}\}}=\prod_{t<t_q,t\in D} g(i_t,i_{t_q}).\label{eq:r_t_q}\end{equation}
\begin{equation} s_{t_q}[I]:=(-1)^{\# \{ t>t_q, t\in D\gap i_t=i_{t_q}\}}=\prod_{t>t_q,t\in D} g(i_t,i_{t_q}) \label{eq:s_t_q}.\end{equation}
 
\begin{claim}
    Consider $\Phi(I)$ defined as follows.
    \[\Phi(I):=\sum_{\substack{ t_1,t_2,t_3\in D\\\text{distinct}}} {r_{t_1}[I]}\cdot \indi\sbra{i_{t_1}=a}\cdot  {r_{t_2}[I]}\cdot \indi\sbra{i_{t_2}=b}\cdot  {r_{t_3}[I]}\cdot \indi\sbra{i_{t_3}=c} \]
    Whenever  $\bigoplus_{\substack{t\in D\\i_t\in L}} \{i_t\} =\{a,b,c\}$, we have $\Phi(I)=1$.\label{claim:phi}
\end{claim}
\begin{proof}[Proof of~\Cref{claim:phi}] Consider\:
\begin{align*} \Phi(I):=&\sum_{\substack{ t_1,t_2,t_3\in D\\\text{distinct}}} {r_{t_1}[I]}\cdot \indi\sbra{i_{t_1}=a}\cdot  {r_{t_2}[I]}\cdot \indi\sbra{i_{t_2}=b}\cdot  {r_{t_3}[I]}\cdot \indi\sbra{i_{t_3}=c} \\
&=\sum_{ \substack{t_2, t_3\in D\\\text{ distinct }}} {r_{t_2}[I]}\cdot \indi\sbra{i_{t_2}=b}\cdot  {r_{t_3}[I]}\cdot \indi\sbra{i_{t_3}=c}\cdot \sum_{\substack{t_1\in D\\t_1\neq t_2,t_3}} {r_{t_1}[I]}\cdot \indi\sbra{i_{t_1}=a}\\
&= \sum_{ \substack{t_2, t_3\in D\\\text{ distinct }}} {r_{t_2}[I]}\cdot \indi\sbra{i_{t_2}=b}\cdot  {r_{t_3}[I]}\cdot \indi\sbra{i_{t_3}=c}\cdot 1
\end{align*}
where the last equality follows because for any fixed $t_2,t_3$, we have $i_{t_1}=a$ for an odd number of $t_1$ (because of the condition $\oplus_{\substack{t\in D\\i_t\in L}}\{i_t\}=\{a,b,c\}$) and any such $t_1$ is distinct from $t_2,t_3$ (since $b,c\neq a$) and as $t_1$ varies over these possible times, the quantity ${r_{t_1}[I]}$ alternates between $1$ and $-1$ (at the first time it is $+1$, the second time it is $-1$, and so on). We can now repeat the same argument to obtain
\begin{align*}
\Phi(I)&:=\sum_{t_2\in D}  {r_{t_2}[I]}\cdot \indi\sbra{i_{t_2}=b}\cdot \pbra{\sum_{\substack{t_3\in D\\t_3\neq t_2}} {r_{t_3}[I]}\cdot \indi\sbra{i_{t_3}=c}}= \sum_{t_2\in D} {r_{t_2}[I]}\cdot \indi\sbra{i_{t_2}=b}=1.
\end{align*}
\end{proof}
Therefore, for any $a\in \tilA,b\in \tilB,c\in\tilC$ we have
\begin{align*}
\indi\sbra{\bigoplus_{\substack{t\in D \\i_t\in L}} \{i_t\} =\{a,b,c\}}&= \indi\sbra{\bigoplus_{\substack{t\in D\\i_t\in L}} \{i_t\} =\{a,b,c\}} \cdot \Phi(I)\\
&=\sum_{\substack{t_1,t_2,t_3\in D\\\text{ distinct }}} \indi\sbra{\bigoplus_{\substack{t\in D\setminus\{t_1,t_2,t_3\}\\i_t\in L}} \{i_t\} =\emptyset }\\
&\cdot \indi\sbra{i_{t_1}=a}\cdot \indi\sbra{i_{t_2}=b}\cdot \indi\sbra{i_{t_3}=c}\cdot  {r_{t_1}(I)}\cdot {r_{t_2}(I)}\cdot  {r_{t_3}(I)}
\end{align*}

Now, we multiply both sides by $\alpha(\gamma)_{a,b,c}$ and sum over all possible $a \in \tilA,b\in \tilB,c\in \tilC$ to obtain 
\begin{align*}
&L_{1,3}^{\alpha(\gamma)}[f|_\rho]\triangleq \sum_{a\in \tilA,b\in \tilB,c\in \tilC} \alpha(\gamma)_{a,b,c}\cdot \widehat{f|_\rho}(\{a,b,c\}) \\
&=M^{-1}\sum_{\substack{a\in\tilA,b\in \tilB,c\in \tilC\\ \iw{1},\ldots\iw{2d+2}\in[M]}}F_{\iw{1},\iw{d+2}}\cdot \prod_{t\in[2d+2]}V_t^\rho[\iw{t}\gap\iw{t+1}]\cdot \indi\sbra{\bigoplus_{\substack{t\in D\\i_t\in L}} \{i_t\} =\{a,b,c\}}\cdot \alpha(\gamma)_{a,b,c}\tag{by \Cref{claim:fourier_coefficients_hbqp}}\\
&=M^{-1}\sum_{\substack{a\in\tilA,b\in \tilB,c\in \tilC\\ \iw{1},\ldots\iw{2d+2}\in[M]}} F_{\iw{1},\iw{d+2}}\cdot \prod_{t\in[2d+2]}V_t^\rho[\iw{t}\gap\iw{t+1}]\cdot \indi\sbra{\bigoplus_{\substack{t\in D\setminus\{t_1,t_2,t_3\}\\i_t\in L}} \{i_t\} =\emptyset}\\
&\cdot \sum_{\substack{t_1,t_2,t_3\in D\\\text{ distinct }}}  \indi\sbra{i_{t_1}=a}\cdot \indi\sbra{i_{t_2}=b}\cdot \indi\sbra{i_{t_3}=c}\cdot  {r_{t_1}(I)}\cdot  {r_{t_2}(I)}\cdot  {r_{t_3}(I)}\cdot \alpha(\gamma)_{i_{t_1},i_{t_2},i_{t_3}}\\
&=M^{-1}\sum_{\substack{ \iw{1},\ldots\iw{2d+2}\in[M] \\ t_1,t_2,t_3\in D\text{ distinct }}} F_{\iw{1},\iw{d+2}}\cdot \prod_{t\in[2d+2]}V_t^\rho[\iw{t}\gap\iw{t+1}]\cdot \indi\sbra{\bigoplus_{\substack{t\in D\setminus\{t_1,t_2,t_3\}\\i_t\in L}} \{i_t\} =\emptyset}\\
&\cdot \indi\sbra{i_{t_1}\in \tilA}\cdot \indi\sbra{i_{t_2}\in \tilB}\cdot \indi\sbra{i_{t_3}\in \tilC}\cdot {r_{t_1}(I)}\cdot  {r_{t_2}(I)}\cdot  {r_{t_3}(I)}\cdot \alpha(\gamma)_{i_{t_1},i_{t_2},i_{t_3}}
\end{align*}
where the last line follows from the fact that $\sum_{a\in \tilA}\indi\sbra{i_{t_1}=a}=\indi\sbra{i_{t_1}\in \tilA}$. Fix any $i_{t_2}^*\in \tilB$ and distinct $t_1,t_2,t_3\in D$. (There are at most $N$ possibilities for such $i_{t_2}^*$ and $O(d^3)$ possibilities for $(t_1,t_2,t_3)$.) Define
\begin{align} \label{def:hbqp_delta} \begin{split}\Delta_{t_1,t_2,t_3,i_{t_2}^*}^{\gamma}&:=
\sum_{\iw{1},\ldots,\iw{2d+2}\in[M]} F_{\iw{1},\iw{d+2}}\cdot\prod_{t\in[2d+2]}V_t^\rho[\iw{t}\gap\iw{t+1}]  \cdot \indi\sbra{\bigoplus_{\substack{t\in D\setminus\{t_1,t_2,t_3\}\\\text{with }i_t\in L}} \{i_t\}=\emptyset}\\
&\cdot \indi\sbra{i_{t_1}\in \tilA}\cdot \indi\sbra{i_{t_2}=i_{t_2}^*}\cdot \indi\sbra{i_{t_3}\in \tilC} \cdot  {r_{t_1}(I)}\cdot  {r_{t_2}(I)}\cdot {r_{t_3}(I)}\cdot \alpha(\gamma)_{i_{t_1},i_{t_2}^*,i_{t_3}} \end{split}\end{align}
Substituting this in the expression for the Fourier growth, we have
\begin{align*}  L_{1,3}^{\alpha(\gamma)}
&= M^{-1} \sum_{\substack{\text{distinct } t_1,t_2,t_3\\\text{in }[2d+2]\setminus\{1,d+2\},\\ i_{t_2}^*\in \tilB}} \Delta_{t_1,t_2,t_3,i_{t_2}^*}^{\gamma} \tag{from ~\Cref{def:hbqp_delta,claim:fourier_coefficients_hbqp}} \\
&\le M^{-1} \cdot O(d^3 N)\cdot \max_{\substack{\text{distinct } t_1,t_2,t_3\\\text{in }[2d+2]\setminus\{1,d+2\},\\ i_{t_2}^*\in \tilB}}\abs{\Delta_{t_1,t_2,t_3,i_{t_2}^*}^{\gamma}} . \end{align*}
We will show that for each distinct $t_1,t_2,t_3\in [2d+2]\setminus\{1,d+2\}$ and $i_{t_2}^*\in \tilB$, we have $\abs{\Delta^\gamma_{t_1,t_2,t_3,i_{t_2}^*}}\le O(d^4)\cdot \sqrt{MW}$. Substituting this above, we would get 
\[ L_{1,3}^{\alpha(\gamma)}(f|_\rho) \le M^{-1}\cdot O(d^7 N)\cdot \sqrt{MW}=O(d^7)\cdot  \sqrt{N}\cdot \sqrt{NMW}\cdot M^{-1}\le O(d^7)\cdot \sqrt{N}, \]
where we used the fact that $M=NW$. It now suffices prove the bound  $\Delta^\gamma_{t_1,t_2,t_3,i_{t_2}^*}\le O(d^4)\cdot \sqrt{MW}$. 

Fix any distinct $t_1,t_2,t_3\in D$ and $i_{t_2}^*\in \tilB$. We will now use \Cref{def:alpha} to get
\[\alpha(\gamma)_{i_{t_1},i_{t_2},i_{t_3}}= \tilH(i_{t_1},i_{t_2}^*)\cdot  \tilH(i_{t_3},i_{t_2}^*)\cdot \gamma_{i_{t_1}}\cdot \gamma_{i_{t_2}^*}\cdot \gamma_{i_{t_3}}\]
We will use this to encode the action of multiplication by $\alpha(\gamma)_{i_{t_1},i_{t_2}^*,i_{t_3}}$ using a matrix product with diagonal matrices. Define matrices $P_t$ as follows. Let $\Gamma_{t_1},\Gamma_{t_2},\Gamma_{t_3}$ be $M\times M$ diagonal matrices with $[-1,1]$-valued entries defined as follows. For $\iw{t_1},\iw{t_2},\iw{t_3}\in [M]$, let
\begin{equation}\label{eq:diag_1} \Gamma_{t_1}[\iw{t_1}\gap\iw{t_1}]= \begin{cases}\gamma_{i_{t_1}}\cdot \tilH(i_{t_1},i_{t_2}^*) &\text{if }i_{t_1}\in \tilA\\ 0 &\text{otherwise.}\end{cases} \end{equation}
\begin{equation} \label{eq:diag_2} \Gamma_{t_2}[\iw{t_2}\gap \iw{t_2}]=\begin{cases} \gamma_{i_{t_2}^*} &\text{if } i_{t_2}=i_{t_2}^* \\ 0&\text{otherwise.}\end{cases} \end{equation}
\begin{equation}\label{eq:diag_3} \Gamma_{t_3}[\iw{t_3}\gap\iw{t_3}]= \begin{cases}\gamma_{i_{t_3}}\cdot \tilH(i_{t_3},i_{t_2}^*) &\text{if }i_{t_3}\in  \tilC\\ 0&\text{otherwise.}\end{cases}\end{equation}
Let $P'_{t_1}:= \Gamma_{t_1}\cdot V_{t_1}^\rho$,  $P'_{t_2}:= \Gamma_{t_2}\cdot V_{t_2}^\rho$, and $P'_{t_3}:=\Gamma_{t_3}\cdot V_{t_3}^\rho$ and for all $t\neq t_1,t_2,t_3$, set $P'_t=V_t^\rho$. Finally, define a diagonal matrix $D^{(i_{t_2}^*)}=(\id-2\ket{i_{t_2}^*}\bra{i_{t_2}^*})\otimes \id$, or equivalently, for all $I\in[M]$, let
\[
D^{(i_{t_2}^*)}[I,I]:=\begin{cases} -1 & \text{if }i=i_{t_2^*} \\ 1 &\text{otherwise} \end{cases}.\] 
For $t\in D,t<t_2$, set $P_t=D^{(i_{t_2}^*)}\cdot P'_t$ and for all other $t$, set $P_t= P'_t$.
Firstly, observe that 
\begin{align} \|P_t\|_\op\le 1\text{  for all }t\in[d], \label{eq:hbqp_5}
\end{align}
since  $\gamma\in[-1,1]^{3N}$ and $\|D^{(i_{t_2}^*)}\|_\op,\|\Gamma_t\|_\op, \|V_t^\rho\|_\op\le 1$. Secondly, observe that \begin{align} \label{eq:hbqp_6}\|P_{t_2}\|_\frob\le \sqrt{W},
\end{align}
since $P_{t_2}=\Gamma_{t_2}\cdot V_{t_2}^\rho$ and multiplying by the matrix $\Gamma_{t_2}$ has the effect of zeroing out all but $W$ rows (only the rows indexed by $i_{t_2}^*$ survive), and each row of $V_{t_2}$ has norm at most one. Furthermore, recalling the definition of $r_{t_2}(I)$ in~\Cref{eq:r_t_q}, we see that multiplication by $D^{(i_{t_2}^*)}$ in the appropriate locations has the effect of multiplying by ${r_{t_2}(I)}$. This construction allows us to simplify~\Cref{def:hbqp_delta} as
\begin{align}  \label{eq:hbqp_7}\begin{split}
\Delta_{t_1,t_2,t_3,i_{t_2}^*}^{\gamma}&= 
\sum_{\iw{1},\ldots,\iw{2d+2}\in[M]} F_{\iw{1},\iw{d+2}}\cdot \prod_{t\in[2d+2]}P_t[\iw{t}\gap\iw{t+1}] \cdot \indi\sbra{\bigoplus_{\substack{t\in D\setminus\{t_1,t_2,t_3\}\\\text{with }i_t\in L}}\{ i_t\}=\emptyset}\\
&\cdot {r_{t_1}(I)}\cdot {r_{t_3}(I)}.\end{split}\end{align}
Because we are imposing the constraint $\bigoplus_{\substack{t\in D\setminus\{t_1,t_2,t_3\}\\\text{with }i_t\in L}}\{ i_t\}=\emptyset$, recalling~\Cref{eq:r_t_q,eq:s_t_q}, we must have
\[ {r_{t_1}(I)}= {s_{t_1}(I)}\quad \text{ and } {r_{t_3}(I)}={s_{t_3}(I)}. \]
This is because for all $q\in\{1,3\}$, the number of times $i_{t_q}$ occurs prior to $t_q$ plus the number of times $i_{t_q}$ occurs after $t_q$ must be even, otherwise, the set  $\bigoplus_{\substack{t\in D\setminus\{t_1,t_2,t_3\}\\\text{with }i_t\in L}}\{ i_t\}$ would contain $i_{t_q}$. 

Let $D_L=D\cap [1,d+1]$ and $D_R=D\cap [d+2,2d+2]$. For $q\in\{1,3\}$, let $\eta(q)=D_L$ if $t_q\in D_L$ and $\eta(q)=D_R$ if $t_q\in D_R$. 
Define matrices $\tilP_{[1,d+1]}$ and $\tilP_{[d+2,2d+2]}$ such that for all $\iw{1},\iw{d+2}\in[M],S_{d+2}\subseteq[N],$ we have  
\begin{align} \label{eq:hbqp_8}\begin{split} \tilP_{[1,d+1]}[\iws{1}\gap\iw{d+2}]&= \sum_{\iw{2},\ldots,\iw{d+1}\in[M]}\pbra{\prod_{t\in[1,d+1]}P_t[\iw{t}\gap \iw{t+1}]}\cdot \indi\sbra{S_{1}=\bigoplus_{\substack{t\in D_L\setminus\{t_1,t_2,t_3\} \\\text{with }i_t\in L}}\{i_t\}} \\
& \cdot \prod_{q\in\{1,3\}:\eta(q)=D_L}  {r_{t_q}(I)}
\end{split}\end{align}
\begin{align} \label{eq:hbqp_9} \begin{split}\tilP_{[d+2,2d+2]}[ \iw{d+2}\gap \iws{1}]&= \sum_{\iw{d+3},\ldots,\iw{2d+2}\in[M]}\pbra{\prod_{t\in[d+2,2d+2]}P_t[\iw{t}\gap \iw{t+1}]}\cdot \indi\sbra{S_{1}=\bigoplus_{\substack{t\in D_R\setminus\{t_1, t_2,t_3\}\\\text{ with }i_t\in L}}\{i_t\}} \\
&\cdot \prod_{q\in\{1,3\}:\eta(q)=D_R}  {s_{t_q}(I)}\end{split} \end{align}
Recall that $D_L\cup D_R=D$ and $[1,d+1]\cup[d+2,2d+2]=[1,2d+2]$. Plugging in~\Cref{eq:hbqp_8,eq:hbqp_9} into~\Cref{eq:hbqp_7}, we have
\begin{align*}
\Delta^\gamma_{t_1,t_2,t_3,i_{t_2}^*} &=\sum_{\substack{\iw{1},\iw{d+2}\in [M]\\S_{d+2}\subseteq[N]}} F_{\iw{1},\iw{d+2}}\cdot \tilP_{[1,d+1]}[\iws{1}\gap\iw{d+2}]\cdot \tilP_{[d+2,2d+2]}[\iw{d+2}\gap \iws{1}] \\
&\le \sum_{\substack{\iw{1},\iw{d+2}\in [M]\\S_{d+2}\subseteq[N]}} \abs{\tilP_{[1,d+1]}[\iws{1}\gap\iw{d+2}]}\cdot \abs{\tilP_{[d+2,2d+2]}[\iw{d+2}\gap \iws{1}]} \tag{since $F_{\iwk{1},\iwk{d+2}}\in\{0,1\}$}\\
&\le \|\tilP_{[1,d+1]}\|_\frob \cdot  \|\tilP_{[d+2,2d+2]}\|_\frob \tag{by~\Cref{fact:frob_product}}. 
\end{align*} 
It suffices to show upper bounds on $\|\tilP_{[1,d+1]}\|_\frob \cdot \|\tilP_{[d+2,2d+2]}\|_\frob$. We consider two cases separately.

\paragraph{Case 1: $\eta(1)\neq \eta(3)$.} In this case, either $t_1\in D_L$ and $t_3\in D_R$ or $t_1\in D_R$ and $t_3\in D_L$. For simplicity, consider the case that $t_1\in D_L,t_3\in D_R$ and the analysis for the other case is identical. In this case, we see that $\{t\in D,t>t_3\}=(t_3,2d+2]$. Thus, $\tilP_{[d+2,2d+2]}$ is the matrix one would obtain by applying~\Cref{corollary:main_lemma_2} to the matrices $P_{d+2},\ldots, P_{2d+2}$ (with parameters $\gamma=2$ and $t_1\leftarrow t_3$ and omitted set $\{t_1,t_2,t_3,1\}$). This implies that 
\[ \|\tilP_{[d+2,2d+2]}\|_\frob \le O(d)\cdot \min_{t\in[d+2,2d+2]}\|P_t\|_\frob. \]
Similarly, we see that $\{t\in D,t<t_1\}=[2,t_1)$ and hence, the matrix $\tilP_{[1,d+1]}$ is the matrix one would obtain by applying~\Cref{corollary:main_lemma_2} in reverse order to the transpose of the matrices $P_{1},\ldots, P_{d+1}$ (with parameters $\gamma=2$ and $t_1\leftarrow t_1$ and omitted set $\{t_1,t_2,t_3,d+2\}$). This implies that 
\[ \|\tilP_{[1,d+1]}\|_\frob \le O(d)\cdot \min_{t\in[1,d+1]}\|P_t\|_\frob. \]
Altogether, since $\|P_{t_2^*}\|_\frob =\sqrt{W}$ and $\|P_t\|_\frob \le\sqrt{M}$ for all $t\in[2d+2]$, we have
\[ \|\tilP_{[d+2,2d+2]}\|_\frob\cdot \|\tilP_{[1,d+1]}\|_\frob \le O(d^2) \cdot \sqrt{MW}. \]
\paragraph{Case 2: $\eta(1)= \eta(3)$.}
 In this case, either $t_1,t_3\in D_L$ or $t_1,t_3\in D_R$. For simplicity, consider the case that $t_1,t_3\in D_R$ and the analysis for the other case is identical. In this case, observe that $\{t>t_1,t\in D\}=(t_1,2d+2]$ and $\{t>t_3,t\in D\}=(t_3,2d+2]$. Hence, $\tilP_{[d+2,2d+2]}$ is precisely the matrix obtained by applying~\Cref{corollary:main_lemma_4} to the matrices $P_{d+2},\ldots,P_{2d+2}$ (with parameters $t_1\leftarrow t_1,t_2\leftarrow t_3$ and omitted set $\{t_1,t_2,t_3,d+2\}$). This implies that 
 \[ \|\tilP_{[d+2,2d+2]}\|_\frob \le O(d^4)\cdot \min_{t\in[d+2,2d+2]}\|P_t\|_\frob \]
Similarly, observe that $\tilP_{[1,d+1]}$ is precisely the matrix obtained by applying~\Cref{lem:main_lemma} to the matrices $P_{1},\ldots,P_{d+1}$ in reverse order (with omitted set $\{t_1,t_2,t_3,d+2\}$). This implies that 
 \[ \|\tilP_{[1,d+1]}\|_\frob \le \min_{t\in[1,d+1]}\|P_t\|_\frob .\]
Altogether, since $\|P_{t_2^*}\|_\frob =\sqrt{W}$ and $\|P_t\|_\frob \le\sqrt{M}$ for all $t\in[2d+2]$, we have
\[ \|\tilP_{[d+2,2d+2]}\|_\frob\cdot \|\tilP_{[1,d+1]}\|_\frob \le O(d^4) \cdot \sqrt{MW}. \]
This completes the proof.

\section{Higher Level Fourier Growth from Lower Levels}
\label{sec:bootstrapping}
In this section, we will present the bootstrapping argument that allows us to prove higher level Fourier growth bounds from lower levels. A version of this argument originally appeared in~\cite{CHLT19}.

\begin{theorem}
    \label{thm:bootstrapping}
Let $\cF$ be any restriction-closed family of boolean functions on $\{-1,1\}^N$. Then, for all $\ell\in[N],$ we have $L_{1,\ell+1}(\cF)\le  L_{1,\ell}(\cF)\cdot  O\pbra{\sqrt{N\log{\binom{N}{\ell}} \log N}}$.
\end{theorem}
Using the fact that $L_{1,1}$ of $d$-query bounded low-degree polynomials is at most $O(d)$~\cite{IRR+21}, we obtain the following corollary, which as a special case, includes $d/2$-query $\BQP$ algorithms. 
\begin{corollary}[\Cref{thm:main_theorem_bqp}] Let $f$ be a degree-$d$ polynomial bounded on $\{\pm 1\}^N$ by $[-1,1]$ and $\rho$ be any restriction, then $L_{1,\ell}(f|_\rho)\le c^\ell\cdot  d\cdot N^{(\ell-1)/2}\cdot \log^{\ell-1}(N)\cdot \sqrt{\ell!}$ for some constant $c>0$.
\end{corollary}
Using the fact that $L_{1,2}$ of $d$-query $\DQC{k}$ algorithms is at most $O(d^3)$ as proved in~\Cref{sec:proof_dqck}, we obtain the following corollary. 
\begin{corollary}[\Cref{thm:main_theorem_dqck}] Let $f$ be a $d$-query $\DQC{k}$ algorithm on oracle of length $N$ and $\rho$ be any restriction, then $L_{1,\ell}(f|_\rho)\le \sqrt{K}\cdot c^\ell\cdot d^3\cdot N^{(\ell-2)/2}\cdot \log^{\ell-2}(N)\cdot \sqrt{\ell!}$ for some constant $c>0$.
\end{corollary}
For $\hBQP$ algorithms, combining~\Cref{thm:bootstrapping} and the level-3 bound will only give a $\poly(d)\cdot N^2$ bound on the level-6 bound, which does not suffice to prove lower bounds for \forrthree. We will require a more refined analysis described in~\Cref{sec:improved_bootstrapping}.

\begin{proof}[Proof of~\Cref{thm:bootstrapping}]
Fix any $f\in \cF$ and any signs $\alpha_S\in[-1,1]$ for $|S|=\ell+1$. We wish to bound $L_{1,\ell+1}^\alpha(f)=\sum_{|S|=\ell+1}\alpha_S\hat{f}(S)$. We randomly partition $[N]$ into $X\sqcup Y$ as follows. For each $i\in[N]$, independently include it in $X$ with probability $1/(\ell+1)$. Define 
\begin{equation} L(X):= \sum_{|S|=\ell+1} \alpha_S\cdot\hat{f}(S)\cdot \indi\sbra{|S\cap X|=1}=\sum_{\substack{i\in X\\ T\subseteq [Y],|T|=\ell}} \alpha_{T\cup\{i\}}\hat{f}(T\cup\{i\}) .\label{eq:bootstrap_0}\end{equation}
Observe that for any set $S\subseteq[N]$ of size $\ell+1$, the probability that $|S\cap X|=1$ is precisely $c(\ell)=(\ell+1)\cdot \frac{1}{\ell+1}\cdot \pbra{\frac{\ell}{\ell+1}}^\ell \ge \Omega(1).$ This implies that
\begin{equation}L_{1,\ell+1}^\alpha(f)=\mathbb{E}_X[L(X)]\cdot \tfrac{1}{c(\ell)}. \label{eq:bootstrap_1} \end{equation}
Since $c(\ell)\ge \Omega(1)$, it suffices to upper bound $\E_X[L(X)]$ and we will in fact bound $L(X)$ for any partition $X$. To do this, we express $\hat{f}(T\cup \{i\})$ as $\E_{\substack{x\sim \{\pm 1\}^X\\y\sim \{\pm 1\}^{Y}}}\sbra{f(x,y)\cdot x_i\cdot \chi_T(y)}$ where the underlying distributions are uniform -- this follows from the definition of Fourier coefficients. Substituting this in~\Cref{eq:bootstrap_0}, we obtain 
\begin{align}\begin{split} L(X)&=\sum_{\substack{i\in X\\ T\subseteq [Y],|T|=\ell}} \alpha_{T\cup\{i\}}\cdot \E_{\substack{x\sim \{\pm 1\}^X\\y\sim \{\pm 1\}^{Y}}}\sbra{f(x,y)\cdot x_i\cdot \chi_T(y)} \\
&= \E_{\substack{x\sim \{\pm 1\}^X\\y\sim \{\pm 1\}^{Y}}}\sbra{ f(x,y)\cdot \sum_{T\subseteq [Y],|T|=\ell}\chi_T(y)\cdot \sum_{i\in X} x_i\cdot \alpha_{T\cup\{i\}}}. 
\end{split}\end{align}
Consider the function $f_x$ where $f_x(y):=f(x,y)$ for all $y\in \{\pm 1\}^Y$. By definition, we have $\hat{f_x}(T)=\E_{Y\sim \{\pm 1\}^Y}[f(x,y)\cdot \chi_T(y)]$. Substituting this above, we see that
\begin{align}\begin{split} L(X)&= \E_{x\sim \{\pm 1\}^X}\sbra{  \sum_{T\subseteq [Y],|T|=\ell} \hat{f_x}(T)\cdot \sum_{i\in X} x_i\cdot \alpha_{T\cup\{i\}}}\\
&\le  \E_{x\sim \{\pm 1\}^X}\sbra{  \sum_{T\subseteq [Y],|T|=\ell} \abs{\hat{f_x}(T)}\cdot \max_{T\subseteq [Y],|T|=\ell}\abs{\sum_{i\in X} x_i\cdot \alpha_{T\cup\{i\}}}}.
\end{split}\end{align}
We now use the fact that $f_x\in \cF$ due the restriction-closed property and that $\sum_{T\subseteq [Y],|T|=\ell} \abs{\hat{f_x}(T)}=L_{1,\ell}(f_x)\le L_{1,\ell}(\cF)$ to conclude that
\[ L(X)\le L_{1,\ell}(\cF) \cdot   \E_{x\sim \{\pm 1\}^X}\max_{T\subseteq [Y],|T|=\ell}\sbra{\abs{\sum_{i\in X} x_i\cdot \alpha_{T\cup\{i\}}}}.\]
Fix any $T\subseteq [Y]$ with $|T|=\ell$. We now observe that when $x\sim \{\pm 1\}^X$, the random variable $\sum_{i\in X} x_i\cdot \alpha_{T\cup\{i\}}$ is a sum of $|X|$ many independent $[-1,1]$-random variables where $|X|\le N$, hence,
\[\Pr\sbra{\abs{\sum_{i\in X} x_i\cdot \alpha_{T\cup\{i\}}}\ge t\sqrt{N}}\le \exp(-\Omega(t^2)).\]
We now set $t=\Theta(\sqrt{\log N}\cdot \sqrt{\log\binom{N}{\ell}})$ so that we can union bound over all $T\subseteq[Y]$ with $|T|=\ell$, to conclude that with probability at least $1-1/N$ over $x\sim \{\pm 1\}^X$, for all $T\subseteq [Y]$ with $|T|=\ell$,
\[\abs{\sum_{i\in X} x_i\cdot \alpha_{T\cup\{i\}}}\le O\pbra{\sqrt{N}\cdot \sqrt{\log\binom{N}{\ell}}\cdot \sqrt{\log N}}. \] 
For all $x\in \{\pm 1\}^X$, we have that the L.H.S. above is at most $N$. This, along with the above implies that 
\begin{align*} \E_{x\sim \{\pm 1\}^X}\sbra{\max_{T\subseteq [Y],|T|=\ell}\abs{\sum_{i\in X} x_i\cdot \alpha_{T\cup\{i\}}}} &\le \frac{1}{N}\cdot N + \sqrt{N}\cdot \sqrt{\log\binom{N}{\ell}}\cdot \sqrt{\log N} \\
&\le O\pbra{\sqrt{N \log\binom{N}{\ell}\log N}}+1.
\end{align*}
Finally, we observe that if $\ell=N$, then $L_{1,\ell+1}(\cF)=0$ and if $\ell<N$, then $\log\binom{N}{\ell}\ge 1$, and hence we can ignore the $+1$ factor in the R.H.S. by absorbing into the $O()$. This completes the proof.
\end{proof}

\subsection{Improved Bootstrapping for $\hBQP$ algorithms}
\label{sec:improved_bootstrapping}
\begin{theorem} Let $\cF$ be a restriction-closed family of boolean functions of degree at most $d$ on $\{\pm 1\}^{3N}$. Let $\gamma,\gamma'\in[-1,1]^{3N}$ and $\alpha(\gamma),\beta(\gamma,\gamma')$ be signs as in~\Cref{def:alpha}. Then, for any restriction $\rho$ on $[3N]$,
$L_{1,6}^{\beta(\gamma,\gamma')}(f|_\rho)\le \max_{\gamma''} (L_{1,3}^{\alpha(\gamma'')}(\cF))\cdot O(d^3 N)$, where the maximum is over $\gamma''\in [-1,1]^{3N}$.\label{thm:bootstrapping_hBQP}
\end{theorem}
Using the bound on $L_{1,3}^{\alpha(\gamma)}$ for $\hBQP$ algorithms, we obtain the following corollary. 
\begin{corollary}[\Cref{thm:main_theorem_hbqp}]
Let $\gamma,\gamma'\in[-1,1]^{3N}$ and $\beta(\gamma,\gamma')$ be signs as in~\Cref{def:alpha}. Let $f$ be a $d$-query $\hBQP$ algorithm on oracle of length $N$. Then, $L_{1,6}^{\beta(\gamma,\gamma')}(f)\le d^{10}\cdot N^{3/2}$.
\end{corollary}
\begin{proof}[Proof of~\Cref{thm:bootstrapping_hBQP}]
Our goal is to upper bound 
\[ L_{1,6}^{\beta(\gamma,\gamma')}(f)=\sum_{\substack{i_1< i_4\in A\\ i_2< i_5\in B\\ i_3< i_6\in C}} \beta(\gamma,\gamma')_{i_1,\ldots,i_6}\cdot \widehat{f}(\{i_1,\ldots,i_6\})\]
Let $A=[N],B=(N,2N],C=(2N,3N]$ as in~\Cref{def:alpha}. Partition these sets as $A=X_A\sqcup Y_A,B=X_B\sqcup Y_B,C=X_C\sqcup Y_C$, where for each element, we independently include it in the corresponding $X$ with probability $1/2$. Let $X=X_A\cup X_B\cup X_C$. Define 
\begin{equation} \label{eq:imp_bootstrap_0} L(X):=\sum_{\substack{i_1\in X_A,i_2\in X_B, i_3\in X_C \\ i_4\in Y_A,i_5\in Y_B,i_6\in Y_C }}\beta(\gamma,\gamma')_{i_1,\ldots,i_6} \widehat{f}(\{i_1,\ldots,i_6\}).\end{equation}
Observe that for any $i_1< i_4\in A, i_2< i_5\in B,i_3 < i_6\in C$, the probability over $X$ that this term contributes to $L(X)$ is precisely the probability that $\{i_1,i_2,i_3\}\subseteq X$ and $\{i_4,i_5,i_6\}\subseteq Y$ which is exactly $2^{-6}$. Therefore,
 \[ L_{1,6}^{\beta(\gamma,\gamma')}(f|_\rho)= \E_{X}\sbra{ L(X)}\cdot \frac{1}{2^6}.\]
It thus suffices to upper bound $L(X)$ for any partition $X$. As before, we will express $\widehat{f}(\{i_1,\ldots,i_6\})$ as $\E_{\substack{x\sim \{\pm 1\}^X\\ y\sim \{\pm1\}^Y }}\sbra{f(x,y)\prod_{t\in[6]}x_t}$, where the underlying distributions are uniform -- this follows from the definition of Fourier coefficients. Substituting this in~\Cref{eq:imp_bootstrap_0}, we see that
\begin{align}\begin{split}
L(X)&=\sum_{\substack{i_1\in X_A,i_2\in X_B, i_3\in X_C \\ i_4\in Y_A,i_5\in Y_B,i_6\in Y_C}}\beta(\gamma,\gamma')_{i_1,\ldots,i_6}  \E_{\substack{x\sim \{\pm 1\}^X\\ y\sim \{\pm1\}^Y }}\sbra{f(x,y)\prod_{t\in[6]}x_t} \\
&=\E_{x\sim \{\pm 1\}^X}\sbra{ \sum_{\substack{i_1\in X_A\\i_2\in X_B\\ i_3\in X_C} } \alpha(\gamma)_{i_1,i_2,i_3} \cdot x_{i_1}x_{i_2}x_{i_3}\cdot  \E_{y\sim \{\pm 1\}^Y}\sbra{\sum_{\substack{i_4\in Y_A\\i_5\in Y_B\\i_6\in Y_C} }\sbra{f(x,y)\cdot \alpha(\gamma')_{i_4,i_5,i_6}  \cdot y_{i_4}y_{i_5}y_{i_6}}}}
\end{split}\label{eq:imp_bootstrap_1}\end{align}
where in the last line, we used the fact that $\beta(\gamma,\gamma')_{i_1,\ldots,i_6}=\alpha(\gamma)_{i_1,i_2,i_3}\cdot \alpha(\gamma')_{i_4,i_5,i_6}$.  As before, we define a function $f_x$ such that $f_x(y):=f(x,y)$ for all $y\in \{\pm 1\}^Y$. Define a function $g(x)$ at $x\in \{\pm1 \}^X$ by
\begin{align}\label{eq:function_g}\begin{split}
g(x)&:=\E_{y\sim \{\pm 1\}^Y}\sbra{\sum_{\substack{i_4\in Y_A\\i_5\in Y_B\\i_6\in Y_C} }\sbra{f(x,y)\cdot \alpha(\gamma')_{i_4,i_5,i_6}  \cdot y_{i_4}y_{i_5}y_{i_6}}} \triangleq \sum_{\substack{i_4\in Y_A\\i_5\in Y_B\\i_6\in Y_C} }\alpha(\gamma')_{i_4,i_5,i_6}\cdot \hat{f_x}(\{i_4,i_5,i_6\})\end{split}
\end{align}
where the last equality follows by the definition of Fourier coefficients. Substituting~\Cref{eq:function_g} in~\Cref{eq:imp_bootstrap_1}, we obtain
\begin{equation} L(X)=\E_{x\sim \{\pm 1\}^X}\sbra{ \sum_{\substack{i_1\in X_A\\i_2\in X_B\\ i_3\in X_C} } \alpha(\gamma)_{i_1,i_2,i_3} \cdot x_{i_1}x_{i_2}x_{i_3}\cdot g(x)} = \sum_{\substack{i_1\in X_A\\i_2\in X_B\\ i_3\in X_C} } \alpha(\gamma)_{i_1,i_2,i_3} \cdot \hat{g}(\{i_1,i_2,i_3\}) \label{eq:imp_bootstrap_3}\end{equation}
where again, the last equality follows by the definition of Fourier coefficients. Furthermore, by defining $\gamma''$ to be equal to $\gamma$ in $X$ and 0 otherwise, we see that 
the R.H.S. of~\Cref{eq:imp_bootstrap_3} equals $L_{1,3}^{\alpha(\gamma'')}(g)$. Observe from~\Cref{eq:function_g} that $g(x)$ is a polynomial in $x$ of degree at most $d$, since $f(x,y)$ is of degree at most $d$.  We now use the level-3 bounds for bounded low-degree polynomials from~\cite{IRR+21} to conclude that 
\[ L_{1,3}^{\alpha(\gamma'')}(g) \le O(d^3 N)\cdot \max_{x\in\{\pm 1\}^X}\abs{g(x)}.\]
We will now show that $\abs{g(x)}$ is bounded for all $x\in \{\pm1\}^N$. Since $\cF$ is restriction closed, we have $f_x\in \cF$. Furthermore, setting $\gamma''$ to be equal to $\gamma'$ on $Y$ and 0 otherwise, and recalling~\Cref{eq:function_g}, we obtain that \[\abs{g(x)}=\abs{\sum_{\substack{i_4\in Y_A\\i_5\in Y_B\\i_6\in Y_C} }  \alpha(\gamma')_{i_4,i_5,i_6}  \cdot \widehat{f_x}(\{i_4,i_5,i_6\})}=\max(L_{1,3}^{\alpha(\gamma'')}(f_x),L_{1,3}^{-\alpha(\gamma'')}(f_x))\le \max_{\gamma''}(L_{1,3}^{\alpha(\gamma'')}(\cF)).\]
This completes the proof. 
\end{proof}

\section{Acknowledgements}
The author would like to thank Vishnu Iyer, Dale Jacobs, Natalie Parham, Makrand Sinha, Kewen Wu and Henry Yuen for very helpful comments and discussions. The author is also grateful to Francisco Escudero Gutierrez
and Miquel Saucedo Cuesta for
pointing out a critical bug in an earlier version of this paper that has since been corrected.

\bibliographystyle{alpha}
\bibliography{ref}

\appendix
\section{Appendix}

\subsection{Proof of~\Cref{lem:main_lemma_3}}
\label{sec:app_main_lemma_3}

\begin{proof}[Proof of~\Cref{lem:main_lemma_3}]
For $i_1\in[N]$, let $P^{(i_1)}$ be the $M\times M$ projection matrix $(\id-\ket{i_1}\bra{i_1})\otimes \id$ for $i_1\in[N]$ and let $Q^{(i_1)}=\id-P^{(i_1)}=\ket{i_1}\bra{i_1}\otimes \id$. Define $\tilP_t^{(i_1)}= U_t\cdot P^{(i_1)}$ for all $i_1\in[N],t\in[t^*-1,d-1]$ and $\tilP_d=U_d$. For $i_1,i_{d+1}\in[N]$, let $P^{(i_1,i_{d+1})}=P^{(i_1)}\cdot P^{(i_{d+1})}$, and let $\tilP_t^{(i_1,i_{d+1})}=U_t\cdot P^{(i_1,i_{d+1})}$ for $t\in [t^*-1,d-1]$. As before, we have
\begin{equation}\label{eq:Q_i_ortho_3}
Q^{(i')\dagger}\cdot Q^{(i)}=Q^{(i)\dagger}\cdot Q^{(i')}=0\text{ if }i\neq i',
\end{equation}
\begin{equation}\label{eq:Q_i_sum_to_one_3}
\sum_i Q^{(i)}=\sum_i Q^{(i)\dagger}\cdot Q^{(i)}=\sum_i Q^{(i)}\cdot Q^{(i)\dagger}=\id ,
\end{equation}
\begin{equation}\label{eq:Q_i_difference_3}
Q^{(i)}=\id -P^{(i)}.
\end{equation}
Observe that 
\[ U=\sum_{i_1,i_{d+1}\in[N]}Q^{(i_1)}\cdot  U_{[1,t^*-1)}\cdot \tilP_{[t^*-1,d-1]}^{(i_1,i_{d+1})}\cdot \tilP_d\cdot  Q^{(i_{d+1})}. \]
Define 
\begin{align*} U'&=\sum_{i_1,i_{d+1}\in[N]} Q^{(i_1)}\cdot  U_{[1,t^*-1)}\cdot \tilP_{[t^*-1,d-1]}^{(i_1)}\cdot \tilP_d\cdot  Q^{(i_{d+1})}\\
&=\sum_{i_1\in[N]} Q^{(i_1)}\cdot  U_{[1,t^*-1)}\cdot \tilP_{[t^*-1,d-1]}^{(i_1)}\cdot \tilP_d \tag{since $\sum_{i_{d+1}\in[N]}Q^{(i_{d+1})}=\id.$}\end{align*}
Observe that $U'$ is precisely the matrix we would get by an application of~\Cref{lem:main_lemma_2} to the matrices $U_{[1,t^*-1)},U_{t^*-1},U_{t^*},U_{t^*+1},\ldots,U_d$ with $\gamma=1$. By~\Cref{lem:main_lemma_2}, we have 
\[\|U'\|_\frob \le O(d)\cdot \|U_{t^*}\|_\frob.\] 
We now compare $U$ and $U'$. We observe that the only difference between $U$ and $U'$ comes $\tilP^{(i_1,i_{d+1})}_{[t^*-1,d-1]}$ versus $\tilP^{(i_1)}_{[t^*-1,d-1]}$. As before, we use the telescoping expansion to obtain
\begin{align*}
&U'-U\\
&=\sum_{t'\in[t^*,d]}\sum_{i_1,i_{d+1}\in[N]}  Q^{(i_1)} \cdot U_{[1,t^*-1)}\cdot \tilP^{(i_1)}_{[t^*-1,t'-1)}\cdot\pbra{\tilP^{(i_1)}_{t'-1}-\tilP^{(i_1,i_{d+1})}_{t'-1}}\cdot \tilP_{[t',d-1]}^{(i_1,i_{d+1})}\cdot \tilP^{(i_1)}_d\cdot Q^{(i_{d+1})}\\
&=\sum_{t'\in[t^*,d]}\sum_{i_1,i_{d+1}\in[N]}  Q^{(i_1)} \cdot U_{[1,t^*-1)}\cdot \tilP^{(i_1)}_{[t^*-1,t'-1)} \cdot \pbra{ \tilP^{(i_1)}_{t'-1}\cdot Q^{(i_{d+1})}}\cdot \tilP_{[t',d-1]}^{(i_1,i_{d+1})}\cdot \tilP_d^{(i_1)}\cdot Q^{(i_{d+1})}\\
\tag{by definition and~\Cref{eq:Q_i_difference_3}}\\
&=\sum_{t'\in[t^*,d]}W_{t'_1}, \text{ where}
\end{align*}
\[W_{t_1'}=\sum_{i_1\in[N]}  Q^{(i_1)} \cdot U_{[1,t^*-1)}\cdot  \tilP^{(i_1)}_{[t^*-1,t'-1]} \cdot \underbrace{\sum_{i_{d+1}\in[N]}Q^{(i_{d+1})}\cdot \tilP_{[t',d-1]}^{(i_1,i_{d+1})}\cdot \tilP^{(i_1)}_d \cdot Q^{(i_{d+1})}}_{Z_{t'_1}^{(i_1)}}.\]
We will show that for all $t'\in[t^*,d]$, we have $\|W_{t'}\|_\frob \le O(1)\cdot \|U_{t^*}\|_\frob$ and this would complete the proof. Suppose $t'=t^*$, then the corresponding term is
\[W_{t^*}:= \sum_{i_1\in[N]} Q^{(i_1)}\cdot U_{[1,t^*-1)} \cdot \tilP_{t^*-1}^{(i_1)} \cdot \sum_{i_{d+1}\in[N]} Q^{(i_{d+1})} \cdot \tilP^{(i_1,i_{d+1})}_{[t^*,d-1]}\cdot \tilP_d^{(i_1)} \cdot Q^{(i_{d+1})}\]
Observe that 
\begin{align*}\|W_t\|^2_\frob &= \sum_{i_1,i_{d+1}\in[N]} \vabs{Q^{(i_1)}\cdot U_{[1,t^*-1)} \cdot \tilP_{t^*-1}^{(i_1)} \cdot Q^{(i_{d+1})} \cdot \tilP^{(i_1,i_{d+1})}_{[t^*,d-1]}\cdot \tilP_d^{(i_1)}\cdot Q^{(i_{d+1})}}_\frob^2 \tag{\Cref{eq:Q_i_ortho_3} applied to $Q^{(i_1)},Q^{(i_{d+1})}$} \\
&\le \sum_{\substack{i_1\in[N]\\i_{d+1}\in[N]}} \vabs{Q^{(i_1)}\cdot U_{[1,t^*-1)} \cdot \tilP_{t^*-1}^{(i_1)} \cdot  Q^{(i_{d+1})}\cdot   U_{t^*}}_\frob^2 \tag{by~\Cref{fact:frob_op} and since the rest of the terms have $\|\cdot\|_\op \le 1$}\\
&= \sum_{i_{d+1}\in[N]} \vabs{\sum_{i_1\in [N]}Q^{(i_1)}\cdot U_{[1,t^*-1)} \cdot \tilP_{t^*-1}^{(i_1)} \cdot  Q^{(i_{d+1})}\cdot   U_{t^*}}_\frob^2\tag{\Cref{eq:Q_i_ortho_3} applied to $Q^{(i_1)}$} \\
&\le \sum_{i_{d+1}\in[N]} \vabs{\sum_{i_1\in[N]}Q^{(i_1)}\cdot U_{[1,t^*-1)} \cdot \tilP_{t^*-1}^{(i_1)}}_\op^2 \cdot  \vabs{Q^{(i_{d+1})}\cdot  U_{t^*}}_\frob^2 \tag{by~\Cref{fact:frob_op}}\\
&\le 4\sum_{i_{d+1}\in[N]} \vabs{Q^{(i_{d+1})}\cdot  U_{t^*}}_\frob^2\tag{by~\Cref{lem:main_lemma_2} applied to $U_{[1,t^*-1]}$}\\
&= 4\sum_{i_{d+1}\in[N]} \Tr\pbra{Q^{(i_{d+1})}\cdot  U_{t^*}  \cdot U_{t^*}^\dagger \cdot Q^{(i_{d+1})\dagger}  }\\
&=4\Tr(U_{t^*}\cdot U_{t^*}^\dagger)=4\|U_{t^*}\|_\frob^2\tag{by~\Cref{eq:Q_i_sum_to_one_3} applied to $Q^{(i_{d+1})}$}.
\end{align*}
We will now show that for $t'>t^*,$ $\|W_{t'}\|_\frob \le O(d)\cdot \|U_{t^*}\|_\frob$. To do this, we first claim that $\|Z_{t_1'}^{(i_1)}\|_\op \le 1$. To see this, observe that
\begin{align*}
   Z_{t_1'}^{(i_1)}\cdot Z_{t_1'}^{(i_1)\dagger} & \triangleq \pbra{\sum_{i_{d+1}} Q^{(i_{d+1})}\cdot \tilP_{[t',d-1]}^{(i_1,i_{d+1})}\cdot \tilP^{(i_1)}_d \cdot Q^{(i_{d+1})} }\cdot \pbra{\sum_{i_{d+1}} Q^{(i_{d+1})}\cdot \tilP_{[t',d-1]}^{(i_1,i_{d+1})}\cdot \tilP^{(i_1)}_d \cdot Q^{(i_{d+1})}}^\dagger\\
   &=\sum_{i_{d+1}} Q^{(i_{d+1})}\cdot \tilP_{[t',d-1]}^{(i_1,i_{d+1})}\cdot \tilP^{(i_1)}_d \cdot Q^{(i_{d+1})}\cdot Q^{(i_{d+1})\dagger}\cdot \tilP^{(i_1)\dagger}_d \cdot \tilP_{[t',d-1]}^{(i_1,i_{d+1})\dagger}\cdot Q^{(i_{d+1})\dagger}\tag{\Cref{eq:Q_i_ortho_3}}\\
   &\preceq \sum_{i_{d+1}} Q^{(i_{d+1})}\cdot Q^{(i_{d+1})\dagger}=\id. \tag{since $\vabs{  \tilP^{(i_1,i_{d+1})}_{[t',d-1]}\cdot \tilP_d^{(i_1)}\cdot Q^{(i_{d+1})}}_\op\le 1$.}
\end{align*}
Therefore, we obtain 
\begin{align*}
\|W_{t'}\|_\frob^2 &\triangleq  \vabs{\sum_{i_1\in [N]}  Q^{(i_1)}\cdot U_{[1,t^*-1)}\cdot \tilP^{(i_1)}_{[t^*-1,t'-1]} \cdot Z_{t_1'}^{(i_1)}}^2_\frob\\
&=  \sum_{i_1\in [N]} \vabs{ Q^{(i_1)}\cdot  U_{[1,t^*-1)}\cdot \tilP^{(i_1)}_{[t^*-1,t'_1-1]}\cdot Z_{t_1'}^{(i_1)}}_\frob^2 \tag{\Cref{eq:Q_i_ortho_3}}\\
&\le \sum_{i_1\in [N]} \vabs{ Q^{(i_1)} \cdot U_{[1,t^*-1)}\cdot \tilP^{(i_1)}_{[t^*-1,t'_1-1]}}_\frob^2  \tag{since $\|Z_{t_1'}^{(i_1)}\|_\op \le 1$}\\
&= \vabs{ \sum_{i_1\in [N]} Q^{(i_1)}\cdot  U_{[1,t^*-1)}\cdot \tilP^{(i_1)}_{[t^*-1,t'_1-1]}}_\frob^2 \tag{\Cref{eq:Q_i_ortho_3}} \\
&\le O(1)\cdot \|U_{t^*}\|_\frob^2 \tag{by an application of~\Cref{lem:main_lemma_2} to matrices $U_{[1,t^*-1]},U_{t^*},\ldots, U_{t'_1-1}$ with $t_1=1$.}
\end{align*}
Thus, we get $\|U\|_\frob \le O(d)\cdot \|U_{t^*}\|_\frob$. This completes the proof. \end{proof}

\subsection{Proof of~\Cref{lem:main_lemma_4}}

\begin{lemma}\label{lemma:sum_frob_norm}
For $k\in [K]$ let $X_k\in \C^{M\times N},Y_k\in C^{N\times M}$ and $C\in \C^{N\times N}$. Suppose for a constant $a>0$, we have $\sum_{k\in[K]} X_k^\dagger X_k \preceq a^2 \cdot \id$, $\sum_{k\in [K]} Y_k^\dagger Y_k \preceq  \id$. Then,
\[ \vabs{\sum_{k\in [K]} X_k C Y_k}_\frob \le a\cdot \|C\|_\frob.\]
\end{lemma}
\begin{proof}[Proof of~\Cref{lemma:sum_frob_norm}]
Consider
\begin{align*}
    \vabs{ \sum_k X_k C Y_k}_\frob^2 &=\vabs{\begin{bmatrix} X_1 C &\ldots  &X_k C\end{bmatrix}\cdot \begin{bmatrix} Y_1 \\ \vdots \\ Y_k\end{bmatrix}}_\frob^2 \le \vabs{\begin{bmatrix} X_1 C &\ldots  &X_k C\end{bmatrix}}_\frob^2\cdot\vabs{ \begin{bmatrix} Y_1 \\ \vdots \\ Y_k\end{bmatrix}}_\op^2
\end{align*}
where the last inequality follows by~\Cref{fact:frob_op}. Firstly, we observe that 
\[\vabs{\begin{bmatrix} Y_1 \\ \vdots \\ Y_k\end{bmatrix}}_\op^2 =\vabs{\begin{bmatrix} Y_1^\dagger & \ldots & Y_k^\dagger\end{bmatrix}\cdot \begin{bmatrix} Y_1 \\ \vdots \\ Y_k\end{bmatrix}}_\op =\vabs{\sum_{k\in [K]}Y_k^\dagger Y_k }_\op\le 1.\]
Next, we observe that
\begin{align*}
\vabs{\begin{bmatrix} X_1 C &\ldots  &X_k C\end{bmatrix}}^2_\frob&=\Tr\pbra{\sum_{k\in[K]}X_kCC^\dagger X_k^\dagger}=\Tr\pbra{\sum_{k\in[K]}X_k^\dagger X_k\cdot CC^\dagger }\le a^2 \cdot \|C\|^2_\frob.
\end{align*}
\end{proof}

\begin{fact}\label{fact:sum_dagger} For matrices $A_1,\ldots,A_K$ we have $\pbra{\sum_{k=1}^KA_k}^\dagger\cdot \pbra{\sum_{k=1}^K A_k}\preceq K\cdot \sum_{k=1}^K A_k^\dagger A_k$.
\end{fact}
\begin{proof}[Proof of~\Cref{fact:sum_dagger}]
Observe that for all pairs $k<k'\in[K]$, we have  
\[ (A_k-A_{k'})^\dagger (A_k-A_{k'})\succeq 0 \implies A_k^\dagger A_{k'} + A_{k'}^\dagger A_k\preceq A_k^\dagger A_k  + A_{k'}^\dagger A_{k'}.\]
Summing this over all $k<k'\in [K]$ and adding $\sum_{k=1}^K A_k^\dagger A_k$ on both sides gives the desired inequality.
\end{proof}
\begin{lemma} For $i,j\in[N]$, let $F^{(i,j)}\in \C^{M\times M}$ be a matrix. Furthermore, assume that for some parameter $K\in \mathbb{N}$, there exist matrices $F^{(i,j)}_{1},\ldots,F^{(i,j)}_{K}$ such that $F^{(i,j)}=\sum_{k=1}^K F^{(i,j)}_k$ and for all $k\in [K]$, we have 
\begin{align*}
\sum_{i,j}F^{(i,j)\dagger}_k\cdot  F^{(i,j)}_{k}\preceq \id ,\quad &\sum_{i}\pbra{\sum_j F^{(i,j)\dagger}_k}\cdot \pbra{ \sum_j F^{(i,j)}_k}\preceq \id\\
\sum_{j}\pbra{\sum_i F^{(i,j)\dagger}_k}\cdot \pbra{ \sum_i F^{(i,j)}_k}\preceq \id,\quad  &\pbra{\sum_{i,j} F^{(i,j)\dagger}_k}\cdot \pbra{ \sum_{i,j} F^{(i,j)}_k}\preceq \id.\\
\end{align*}
Then, the same inequalities hold for $F^{(i,j)}$ with the R.H.S. above replaced by $K^2\cdot \id$. \label{lemma:sum_op_norm}  
\end{lemma}
\begin{proof}[Proof of~\Cref{lemma:sum_op_norm}]
We apply~\Cref{fact:sum_dagger} to the matrices $F^{(i,j)}=\sum_{k=1}^K F^{(i,j)}_k$ and then sum over $i,j$ to obtain
\[\sum_{i,j} F^{(i,j)\dagger}\cdot F^{(i,j)} \preceq K \cdot \sum_{i,j}  \sum_{k=1}^K F^{(i,j)\dagger}_k\cdot F^{(i,j)}_k\preceq K^2\cdot \id\]
We apply~\Cref{fact:sum_dagger} to the matrices $\sum_j F^{(i,j)}=\sum_{k=1}^K (\sum_j F^{(i,j)}_k)$ and then sum over $i$ to obtain
\[\sum_{i} \pbra{\sum_j F^{(i,j)\dagger}}\cdot\pbra{\sum_j F^{(i,j)}} \preceq K\cdot \sum_{i}  \sum_{k=1}^K \pbra{\sum_j F^{(i,j)\dagger}_k}\cdot\pbra{\sum_j F^{(i,j)}_k}\preceq K^2\cdot \id\]
The same calculation works when the roles of $i$ and $j$ are swapped. We apply~\Cref{fact:sum_dagger} to the matrices 
$\sum_{i,j} F^{(i,j)}=\sum_{k=1}^K (\sum_{i,j}F^{(i,j)}_k)$ to obtain
\[ \pbra{\sum_{i,j} F^{(i,j)\dagger}}\cdot\pbra{\sum_{i,j} F^{(i,j)}} \preceq K\cdot  \sum_{k=1}^K \pbra{\sum_{i,j} F^{(i,j)\dagger}_k}\cdot\pbra{\sum_{i,j} F^{(i,j)}_k}\preceq K^2\cdot \id.\]
\end{proof}
\begin{figure}
    \centering

\tikzset{every picture/.style={line width=0.75pt}} 

\begin{tikzpicture}[x=0.75pt,y=0.75pt,yscale=-1,xscale=1]

\draw [color={rgb, 255:red, 255; green, 0; blue, 0 }  ,draw opacity=1 ][line width=3.75]    (373,64) -- (373,97) ;
\draw [color={rgb, 255:red, 0; green, 93; blue, 235 }  ,draw opacity=1 ][line width=3.75]    (405,64) -- (405,97) ;
\draw [color={rgb, 255:red, 0; green, 93; blue, 235 }  ,draw opacity=1 ][line width=3.75]    (437,65) -- (437,98) ;
\draw [color={rgb, 255:red, 0; green, 93; blue, 235 }  ,draw opacity=1 ][line width=3.75]    (468,64) -- (468,97) ;
\draw [color={rgb, 255:red, 0; green, 93; blue, 235 }  ,draw opacity=1 ][line width=3.75]    (502,64) -- (502,97) ;
\draw   (322.64,65.2) -- (566.14,65.2) -- (566.14,127.59) -- (322.64,127.59) -- cycle ;
\draw [color={rgb, 255:red, 0; green, 93; blue, 235 }  ,draw opacity=1 ][line width=3.75]    (535,66) -- (535,99) ;
\draw [color={rgb, 255:red, 245; green, 166; blue, 35 }  ,draw opacity=1 ][line width=3.75]    (437,98) -- (437.14,129.39) ;
\draw [color={rgb, 255:red, 126; green, 211; blue, 33 }  ,draw opacity=1 ][line width=3.75]    (468,97) -- (468,127) ;
\draw [color={rgb, 255:red, 126; green, 211; blue, 33 }  ,draw opacity=1 ][line width=3.75]    (502,97) -- (502,127) ;
\draw [color={rgb, 255:red, 126; green, 211; blue, 33 }  ,draw opacity=1 ][line width=3.75]    (535,99) -- (535,129) ;
\draw  [draw opacity=0] (62,74) -- (261.14,74) -- (261.14,107.57) -- (62,107.57) -- cycle ; \draw   (62,74) -- (62,107.57)(95,74) -- (95,107.57)(128,74) -- (128,107.57)(161,74) -- (161,107.57)(194,74) -- (194,107.57)(227,74) -- (227,107.57)(260,74) -- (260,107.57) ; \draw   (62,74) -- (261.14,74)(62,107) -- (261.14,107) ; \draw    ;
\draw    (94.94,56.44) -- (94.99,72) ;
\draw [shift={(95,74)}, rotate = 269.79] [color={rgb, 255:red, 0; green, 0; blue, 0 }  ][line width=0.75]    (10.93,-3.29) .. controls (6.95,-1.4) and (3.31,-0.3) .. (0,0) .. controls (3.31,0.3) and (6.95,1.4) .. (10.93,3.29)   ;
\draw    (127.94,56.44) -- (127.99,72) ;
\draw [shift={(128,74)}, rotate = 269.79] [color={rgb, 255:red, 0; green, 0; blue, 0 }  ][line width=0.75]    (10.93,-3.29) .. controls (6.95,-1.4) and (3.31,-0.3) .. (0,0) .. controls (3.31,0.3) and (6.95,1.4) .. (10.93,3.29)   ;
\draw    (160.94,56.44) -- (160.99,72) ;
\draw [shift={(161,74)}, rotate = 269.79] [color={rgb, 255:red, 0; green, 0; blue, 0 }  ][line width=0.75]    (10.93,-3.29) .. controls (6.95,-1.4) and (3.31,-0.3) .. (0,0) .. controls (3.31,0.3) and (6.95,1.4) .. (10.93,3.29)   ;
\draw    (193.94,56.44) -- (193.99,72) ;
\draw [shift={(194,74)}, rotate = 269.79] [color={rgb, 255:red, 0; green, 0; blue, 0 }  ][line width=0.75]    (10.93,-3.29) .. controls (6.95,-1.4) and (3.31,-0.3) .. (0,0) .. controls (3.31,0.3) and (6.95,1.4) .. (10.93,3.29)   ;
\draw    (226.94,56.44) -- (226.99,72) ;
\draw [shift={(227,74)}, rotate = 269.79] [color={rgb, 255:red, 0; green, 0; blue, 0 }  ][line width=0.75]    (10.93,-3.29) .. controls (6.95,-1.4) and (3.31,-0.3) .. (0,0) .. controls (3.31,0.3) and (6.95,1.4) .. (10.93,3.29)   ;
\draw    (160.94,133.44) -- (161,109) ;
\draw [shift={(161,107)}, rotate = 90.14] [color={rgb, 255:red, 0; green, 0; blue, 0 }  ][line width=0.75]    (10.93,-3.29) .. controls (6.95,-1.4) and (3.31,-0.3) .. (0,0) .. controls (3.31,0.3) and (6.95,1.4) .. (10.93,3.29)   ;
\draw    (193.94,133.44) -- (194,109) ;
\draw [shift={(194,107)}, rotate = 90.14] [color={rgb, 255:red, 0; green, 0; blue, 0 }  ][line width=0.75]    (10.93,-3.29) .. controls (6.95,-1.4) and (3.31,-0.3) .. (0,0) .. controls (3.31,0.3) and (6.95,1.4) .. (10.93,3.29)   ;
\draw    (226.94,133.44) -- (227,109) ;
\draw [shift={(227,107)}, rotate = 90.14] [color={rgb, 255:red, 0; green, 0; blue, 0 }  ][line width=0.75]    (10.93,-3.29) .. controls (6.95,-1.4) and (3.31,-0.3) .. (0,0) .. controls (3.31,0.3) and (6.95,1.4) .. (10.93,3.29)   ;

\draw (366,41.4) node [anchor=north west][inner sep=0.75pt]    {$ \begin{array}{l}
i_{t_{1}}\\
\end{array}$};
\draw (396,40.4) node [anchor=north west][inner sep=0.75pt]    {$ \begin{array}{l}
i_{t_{1}}\\
\end{array}$};
\draw (429,40.4) node [anchor=north west][inner sep=0.75pt]    {$ \begin{array}{l}
i_{t_{1}}\\
\end{array}$};
\draw (462,40.4) node [anchor=north west][inner sep=0.75pt]    {$ \begin{array}{l}
i_{t_{1}}\\
\end{array}$};
\draw (495,40.4) node [anchor=north west][inner sep=0.75pt]    {$ \begin{array}{l}
i_{t_{1}}\\
\end{array}$};
\draw (436,131.4) node [anchor=north west][inner sep=0.75pt]    {$ \begin{array}{l}
i_{t_{2}}\\
\end{array}$};
\draw (467,131.4) node [anchor=north west][inner sep=0.75pt]    {$ \begin{array}{l}
i_{t_{2}}\\
\end{array}$};
\draw (495,133.4) node [anchor=north west][inner sep=0.75pt]    {$ \begin{array}{l}
i_{t_{2}}\\
\end{array}$};
\draw (528,42.4) node [anchor=north west][inner sep=0.75pt]    {$ \begin{array}{l}
i_{t_{1}}\\
\end{array}$};
\draw (527,132.4) node [anchor=north west][inner sep=0.75pt]    {$ \begin{array}{l}
i_{t_{2}}\\
\end{array}$};
\draw (71,82.4) node [anchor=north west][inner sep=0.75pt]  [font=\small]  {$U_{1}$};
\draw (102,82.4) node [anchor=north west][inner sep=0.75pt]  [font=\small]  {$U_{2}$};
\draw (234,83.4) node [anchor=north west][inner sep=0.75pt]  [font=\small]  {$U_{t}$};
\draw (195,82.4) node [anchor=north west][inner sep=0.75pt]  [font=\small]  {$U_{t-1}$};
\draw (86,31.4) node [anchor=north west][inner sep=0.75pt]    {$ \begin{array}{l}
\mathnormal{Q^{( i_{t_{1}})}}\\
\end{array}$};
\draw (86,109.4) node [anchor=north west][inner sep=0.75pt]    {$t_{1}$};
\draw (124,33.4) node [anchor=north west][inner sep=0.75pt]    {$ \begin{array}{l}
P^{( i_{t_{1}})}\\
\end{array}$};
\draw (158,33.4) node [anchor=north west][inner sep=0.75pt]    {$ \begin{array}{l}
P^{( i_{t_{1}})}\\
\end{array}$};
\draw (193,33.4) node [anchor=north west][inner sep=0.75pt]    {$ \begin{array}{l}
P^{( i_{t_{1}})}\\
\end{array}$};
\draw (227,34.4) node [anchor=north west][inner sep=0.75pt]    {$ \begin{array}{l}
P^{( i_{t_{1}})}\\
\end{array}$};
\draw (271,82.4) node [anchor=north west][inner sep=0.75pt]    {$=$};
\draw (145,109.4) node [anchor=north west][inner sep=0.75pt]    {$t_{2}$};
\draw (150,128.4) node [anchor=north west][inner sep=0.75pt]    {$ \begin{array}{l}
\mathnormal{Q^{( i_{t_{2}})}}\\
\end{array}$};
\draw (187,131.4) node [anchor=north west][inner sep=0.75pt]    {$ \begin{array}{l}
P^{( i_{t_{2}})}\\
\end{array}$};
\draw (220,133.4) node [anchor=north west][inner sep=0.75pt]    {$ \begin{array}{l}
P^{( i_{t_{2}})}\\
\end{array}$};
\draw (375,79.9) node [anchor=north west][inner sep=0.75pt]    {$t_{1}$};
\draw (440,109.4) node [anchor=north west][inner sep=0.75pt]    {$t_{2}$};

\end{tikzpicture}

\caption{We use red, blue, orange and green lines to denote insertions of $Q^{(i_{t_1})}$,$P^{(i_{t_1})}$,$Q^{(i_{t_2})}$ and $P^{(i_{t_2})}$ respectively.}
    \label{fig_2:definition_hBQP}
\end{figure}

\begin{proof}[Proof of~\Cref{lem:main_lemma_4}]
\label{sec:proof_main_lemma_4}
Define $M\times M$ matrices $P^{(i)},Q^{(i)}$ as follows for $i\in[N]$.
\[ 
P^{(i)}=(\id-2\ket{i}\bra{i})\otimes \id \]
\[ Q^{(i)}=\ket{i}\bra{i}\otimes \id\]
Observe that $\|Q^{(i)}\|_\op,\|P^{(i)}\|_\op \le 1$ for all $i\in[N]$, furthermore, 
\begin{equation}
Q^{(i)}=\frac{1}{2}\pbra{\id -P^{(i)}}\label{eq:Q_i_difference_2}.
\end{equation}
Secondly, the $Q^{(i)}$ are orthogonal, i.e., for all $i,i'\in[N]$, we have 
\begin{equation}
Q^{(i)}\cdot Q^{(i')\dagger}=Q^{(i')\dagger}\cdot Q^{(i)}=0\quad \text{if }i\neq i'\label{eq:Q_i_ortho_2}.
\end{equation}
Finally, we have
\begin{equation}\sum_i Q(i)=\sum_i
Q^{(i)\dagger}\cdot Q^{(i)}=\sum_i Q^{(i)}\cdot Q^{(i)\dagger}=\id.\label{eq:Q_i_sum_to_one}
\end{equation}
Let $P^{(i_{t_1},i_{t_2})}=P^{(i_{t_1})}\cdot P^{(i_{t_2})}$. Let $\tilP_t^{(i_{t_1})}:= U_t\cdot P^{(i_{t_1})} $ if $t\in [t_1,d-1]$ and $\tilP_t^{(i_{t_1},i_{t_2})}:= U_t\cdot P^{(i_{t_1},i_{t_2})} $ for $t\in [t_2,d-1]$. Let $\tilP^{(i_{t_1})}_d=\tilP^{(i_{t_1},i_{t_2})}_d=U_d$. Observe that 
\begin{align*}U&=U_{[1,t_1)}\cdot \sum_{i_{t_1},i_{t_2}\in [N]} Q^{(i_{t_1})}\cdot \tilP^{(i_{t_1})}_{[{t_1},t_2)}
\cdot Q^{(i_{t_2})}\cdot  \tilP_{[t_2,d]}^{(i_{t_1},i_{t_2})}
\end{align*}
\begin{figure}
    \centering

\tikzset{every picture/.style={line width=0.75pt}} 

\begin{tikzpicture}[x=0.75pt,y=0.75pt,yscale=-1,xscale=1,scale=0.9]

\draw [color={rgb, 255:red, 255; green, 0; blue, 0 }  ,draw opacity=1 ][line width=3.75]    (66,51) -- (66,84) ;
\draw [color={rgb, 255:red, 0; green, 93; blue, 235 }  ,draw opacity=1 ][line width=3.75]    (98,51) -- (98,84) ;
\draw [color={rgb, 255:red, 0; green, 93; blue, 235 }  ,draw opacity=1 ][line width=3.75]    (130,52) -- (130,85) ;
\draw [color={rgb, 255:red, 0; green, 93; blue, 235 }  ,draw opacity=1 ][line width=3.75]    (161,51) -- (161,84) ;
\draw [color={rgb, 255:red, 0; green, 93; blue, 235 }  ,draw opacity=1 ][line width=3.75]    (195,51) -- (195,84) ;
\draw   (15.64,52.2) -- (287.14,52.2) -- (287.14,114.59) -- (15.64,114.59) -- cycle ;
\draw [color={rgb, 255:red, 0; green, 93; blue, 235 }  ,draw opacity=1 ][line width=3.75]    (228,52) -- (228,85) ;
\draw [color={rgb, 255:red, 245; green, 166; blue, 35 }  ,draw opacity=1 ][line width=3.75]    (130,84) -- (130.14,115.39) ;
\draw [color={rgb, 255:red, 126; green, 211; blue, 33 }  ,draw opacity=1 ][line width=3.75]    (161,84) -- (161,114) ;
\draw [color={rgb, 255:red, 126; green, 211; blue, 33 }  ,draw opacity=1 ][line width=3.75]    (195,84) -- (195,114) ;
\draw [color={rgb, 255:red, 126; green, 211; blue, 33 }  ,draw opacity=1 ][line width=3.75]    (228,84) -- (228,114) ;
\draw [color={rgb, 255:red, 255; green, 0; blue, 0 }  ,draw opacity=1 ][line width=3.75]    (392,51) -- (392,84) ;
\draw [color={rgb, 255:red, 0; green, 93; blue, 235 }  ,draw opacity=1 ][line width=3.75]    (424,51) -- (424,84) ;
\draw [color={rgb, 255:red, 0; green, 93; blue, 235 }  ,draw opacity=1 ][line width=3.75]    (456,52) -- (456,85) ;
\draw [color={rgb, 255:red, 0; green, 93; blue, 235 }  ,draw opacity=1 ][line width=3.75]    (487,51) -- (487,84) ;
\draw [color={rgb, 255:red, 0; green, 93; blue, 235 }  ,draw opacity=1 ][line width=3.75]    (521,51) -- (521,84) ;
\draw   (340.94,52.5) -- (523.19,52.5) -- (523.19,114.89) -- (340.94,114.89) -- cycle ;
\draw [color={rgb, 255:red, 245; green, 166; blue, 35 }  ,draw opacity=1 ][line width=3.75]    (456,84) -- (456.14,115.39) ;
\draw [color={rgb, 255:red, 126; green, 211; blue, 33 }  ,draw opacity=1 ][line width=3.75]    (487,84) -- (487,114) ;
\draw [color={rgb, 255:red, 126; green, 211; blue, 33 }  ,draw opacity=1 ][line width=3.75]    (521,84) -- (521,114) ;
\draw   (582.41,52.61) -- (642.14,52.61) -- (642.14,115) -- (582.41,115) -- cycle ;
\draw [color={rgb, 255:red, 126; green, 211; blue, 33 }  ,draw opacity=1 ][line width=3.75]    (583.41,83.5) -- (583.41,113.5) ;
\draw [color={rgb, 255:red, 126; green, 211; blue, 33 }  ,draw opacity=1 ][line width=3.75]    (612.27,83.8) -- (612.27,113.8) ;
\draw [color={rgb, 255:red, 0; green, 93; blue, 235 }  ,draw opacity=1 ][line width=3.75]    (583.41,52.61) -- (583.41,85.61) ;
\draw [color={rgb, 255:red, 0; green, 93; blue, 235 }  ,draw opacity=1 ][line width=3.75]    (612.27,50.8) -- (612.27,83.8) ;
\draw   (536.41,51.61) -- (570.14,51.61) -- (570.14,113.39) -- (536.41,113.39) -- cycle ;
\draw [color={rgb, 255:red, 126; green, 211; blue, 33 }  ,draw opacity=1 ][line width=3.75]    (260,84) -- (260,114) ;
\draw [color={rgb, 255:red, 0; green, 93; blue, 235 }  ,draw opacity=1 ][line width=3.75]    (260,51) -- (260,84) ;
\draw   (342,135) .. controls (342.01,139.67) and (344.35,141.99) .. (349.02,141.98) -- (424.09,141.73) .. controls (430.76,141.7) and (434.1,144.02) .. (434.11,148.69) .. controls (434.1,144.02) and (437.42,141.68) .. (444.09,141.66)(441.09,141.67) -- (519.16,141.41) .. controls (523.83,141.4) and (526.15,139.06) .. (526.14,134.39) ;
\draw   (580.14,136.39) .. controls (580.21,141.06) and (582.58,143.35) .. (587.25,143.27) -- (599.76,143.06) .. controls (606.42,142.95) and (609.79,145.22) .. (609.87,149.89) .. controls (609.79,145.22) and (613.08,142.83) .. (619.75,142.72)(616.75,142.77) -- (632.26,142.51) .. controls (636.92,142.43) and (639.21,140.06) .. (639.14,135.39) ;

\draw (64,34.4) node [anchor=north west][inner sep=0.75pt]  [font=\small]  {$i$};
\draw (202,59.4) node [anchor=north west][inner sep=0.75pt]    {$U_{t}$};
\draw (543.41,57.01) node [anchor=north west][inner sep=0.75pt]    {$U_{t}$};
\draw (124,115.4) node [anchor=north west][inner sep=0.75pt]  [font=\small]  {$j$};
\draw (304,76.4) node [anchor=north west][inner sep=0.75pt]    {$=$};
\draw (95,35.4) node [anchor=north west][inner sep=0.75pt]  [font=\small]  {$i$};
\draw (128,34.4) node [anchor=north west][inner sep=0.75pt]  [font=\small]  {$i$};
\draw (159,35.4) node [anchor=north west][inner sep=0.75pt]  [font=\small]  {$i$};
\draw (192,34.4) node [anchor=north west][inner sep=0.75pt]  [font=\small]  {$i$};
\draw (223,35.4) node [anchor=north west][inner sep=0.75pt]  [font=\small]  {$i$};
\draw (256,34.4) node [anchor=north west][inner sep=0.75pt]  [font=\small]  {$i$};
\draw (155,116.4) node [anchor=north west][inner sep=0.75pt]  [font=\small]  {$j$};
\draw (190,117.4) node [anchor=north west][inner sep=0.75pt]  [font=\small]  {$j$};
\draw (224,118.4) node [anchor=north west][inner sep=0.75pt]  [font=\small]  {$j$};
\draw (253,121.4) node [anchor=north west][inner sep=0.75pt]  [font=\small]  {$j$};
\draw (452,117.4) node [anchor=north west][inner sep=0.75pt]  [font=\small]  {$j$};
\draw (483,118.4) node [anchor=north west][inner sep=0.75pt]  [font=\small]  {$j$};
\draw (518,118.4) node [anchor=north west][inner sep=0.75pt]  [font=\small]  {$j$};
\draw (578,116.4) node [anchor=north west][inner sep=0.75pt]  [font=\small]  {$j$};
\draw (609,116.4) node [anchor=north west][inner sep=0.75pt]  [font=\small]  {$j$};
\draw (388,37.4) node [anchor=north west][inner sep=0.75pt]  [font=\small]  {$i$};
\draw (419,38.4) node [anchor=north west][inner sep=0.75pt]  [font=\small]  {$i$};
\draw (452,37.4) node [anchor=north west][inner sep=0.75pt]  [font=\small]  {$i$};
\draw (483,38.4) node [anchor=north west][inner sep=0.75pt]  [font=\small]  {$i$};
\draw (516,37.4) node [anchor=north west][inner sep=0.75pt]  [font=\small]  {$i$};
\draw (581,38.4) node [anchor=north west][inner sep=0.75pt]  [font=\small]  {$i$};
\draw (610,37.4) node [anchor=north west][inner sep=0.75pt]  [font=\small]  {$i$};
\draw (421.41,150.01) node [anchor=north west][inner sep=0.75pt]    {$X_{t}^{( i,j)}$};
\draw (596.41,151.01) node [anchor=north west][inner sep=0.75pt]    {$Y_{t}^{( i,j)}$};

\end{tikzpicture}
    \caption{Our goal is to show that the Frobenius norm of the above matrix (summed over $i,j$) is at most the Frobenius norm of any $U_t$ appearing in this sequence (up to $\poly(d)$ factors). To do so, we split the decomposition into the part before $t$ (call it $X^{(i,j)}$) and the part after $t$ (call it $Y^{(i,j)})$ and our goal then becomes to show that $\vabs{\sum_{i,j}X^{(i,j)}U_t Y^{(i,j)}}_\frob\le \poly(d)\cdot \|U_t\|_\frob$. }
    \label{fig_2:definition_split_hBQP}
\end{figure}
This is depicted in~\Cref{fig_2:definition_hBQP}. Let us fix any $t\in[t_2,d)$. We will now show that $\|U\|_\frob \le O(d^4)\cdot \|U_t\|_\frob$. To do so, we will split this quantity around $U_t$, as depicted in~\Cref{fig_2:definition_split_hBQP}. To do so, we use the fact that $\tilP_t^{(i_{t_1},i_{t_2})}=U_t\cdot P^{(i_{t_1},i_{t_2})}$ to conclude that 
\[ U=U_{[1,t_1)}\cdot \sum_{i_{t_1},i_{t_2}\in[N]} \underbrace{Q^{(i_{t_1})}\cdot \tilP^{(i_{t_1})}_{[t_1,t_2)}\cdot Q^{(i_{t_2})}\cdot \tilP^{(i_{t_1},i_{t_2})}_{[t_2,t)}}_{X^{(i_{t_1},i_{t_2})}_{t}}\cdot U_t\cdot \underbrace{P^{(i_{t_1},i_{t_2})}\cdot \tilP^{(i_{t_1},i_{t_2})}_{(t,d]}}_{Y^{(i_{t_1},i_{t_2})}_{t}} \]
Since $\|U_{[1,t_1)}\|_\op \le 1$, we have 
\begin{equation}\|U\|_\frob \le \vabs{\sum_{i_{t_1},i_{t_2}\in[N]}X^{(i_{t_1},i_{t_2})}_{t}\cdot U_t \cdot Y^{(i_{t_1},i_{t_2})}_{t}}_\frob.\label{eq:X_U_Y}\end{equation}
This brings us to the form where we wish to apply~\Cref{lemma:sum_frob_norm}, but the various $Y_t^{(i_{t_1},i_{t_2})}$ unfortunately don't satisfy hypothesis of~\Cref{lemma:sum_frob_norm}. To get around this, we are going to use a telescoping sum to expand $Y^{(i_{t_1},i_{t_2})}_{t}$ into terms that do satisfy the hypothesis, furthermore, the number of terms is small, $\poly(d)$. 

For ease of notation, let $\tilR_{t'}^{(i,j)}:=P^{(i,j)}\cdot U_{t'}$ for $t'\in (t,d]$ and let $\tilR_{t'}^{(i)}:=P^{(i)}\cdot U_{t'}$ for $t'\in (t,d]$ and observe that $Y^{(i_{t_1},i_{t_2})}_{t}\triangleq \tilR^{(i_{t_1},i_{t_2})}_{(t,d]}$. For $r,r'\in(t,d]$, define
\begin{align} \label{eq:W_r} \begin{split}U'&:= U_{(t,d]} \\
W_r^{(i)}&:=  \tilR^{(i)}_{(t,r)} \cdot Q^{(i)}\cdot U_{[r,d]}  \\
W_{r,r'}^{(i,j)}&:= \begin{cases}\tilR^{(i,j)}_{(t,r')}\cdot Q^{(j)}\cdot \tilR^{(i)}_{[r',r)} \cdot Q^{(i)}\cdot U_{[r,d]}&\text{ if }r>r'\\ \tilR^{(i,j)}_{(t,r)}\cdot Q^{(i)}\cdot \tilR^{(j)}_{[r,r')} \cdot Q^{(j)}\cdot U_{[r',d]} &\text{ if } r\le r' \end{cases} \end{split}\end{align}
\input{fig_2_telescoping_Y}
By~\Cref{eq:Q_i_difference_2} and the telescoping argument depicted in~\Cref{fig_2:telescoping_Y}, for all $i_{t_1},i_{t_2}\in[N]$, 
\[Y_t^{(i_{t_1},i_{t_2})}=U'-2\sum_{r\in(t,d]}W_r^{(i_{t_1})} -2\sum_{r'\in(t,d]}W_{r'}^{(i_{t_2})}+4\sum_{r, r'\in(t,d]}W_{r,r'}^{(i_{t_1},i_{t_2})}.\]
Substituting the above in~\Cref{eq:X_U_Y}, we have
\begin{align}\label{eq:bound_split}\begin{split}
  \|U\|_\frob&\le  \vabs{\pbra{\sum_{i_{t_1},i_{t_2}}X^{(i_{t_1},i_{t_2})}_{t}}\cdot U_t \cdot U_{(t,d]}}_\frob   +O(d)\cdot \max_{r\in(t,d]}\vabs{\sum_{i_{t_1}}\pbra{\sum_{i_{t_2}}X^{(i_{t_1},i_{t_2})}_{t}}\cdot U_t \cdot W_r^{(i_{t_1})}}_\frob \\ 
& + O(d)\cdot \max_{r'\in(t,d]}\vabs{\sum_{i_{t_2}}\pbra{ \sum_{i_{t_1}}X^{(i_{t_1},i_{t_2})}_{t}}\cdot U_t \cdot W_{r'}^{(i_{t_2})}}_\frob \\
&+O(d^2)\cdot \max_{r,r'\in(t,d]}\vabs{\sum_{i_{t_1},i_{t_2}}X^{(i_{t_1},i_{t_2})}_{t}\cdot U_t \cdot W_{r,r'}^{(i_{t_1},i_{t_2})}}_\frob.\end{split}
\end{align} 

We will now use~\Cref{lemma:sum_frob_norm} to control each term in the R.H.S. above. 
\begin{claim}\label{claim:W_r} For all $r,r'\in(t,d]$, the matrices $W_r^{(i)},W_{r,r'}^{(i,j)}$ as in~\Cref{eq:W_r} satisfy the hypothesis of~\Cref{lemma:sum_frob_norm}, playing the role of $Y'$s, i.e.,
\[ \sum_i W_r^{(i)\dagger}\cdot W_r^{(i)}\preceq \id \]
\[ \sum_{i,j} W_{r,r'}^{(i,j)\dagger}\cdot W_{r,r'}^{(i,j)}\preceq \id \]
\end{claim}

\begin{proof}[Proof of~\Cref{claim:W_r}]
Recalling the definition of $W_r^{(i)},W_{r,r'}^{(j)},$ we see that 
\begin{align*}
    \sum_i W_r^{(i)\dagger}\cdot W_r^{(i)}&=\sum_i U_{[r,d]}^\dagger\cdot Q^{(i)\dagger}\cdot \tilR^{(i)\dagger}_{(t,r)}\cdot \tilR^{(i)}_{(t,r)}\cdot Q^{(i)}\cdot U_{[r,d]}\\
    &\preceq \sum_i U_{[r,d]}^\dagger\cdot Q^{(i)\dagger}\cdot Q^{(i)}\cdot U_{[r,d]}\tag{since $\|\tilR^{(i)}_{(t,r)}\|_\op\le 1$}\\
    &=U_{[r,d]}^\dagger\cdot U_{[r,d]}\preceq \id.\tag{by~\Cref{eq:Q_i_sum_to_one} applied to $Q^{(i)}$}
\end{align*}
Similarly we show the desired inequality for $r\le r'$ and the proof for $r>r'$ is identical. Consider 
\begin{align*}
    \sum_{i,j} W_{r,r'}^{(i,j)\dagger}\cdot W_{r,r'}^{(i,j)}&=\sum_{i,j}  U_{[r',d]}^\dagger \cdot Q^{(j)\dagger }\cdot \tilR^{(j)\dagger }_{[r,r')}\cdot  Q^{(i)\dagger}\cdot \tilR^{(i,j)\dagger}_{(t,r)}\cdot \tilR^{(i,j)}_{(t,r)}\cdot Q^{(i)}\cdot \tilR^{(j)}_{[r,r')} \cdot Q^{(j)}\cdot U_{[r',d]} \\
    &\preceq \sum_{i,j}  U_{[r',d]}^\dagger \cdot Q^{(j)\dagger }\cdot \tilR^{(j)\dagger }_{[r,r')}\cdot  Q^{(i)\dagger}\cdot Q^{(i)}\cdot \tilR^{(j)}_{[r,r')} \cdot Q^{(j)}\cdot U_{[r',d]} \tag{since $\|\tilR^{(i,j)}_{(t,r)}\|_\op \le1$}\\
    &\preceq \sum_{j}  U_{[r',d]}^\dagger \cdot Q^{(j)\dagger }\cdot \tilR^{(j)\dagger }_{[r,r')}\cdot \tilR^{(j)}_{[r,r')} \cdot Q^{(j)}\cdot U_{[r',d]} \tag{by~\Cref{eq:Q_i_sum_to_one} applied to $Q^{(i)}$}\\
     &\preceq \sum_{j}  U_{[r',d]}^\dagger \cdot Q^{(j)\dagger }  \cdot Q^{(j)}\cdot U_{[r',d]} \tag{since $\|\tilR^{(j)}_{[r,r')}\|_\op\le 1$}\\
     &\preceq U_{[r',d]}^\dagger \cdot U_{[r',d]}\preceq \id.\tag{by~\Cref{eq:Q_i_sum_to_one} applied to $Q^{(j)}$}
\end{align*}
\end{proof}
\begin{claim}\label{claim:new_claim} Let \[ X_t^{(i_{t_1},i_{t_2})}\triangleq  Q^{(i_{t_1})}\cdot \tilP^{(i_{t_1})}_{[t_1,t_2)}\cdot Q^{(i_{t_2})}\cdot \tilP^{(i_{t_1},i_{t_2})}_{[t_2,t)}.\] 
Then, $\cbra{X_t^{(i_{t_1},i_{t_2})}}_{i_{t_1},i_{t_2}}$, $\cbra{\sum_{i_{t_2}}X_t^{(i_{t_1},i_{t_2})} }_{i_{t_1}}$,  $\cbra{\sum_{i_{t_1}}X_t^{(i_{t_1},i_{t_2})} }_{i_{t_2}}$, and  $\cbra{\sum_{i_{t_1},i_{t_2}}X_t^{(i_{t_1},i_{t_2})}}$ all satisfy the hypothesis of~\Cref{lemma:sum_frob_norm}, playing the role of $X'$s with $a=O(d^2)$, i.e.,
\begin{align}\label{eq:bounds_X}\begin{split}
\sum_{i_{t_1},i_{t_2}} X_t^{(i_{t_1},i_{t_2})\dagger}\cdot X_t^{(i_{t_1},i_{t_2})} &\preceq O(d^4)\cdot \id \\
\sum_{i_{t_1}}\pbra{\sum_{i_{t_2}}X_t^{(i_{t_1},i_{t_2})\dagger}}\cdot \pbra{\sum_{i_{t_2}} X_t^{(i_{t_1},i_{t_2})}}&\preceq O(d^4)\cdot \id \\
\sum_{i_{t_2}}\pbra{\sum_{i_{t_1}}X_t^{(i_{t_1},i_{t_2})\dagger}}\cdot \pbra{\sum_{i_{t_1}} X_t^{(i_{t_1},i_{t_2})}}&\preceq O(d^4)\cdot \id \\
\pbra{\sum_{i_{t_1},i_{t_2}}X_t^{(i_{t_1},i_{t_2})\dagger}}\cdot \pbra{\sum_{i_{t_1},i_{t_2}} X_t^{(i_{t_1},i_{t_2})}}&\preceq O(d^4)\cdot \id
\end{split}
\end{align}
\end{claim}

Assuming the above claim, the proof is immediate. Applying~\Cref{lemma:sum_frob_norm} on each term in~\Cref{eq:bound_split} and using the bounds from~\Cref{claim:new_claim} and~\Cref{claim:W_r} would show
that
\begin{equation}\label{eq_bound_1}\|U\|_\frob \le O(d^2)\cdot O(d^2)\cdot \min_{t\in[t_2,d)}\|U_t\|_\frob.\end{equation}
Before proving~\Cref{claim:new_claim}, we proceed to analyze the other cases. Now suppose $t=d$ or $t\in[1,t_1)$. Then using the fact that $\tilP_d^{(i_{t_1},i_{t_2})}=U_d$, we get
\begin{align*}\|U\|_\frob&=\vabs{U_{[1,t_1)}\cdot \sum_{i_{t_1},i_{t_2}\in[N]} Q^{(i_{t_1})}\cdot \tilP^{(i_{t_1})}_{[t_1,t_2)}\cdot Q^{(i_{t_2})}\cdot \tilP^{(i_{t_1},i_{t_2})}_{[t_2,d-1]} \cdot U_d}_\frob\\
&\le \vabs{\sum_{i_{t_1},i_{t_2}\in[N]} Q^{(i_{t_1})}\cdot \tilP^{(i_{t_1})}_{[t_1,t_2)}\cdot Q^{(i_{t_2})}\cdot \tilP^{(i_{t_1},i_{t_2})}_{[t_2,d-1]}}_\op\cdot \min\pbra{\vabs{U_{[1,t_1)}}_\frob,\vabs{U_d}_\frob} \tag{since $\|U_d\|_\op,\|U_{[1,t_1)}\|_\op \le 1$ and by~\Cref{fact:frob_op}}\\
&\le \min\pbra{\vabs{U_{[1,t_1)}}_\frob,\vabs{U_d}_\frob} \cdot O(d^2)
\end{align*}
where the last bound on the operator norm follows from a calculation similar to~\Cref{claim:new_claim} where we obtained bounds of $O(d^2)$ on the operator norms of $\sum_{i_{t_1},i_{t_2}}X_t^{(i_{t_1},i_{t_2})}$.
This proves that
\begin{equation}
    \|U\|_\frob \le O(d^2)\cdot \min_{t\in[1,t_1)\cup\{d\}}\|U_t\|_\frob \label{eq_bound_4}
\end{equation}
Finally, we consider any $t\in[t_1,t_2)$. Observe that
\begin{align*}\|U\|_\frob^2 
 &\le  \vabs{\sum_{i_{t_1}\in [N]} Q^{(i_{t_1})}\cdot \tilP^{(i_{t_1})}_{[t_1,t_2)}\cdot \pbra{ \sum_{i_{t_2}\in[N]}
Q^{(i_{t_2})}\cdot  \tilP_{[t_2,d]}^{(i_{t_1},i_{t_2})}}}_\frob^2 \tag{since $\|U_{[1,t_1)}\|_\op \le 1$}\\
&=  \sum_{i_{t_1}\in [N]} \vabs{Q^{(i_{t_1})}\cdot \tilP^{(i_{t_1})}_{[t_1,t_2)}\cdot \pbra{ \sum_{i_{t_2}\in[N]}
Q^{(i_{t_2})}\cdot  \tilP_{[t_2,d]}^{(i_{t_1},i_{t_2})}}}_\frob^2 \tag{by~\Cref{eq:Q_i_ortho_2}}\\
&\le \sum_{i_{t_1}\in [N]} \vabs{Q^{(i_{t_1})}\cdot \tilP^{(i_{t_1})}_{[t_1,t_2)}}_\frob^2\cdot \vabs{ \sum_{i_{t_2}\in[N]}
Q^{(i_{t_2})}\cdot  \tilP_{[t_2,d]}^{(i_{t_1},i_{t_2})}}^2_\op \tag{by~\Cref{fact:frob_op}}.
\end{align*}
For any $i_{t_1}\in[N],$ we see that the matrix $\sum_{i_{t_2}\in[N]}
Q^{(i_{t_2})}\cdot  \tilP_{[t_2,d]}^{(i_{t_1},i_{t_2})}$ is precisely obtained by applying~\Cref{lem:main_lemma_2} with $\gamma=2$ to the matrices $U_{t_2}\cdot P^{(i_{t_1})},U_{t_2+1}\cdot P^{(i_{t_1})},\ldots, U_{d-1}\cdot P^{(i_{t_1})},U_d$ and therefore, its operator norm is at most $O(d)$. Therefore, we get
\begin{align*} \| U\|_\frob^2 &\le O(d^2)\sum_{i_{t_1}\in[N]} \vabs{Q^{(i_{t_1})}\cdot \tilP^{(i_{t_1})}_{[t_1,t_2)}}_\frob^2\\
&=O(d^2)\vabs{\sum_{i_{t_1}\in[N]} Q^{(i_{t_1})}\cdot \tilP^{(i_{t_1})}_{[t_1,t_2)}}_\frob^2 \tag{by~\Cref{eq:Q_i_ortho_2}}\\
&\le  O(d^4)\min_{t\in[t_1,t_2)} ( \|U_t\|_\frob^2) \tag{by an application of~\Cref{lem:main_lemma_2} to the matrices $U_{t_1},\ldots,U_{t_2-1}$ with $\gamma=2$.}
\end{align*} 
This shows that 
\begin{equation}\label{eq_bound_3}\|U\|_\frob \le O(d^2)\min_{t\in[t_1,t_2)} \|U_t\|_\frob \end{equation}
Combining~\Cref{eq_bound_1,eq_bound_3,eq_bound_4} gives us the desired bound of 
\[\|U\|_\frob \le O(d^4)\cdot \min_t \|U_t\|_\frob. \]

All that remains is to prove~\Cref{claim:new_claim}. 
\begin{proof}[Proof of~\Cref{claim:new_claim}]

\input{fig_2_telescoping_X}

While it can be shown that $\sum_{i_{t_1},i_{t_2}}X_t^{(i_{t_1},i_{t_2})}\cdot X_t^{(i_{t_1},i_{t_2})\dagger}\preceq \id$ using an argument similar to what we did for $W_{r,r'}^{(i_{t_1},i_{t_2})}$, what we wish for is $\sum_{i_{t_1},i_{t_2}}X_t^{(i_{t_1},i_{t_2})\dagger}\cdot X_t^{(i_{t_1},i_{t_2})}\preceq \id$. To show this, we are again going to use a telescoping sum to express $X_t^{(i_{t_1},i_{t_2})}$  as a sum of $O(d^2)$ matrices which satisfy the hypothesis of~\Cref{lemma:sum_op_norm}. We will then use~\Cref{lemma:sum_op_norm} to conclude that $X_t^{(i_{t_1},i_{t_2})}$ satisfies itself the hypothesis of~\Cref{lemma:sum_frob_norm} with $a=O(d^2)$. To express $X^{(i_{t_1},i_{t_2})}$ as desired, we will apply the telescoping identity on the terms $\tilP^{(i_{t_1})}_{[t_1,t_2)}$ and $\tilP^{(i_{t_1},i_{t_2})}_{[t_2,t)}$. Define matrices as follows. For $i,j\in[N]$, and $r\in (t_1,t),r'\in (t_2,t),$ let
\begin{align}\label{eq:new_w_r}\begin{split}
A^{(i,j)}&:=Q^{(i)}\cdot U_{[t_1,t_2)}\cdot Q^{(j)}\cdot U_{[t_2,t)}\\
B_{r'}^{(i,j)}&:=  Q^{(i)}\cdot U_{[t_1,t_2)}\cdot Q^{(j)}\cdot \tilP^{(j)}_{[t_2,r')}\cdot Q^{(j)}\cdot U_{[r',t)}\\
C_{r}^{(i,j)}&:=\begin{cases}Q^{(i)}\cdot \tilP^{(i)}_{[t_1,r)}\cdot Q^{(i)}\cdot   U_{[r,t_2)}\cdot Q^{(j)}\cdot U_{[t_2,t)}&\text{if }r\le  t_2 \\
Q^{(i)}\cdot \tilP^{(i)}_{[t_1,t_2)}\cdot Q^{(j)}\cdot \tilP^{(i)}_{[t_2,r)}\cdot   Q^{(i)}\cdot U_{[r,t)}&\text{if } r>t_2 \end{cases}\\
D_{r,r'}^{(i,j)}&:=\begin{cases}Q^{(i)}\cdot \tilP^{(i)}_{[t_1,r)}\cdot Q^{(i)}\cdot   U_{[r,t_2)}\cdot Q^{(j)}\cdot \tilP^{(j)}_{[t_2,r')}\cdot Q^{(j)}\cdot  U_{[r',t)}&\text{if }r\le t_2 \\
 Q^{(i)}\cdot \tilP^{(i)}_{[t_1,t_2)}\cdot Q^{(j)}\cdot \tilP^{(i,j)}_{[t_2,r)}\cdot   Q^{(i)}\cdot \tilP^{(j)}_{[r,r')}\cdot   Q^{(j)}\cdot U_{[r',t)}&\text{if } t_2<r\le  r' \\
  Q^{(i)}\cdot \tilP^{(i)}_{[t_1,t_2)}\cdot Q^{(j)}\cdot \tilP^{(i,j)}_{[t_2,r')}\cdot   Q^{(j)}\cdot \tilP^{(i)}_{[r',r)}\cdot Q^{(i)}\cdot U_{[r,t)}&\text{if } r>r' \\ 
\end{cases}\end{split}
\end{align}
Again, by the telescoping expansion described in~\Cref{fig_2:telescoping_X}, we have
\[ X_t^{(i_{t_1},i_{t_2})}= A^{(i_{t_1},i_{t_2})} - 2\sum_{r'\in(t_2,t)} B_{r'}^{(i_{t_1},i_{t_2})} - 2\sum_{r\in(t_1,t)} C_{r}^{(i_{t_1},i_{t_2})}+4\sum_{\substack{r'\in(t_2,t)\\r\in(t_1,t)}} D_{r,r'}^{(i_{t_1},i_{t_2})}. \]
We will now show that the matrices in~\Cref{eq:new_w_r} satisfy the hypothesis of~\Cref{lemma:sum_op_norm}. This is done by a case by case analysis. 
For ease of notation let $i_{t_1}=i$ and $i_{t_2}=j$. We now show the desired bound for $A^{(i,j)}, B_{r'}^{(i,j)},$ $C_r^{(i,j)}$ and $D_{r,r'}^{(i,j)}$.

\paragraph*{Analysis for $A^{(i,j)}$.} Consider $A^{(i,j)}=Q^{(i)}\cdot U_{[t_1,t_2)}\cdot Q^{(j)}\cdot U_{[t_2,t)}$. This is depicted in the top right of~\Cref{fig_3:cancellation_A}. We wish to take $\sum \pbra{\sum A^{(i,j)\dagger}}\cdot \pbra{\sum A^{(i,j)}}$, where we intentionally omit the identity of the indices being summed over in the inner summation -- it could be the empty summation, the summation over $i$, the summation over $j$, or the summation over both $i$ and $j$, and the outer summation is always understood to be over the remaining variables, if any. For all the resulting expressions, we wish to show that the result is $\preceq\id$. In all these cases, the expression consists of a sum over possibly $i',j'$ of the $\dagger$ terms and a sum over possibly $i,j$ of the regular terms. We first present the high level intuition and then the calculation. As before, we observe that the inner terms consist of $Q^{(i')\dagger}\cdot Q^{(i)}$ and by the orthogonality of the $Q^{(i)}$, the cross terms corresponding to $i\neq i'$ to vanish and we can pull the $\sum_i$ outside. Furthermore, we can sum over $i$ to obtain the identity matrix, as the terms before and after do not depend on $i$. Then, we can strip away the matrices between $Q^{(j')\dagger}$ and $Q^{(j)}$, as these only have operator norm at most 1 and do not depend on $j,j'$. We then use the orthogonality of the $Q^{(j)}$ to argue that the cross terms corresponding $j\neq j'$ vanish and we can pull the $\sum_j$ outside. Furthermore, we can sum over $j$ to obtain the identity matrix, as the terms before and after do not depend on $j$. This process is depicted in~\Cref{fig_3:cancellation_A} and is mathematically described below.

\begin{align*}
&\sum \pbra{\sum A^{(i,j)}}\cdot\pbra{\sum A^{(i,j)}}\\
&=\sum_i\underbrace{\pbra{\sum U_{[t_2,t)}^\dagger \cdot Q^{(j)\dagger}\cdot U_{[t_1,t_2)}^\dagger}}_{\text{independent of }i} \cdot \underbrace{ Q^{(i)\dagger }\cdot Q^{(i)}}_{\text{ sum to }\id}\cdot \underbrace{\pbra{\sum U_{[t_1,t_2)}\cdot Q^{(j)}\cdot U_{[t_2,t)}}}_{\text{independent of }i}\tag{\Cref{eq:Q_i_ortho_2} applied to $Q^{(i)}$}\\
&=\sum \pbra{\sum U_{[t_2,t)}^\dagger \cdot Q^{(j)\dagger}} \cdot U_{[t_1,t_2)}^\dagger \cdot U_{[t_1,t_2)}\cdot \pbra{\sum Q^{(j)}\cdot U_{[t_2,t)}}\\
&\preceq \sum \pbra{\sum U_{[t_2,t)}^\dagger \cdot Q^{(j)\dagger}} \cdot \pbra{\sum Q^{(j)}\cdot U_{[t_2,t)}}\\
&= \sum  U_{[t_2,t)}^\dagger \cdot \underbrace{Q^{(j)\dagger}\cdot  Q^{(j)}}_{\text{sum to }\id }\cdot U_{[t_2,t)} =U_{[t_2,t)}^\dagger\cdot  U_{[t_2,t)}\preceq \id.\tag{\Cref{eq:Q_i_ortho_2} applied to $Q^{(j)}$}
\end{align*}

\begin{figure}[t]
    \centering

 
\tikzset{
pattern size/.store in=\mcSize, 
pattern size = 5pt,
pattern thickness/.store in=\mcThickness, 
pattern thickness = 0.3pt,
pattern radius/.store in=\mcRadius, 
pattern radius = 1pt}
\makeatletter
\pgfutil@ifundefined{pgf@pattern@name@_khtb7etay}{
\pgfdeclarepatternformonly[\mcThickness,\mcSize]{_khtb7etay}
{\pgfqpoint{0pt}{0pt}}
{\pgfpoint{\mcSize+\mcThickness}{\mcSize+\mcThickness}}
{\pgfpoint{\mcSize}{\mcSize}}
{
\pgfsetcolor{\tikz@pattern@color}
\pgfsetlinewidth{\mcThickness}
\pgfpathmoveto{\pgfqpoint{0pt}{0pt}}
\pgfpathlineto{\pgfpoint{\mcSize+\mcThickness}{\mcSize+\mcThickness}}
\pgfusepath{stroke}
}}
\makeatother

 
\tikzset{
pattern size/.store in=\mcSize, 
pattern size = 5pt,
pattern thickness/.store in=\mcThickness, 
pattern thickness = 0.3pt,
pattern radius/.store in=\mcRadius, 
pattern radius = 1pt}
\makeatletter
\pgfutil@ifundefined{pgf@pattern@name@_ip71cc7sz}{
\pgfdeclarepatternformonly[\mcThickness,\mcSize]{_ip71cc7sz}
{\pgfqpoint{0pt}{0pt}}
{\pgfpoint{\mcSize+\mcThickness}{\mcSize+\mcThickness}}
{\pgfpoint{\mcSize}{\mcSize}}
{
\pgfsetcolor{\tikz@pattern@color}
\pgfsetlinewidth{\mcThickness}
\pgfpathmoveto{\pgfqpoint{0pt}{0pt}}
\pgfpathlineto{\pgfpoint{\mcSize+\mcThickness}{\mcSize+\mcThickness}}
\pgfusepath{stroke}
}}
\makeatother
\tikzset{every picture/.style={line width=0.75pt}} 

\begin{tikzpicture}[x=0.75pt,y=0.75pt,yscale=-1,xscale=1]

\draw  [draw opacity=0][pattern=_khtb7etay,pattern size=6pt,pattern thickness=0.75pt,pattern radius=0pt, pattern color={rgb, 255:red, 0; green, 0; blue, 0}] (353.25,59.97) -- (389.47,59.97) -- (389.47,113.44) -- (353.25,113.44) -- cycle ;
\draw   (352.81,61.45) -- (442.14,61.45) -- (442.14,115.66) -- (352.81,115.66) -- cycle ;
\draw [color={rgb, 255:red, 255; green, 0; blue, 0 }  ,draw opacity=1 ][line width=3.75]    (353.25,59.97) -- (353.25,88.64) ;
\draw [color={rgb, 255:red, 245; green, 166; blue, 35 }  ,draw opacity=1 ][line width=3.75]    (389.7,88.76) -- (389.81,116.03) ;
\draw   (174.81,61.45) -- (264.14,61.45) -- (264.14,115.66) -- (174.81,115.66) -- cycle ;
\draw [color={rgb, 255:red, 255; green, 0; blue, 0 }  ,draw opacity=1 ][line width=3.75]    (264.14,61.45) -- (264.14,90.13) ;
\draw [color={rgb, 255:red, 245; green, 166; blue, 35 }  ,draw opacity=1 ][line width=3.75]    (225.25,87.69) -- (225.37,114.96) ;
\draw  [draw opacity=0][pattern=_ip71cc7sz,pattern size=6pt,pattern thickness=0.75pt,pattern radius=0pt, pattern color={rgb, 255:red, 0; green, 0; blue, 0}] (227.92,62.19) -- (264.14,62.19) -- (264.14,115.66) -- (227.92,115.66) -- cycle ;
\draw   (391.02,129.45) -- (442.14,129.45) -- (442.14,183.66) -- (391.02,183.66) -- cycle ;
\draw [color={rgb, 255:red, 245; green, 166; blue, 35 }  ,draw opacity=1 ][line width=3.75]    (391.02,156.39) -- (391.14,183.66) ;
\draw   (176.03,131.65) -- (226.25,131.65) -- (226.25,185.86) -- (176.03,185.86) -- cycle ;
\draw [color={rgb, 255:red, 245; green, 166; blue, 35 }  ,draw opacity=1 ][line width=3.75]    (226.25,159.69) -- (226.37,186.96) ;
\draw [color={rgb, 255:red, 0; green, 0; blue, 0 }  ,draw opacity=1 ]   (287.04,78.1) -- (332.05,78.41) ;
\draw [shift={(335.05,78.43)}, rotate = 180.39] [fill={rgb, 255:red, 0; green, 0; blue, 0 }  ,fill opacity=1 ][line width=0.08]  [draw opacity=0] (10.72,-5.15) -- (0,0) -- (10.72,5.15) -- (7.12,0) -- cycle    ;
\draw [shift={(284.04,78.08)}, rotate = 0.39] [fill={rgb, 255:red, 0; green, 0; blue, 0 }  ,fill opacity=1 ][line width=0.08]  [draw opacity=0] (10.72,-5.15) -- (0,0) -- (10.72,5.15) -- (7.12,0) -- cycle    ;
\draw [color={rgb, 255:red, 0; green, 0; blue, 0 }  ,draw opacity=1 ]   (235.04,165.09) -- (383.14,165.39) ;
\draw [shift={(386.14,165.39)}, rotate = 180.11] [fill={rgb, 255:red, 0; green, 0; blue, 0 }  ,fill opacity=1 ][line width=0.08]  [draw opacity=0] (10.72,-5.15) -- (0,0) -- (10.72,5.15) -- (7.12,0) -- cycle    ;
\draw [shift={(232.04,165.08)}, rotate = 0.11] [fill={rgb, 255:red, 0; green, 0; blue, 0 }  ,fill opacity=1 ][line width=0.08]  [draw opacity=0] (10.72,-5.15) -- (0,0) -- (10.72,5.15) -- (7.12,0) -- cycle    ;

\draw (350.2,46.2) node [anchor=north west][inner sep=0.75pt]  [font=\small]  {$i$};
\draw (101.3,64.4) node [anchor=north west][inner sep=0.75pt]    {$A^{( i',j') \dagger } =$};
\draw (454.3,67.4) node [anchor=north west][inner sep=0.75pt]    {$=A^{( i,j)}$};
\draw (259.2,46.2) node [anchor=north west][inner sep=0.75pt]  [font=\small]  {$i'$};
\draw (393.14,187.06) node [anchor=north west][inner sep=0.75pt]  [font=\small]  {$j$};
\draw (221.92,187.58) node [anchor=north west][inner sep=0.75pt]  [font=\small]  {$j'$};
\draw (295,52.79) node [anchor=north west][inner sep=0.75pt]  [font=\small]  {$i=i'$};
\draw (286.04,81.48) node [anchor=north west][inner sep=0.75pt]  [font=\small]  {$\sum _{i} =\mathbb{I}$};
\draw (291,143.79) node [anchor=north west][inner sep=0.75pt]  [font=\small]  {$j=j'$};
\draw (285.04,173.48) node [anchor=north west][inner sep=0.75pt]  [font=\small]  {$\sum _{j} =\mathbb{I}$};

\end{tikzpicture}

    \caption{How to show that $A^{(i,j)}$ satisfies the hypothesis of~\Cref{lemma:sum_op_norm}.}
    \label{fig_3:cancellation_A}
\end{figure}.

\paragraph*{Analysis for $B_{r'}^{(i,j)}$.} \input{fig_3_cancellation_B}
Consider $B_{r'}^{(i,j)}=Q^{(i)}\cdot U_{[t_1,t_2)}\cdot Q^{(j)}\cdot \tilP^{(j)}_{[t_2,r')}\cdot Q^{(j)}\cdot U_{[r',t)}$. This is depicted in the top right of~\Cref{fig_3:cancellation_B}. We wish to control $\sum \pbra{\sum B_{r'}^{(i,j)\dagger}}\cdot \pbra{\sum B_{r'}^{(i,j)}}$ The strategy is as follows. As before, we observe that the inner terms consist of $Q^{(i')\dagger}\cdot Q^{(i)}$ and by the orthogonality of the $Q^{(i)}$, the cross terms corresponding to $i\neq i'$ to vanish and we can pull the $\sum_i$ outside. Furthermore, we can sum over $i$ to obtain the identity matrix, as the terms before and after do not depend on $i$. Then, we can strip away the matrices between the closest pair of $Q^{(j')\dagger}$ and $Q^{(j)}$, as these only have operator norm at most 1 and are independent of $j,j'$. We then use the orthogonality of the $Q^{(j)}$ to argue that the cross terms corresponding $j\neq j'$ vanish and we can pull the $\sum_j$ outside. Finally, we strip away all the terms except for the outermost $Q^{(j)\dagger}$ and $Q^{(j)}$, which we can do as these matrices have operator norm at most one. Finally, we can sum over $j$ to obtain the identity matrix, as the terms before and after do not depend on $j$. This process is depicted in~\Cref{fig_3:cancellation_B} and is mathematically described below.
\begin{align*}
&\sum \pbra{\sum B_{r'}^{(i,j)}}\cdot\pbra{\sum B_{r'}^{(i,j)}}\\
&=\sum_i \underbrace{\pbra{\sum 
U_{[r',t)}^\dagger \cdot  Q^{(j)\dagger}\cdot \tilP^{(j)\dagger}_{[t_2,r')}\cdot Q^{(j)\dagger }\cdot U_{[t_1,t_2)}^\dagger }}_{\text{independent of }i}\cdot \underbrace{Q^{(i)\dagger }\cdot Q^{(i)}}_{\text{sum to }\id} \\
&\underbrace{\pbra{\sum 
U_{[t_1,t_2)}\cdot Q^{(j)}\cdot \tilP^{(j)}_{[t_2,r')}\cdot Q^{(j)}\cdot U_{[r',t)}}}_{\text{independent of }i}\tag{\Cref{eq:Q_i_ortho_2} applied to $Q^{(i)}$}\\
&=\sum \pbra{\sum 
U_{[r',t)}^\dagger \cdot  Q^{(j)\dagger}\cdot \tilP^{(j)\dagger}_{[t_2,r')}\cdot Q^{(j)\dagger }}\cdot U_{[t_1,t_2)}^\dagger \cdot U_{[t_1,t_2)}\cdot\pbra{\sum 
 Q^{(j)}\cdot \tilP^{(j)}_{[t_2,r')}\cdot Q^{(j)}\cdot U_{[r',t)}}\tag{\Cref{eq:Q_i_sum_to_one} applied to $Q^{(i)}$}\\
&\preceq\sum \pbra{\sum 
U_{[r',t)}^\dagger \cdot  Q^{(j)\dagger}\cdot \tilP^{(j)\dagger}_{[t_2,r')}\cdot Q^{(j)\dagger }}\cdot \pbra{\sum 
Q^{(j)}\cdot \tilP^{(j)}_{[t_2,r')}\cdot Q^{(j)}\cdot U_{[r',t)}}\\
&=\sum_j  
U_{[r',t)}^\dagger \cdot  Q^{(j)\dagger}\cdot \underbrace{ \tilP^{(j)\dagger}_{[t_2,r')}\cdot Q^{(j)\dagger }\cdot 
Q^{(j)}\cdot \tilP^{(j)}_{[t_2,r')}}_{\text{norm }\le 1} \cdot Q^{(j)}\cdot U_{[r',t)}\tag{\Cref{eq:Q_i_ortho_2} applied to $Q^{(j)}$}\\
&\preceq \sum_j  
U_{[r',t)}^\dagger \cdot  \underbrace{Q^{(j)\dagger}\cdot Q^{(j)}}_{\text{sum to }\id} \cdot U_{[r',t)}  \preceq \sum_j  
U_{[r',t)}^\dagger\cdot U_{[r',t)}\preceq \id.  \tag{\Cref{eq:Q_i_sum_to_one} applied to $Q^{(j)}$}
\end{align*}

\paragraph*{Analysis for $C_{r}^{(i,j)}$.} The analysis for this case is very similar to that of $B_{r'}^{(i,j)}$. We omit the details and describe the strategy in~\Cref{fig_3:cancellation_C} for $r>t_2$ and the analysis for the other case $r\le t_2$ is similar, just with operations performed in a different order.
\input{fig_3_cancellation_C}

\paragraph*{Analysis for $D_{r,r'}^{(i,j)}$.} For simplicity, consider the case $r\le t_2$. In this case, we have $D_{r,r'}^{(i,j)}=Q^{(i)}\cdot \tilP^{(i)}_{[t_1,r)}\cdot Q^{(i)}\cdot   U_{[r,t_2)}\cdot Q^{(j)}\cdot \tilP^{(j)}_{[t_2,r')}\cdot Q^{(j)}\cdot  U_{[r',t)}$. The term $D_{r,r'}^{(i,j)}$ is depicted at the top right of~\Cref{fig_3:cancellation_D}. We wish to take $\sum \pbra{\sum D_{r,r'}^{(i,j)}}\cdot \pbra{\sum D_{r,r'}^{(i,j)}}$ and and show that the result is $\preceq\id$.

Firstly, we observe that the inner terms consist of $Q^{(i')\dagger}\cdot Q^{(i)}$ and by the orthogonality of the $Q^{(i)}$, the cross terms corresponding to $i\neq i'$ to vanish and we can pull the $\sum_i$ outside. Now, for each $i$, we can strip away the matrices between the outer $Q^{(i)\dagger}$ and $Q^{(i)}$, as these only depend on $i$ and have operator norm at most 1. We then sum over $i$ to obtain the identity matrix, as the terms before and after this pair do not depend on $i$. Then, we again strip away all the terms until the remaining pair of $Q^{(j)\dagger}$ and $Q^{(j')}$, as these have operator norm at most 1 and are independent of $j,j'$. Then, we can use the orthogonality of the $Q^{(j)}$ to argue that the cross terms corresponding $j\neq j'$ vanish and we can pull the $\sum_j$ outside. Now, we strip away all the terms until the closest pair of $Q^{(j)\dagger}$ and $Q^{(j)}$ as these have operator norm at most 1. We then sum over $j$ to obtain the identity matrix, as the terms before and after don't depend on $j$. This process is depicted in~\Cref{fig_3:cancellation_D} and is mathematically described below.

\begin{align*}
&\sum \pbra{\sum D_{r,r'}^{(i',j')}}\cdot \pbra{\sum D_{r,r'}^{(i,j)}}\\
&=\sum_{i} \pbra{\sum U_{[r',t)}^\dagger \cdot Q^{(j)\dagger}\cdot \tilP^{(j)\dagger}_{[t_2,r')} \cdot Q^{(j)\dagger} \cdot   \pbra{U_{[r,t_2)}^\dagger\cdot Q^{(i)\dagger}\cdot \tilP^{(i)\dagger}_{[t_1,r)}\cdot Q^{(i)\dagger} }} \\
&\cdot \pbra{\sum \pbra{Q^{(i)}\cdot \tilP^{(i)}_{[t_1,r)}\cdot Q^{(i)}\cdot   U_{[r,t_2)}}\cdot Q^{(j)}\cdot \tilP^{(j)}_{[t_2,r')}\cdot Q^{(j)}\cdot  U_{[r',t)}}\tag{\Cref{eq:Q_i_ortho_2} applied to $Q^{(i)}$}\\
&=\sum_{i} \pbra{\sum U_{[r',t)}^\dagger \cdot Q^{(j)\dagger}\cdot \tilP^{(j)\dagger}_{[t_2,r')} \cdot Q^{(j)\dagger} }\cdot   \underbrace{\pbra{U_{[r,t_2)}^\dagger\cdot Q^{(i)\dagger}\cdot \tilP^{(i)\dagger}_{[t_1,r)}\cdot Q^{(i)\dagger} }}_{\text{ independent of }j} \\
&\cdot   \underbrace{\pbra{Q^{(i)}\cdot \tilP^{(i)}_{[t_1,r)}\cdot Q^{(i)}\cdot   U_{[r,t_2)}}}_{\text{ independent of }j}\cdot \pbra{\sum Q^{(j)}\cdot \tilP^{(j)}_{[t_2,r')}\cdot Q^{(j)}\cdot  U_{[r',t)}}\\
&\preceq \sum_{i} \underbrace{\pbra{\sum U_{[r',t)}^\dagger \cdot Q^{(j)\dagger}\cdot \tilP^{(j)\dagger}_{[t_2,r')} \cdot Q^{(j)\dagger} }}_{\text{independent of } i}\cdot U_{[r,t_2)}^\dagger\cdot \underbrace{Q^{(i)\dagger} \cdot  Q^{(i)}}_{\text{sum to }\id}\cdot U_{[r,t_2)}\\
&\cdot \underbrace{\pbra{\sum Q^{(j)}\cdot \tilP^{(j)}_{[t_2,r')}\cdot Q^{(j)}\cdot  U_{[r',t)}}}_{\text{independent of } i}\\
&= \sum \pbra{\sum U_{[r',t)}^\dagger \cdot Q^{(j)\dagger}\cdot \tilP^{(j)\dagger}_{[t_2,r')} \cdot Q^{(j)\dagger} }\cdot U_{[r,t_2)}^\dagger\cdot U_{[r,t_2)}\cdot \pbra{\sum Q^{(j)}\cdot \tilP^{(j)}_{[t_2,r')}\cdot Q^{(j)}\cdot  U_{[r',t)}}\tag{\Cref{eq:Q_i_sum_to_one} applied to $Q^{(i)}$}\\
&\preceq \sum \pbra{\sum U_{[r',t)}^\dagger \cdot Q^{(j)\dagger}\cdot \tilP^{(j)\dagger}_{[t_2,r')} \cdot Q^{(j)\dagger} }\cdot  \pbra{\sum Q^{(j)}\cdot \tilP^{(j)}_{[t_2,r')}\cdot Q^{(j)}\cdot  U_{[r',t)}}\\
&= \sum_{j}U_{[r',t)}^\dagger \cdot Q^{(j)\dagger}\cdot \underbrace{\tilP^{(j)\dagger}_{[t_2,r')} \cdot Q^{(j)\dagger}\cdot Q^{(j)}\cdot \tilP^{(j)}_{[t_2,r')}}_{\text{ norm }\le 1}\cdot Q^{(j)}\cdot  U_{[r',t)}\tag{\Cref{eq:Q_i_ortho_2} applied to $Q^{(j)}$}\\
&\preceq \sum_{j}U_{[r',t)}^\dagger \cdot \underbrace{Q^{(j)\dagger}\cdot Q^{(j)}}_{\text{sum to } \id}\cdot  U_{[r',t)} \preceq U_{[r',t)}^\dagger\cdot U_{[r',t)}\preceq \id \tag{\Cref{eq:Q_i_sum_to_one} applied to $Q^{(j)}$}.
\end{align*}

\input{fig_3_cancellation_D}

The analysis for the other cases is similar, and for completeness, we depict the strategy to handle $t_2<r\le r'$ in~\Cref{fig_3:cancellation_D(2)} and $r>r'$ in~\Cref{fig_3:cancellation_D(3)}. This completes the proof of~\Cref{claim:new_claim}.
\input{fig_3_cancellation_D_2}
\input{fig_3_cancellation_D_3}

\end{proof}
\end{proof}

\subsection{Quantum Algorithms with Classical Pre-Processing.}
\label{sec:hybrid}

In this section, we prove a variant of~\Cref{thm:main_theorem_dqck} and~\Cref{thm:main_theorem_bqp} in a more general setting of algorithms that can perform classical pre-processing. We now describe this model more formally. A $d$-query $\DQC{k}$ (respectively $\BQP$) algorithm with classical pre-processing consists of two phases:
\begin{itemize}
\item \textsc{Classical Phase:} The algorithm performs $d$ classical queries on clean workspace.
\item  \textsc{Quantum Phase:} Based on the results, the algorithm chooses a $d$-query $\DQC{k}$ (respectively $\BQP$) algorithm to run and returns the output. 
\end{itemize}

\begin{theorem} Let $\cF$ denote the family of acceptance probabilities of a class of algorithms without classical pre-processing. Let $f(x)$ be the acceptance probability of an algorithm with $d$ classical pre-processing queries. Let $\rho\in\{-1,1,*\}^N$ be any restriction and $\alpha\in[-1,1]^{\binom{N}{\ell}}$ signs. Then, there exist $f'\in \cF$ such that
\[ L_{1,\ell}^\alpha(f|_\rho)\le \sum_{k=0}^\ell    \binom{d}{\ell-k}\cdot \max_{\alpha'}L_{1,k}^{\alpha'}(f'|_{\rho}), \]
where the maximum is over $\alpha'$, another family of signs. 
\label{thm:hybrid}
\end{theorem}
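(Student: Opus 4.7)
The plan is to expand $f$ using the depth-$d$ decision tree induced by its classical pre-processing phase. Each leaf $T$ fixes a transcript consisting of indices $i_1^T,\ldots,i_{d_T}^T \in [N]$ with answers $v_1^T,\ldots,v_{d_T}^T \in \bin$ (with $d_T \le d$) together with a $d$-query quantum algorithm $f_T \in \cF$, producing the partition-of-unity decomposition $f(x) = \sum_T \indi_T(x) f_T(x)$, where $\indi_T(x) = \prod_{j=1}^{d_T} (1 + v_j^T x_{i_j^T})/2$. After applying $\rho$, only leaves consistent with $\rho$ survive, and $\indi_T(\rho(x))$ factorizes as a product over the $d_T' \le d_T$ \emph{free} queried coordinates of $T$, which I collect into a set $Q_T' \subseteq [N]$; let $\rho_T'$ denote the extension of $\rho$ that additionally fixes each $i \in Q_T'$ to its answer under $T$.

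Viewing each leaf's contribution via conditional expectation on the slice consistent with $T$, I would establish the Fourier-coefficient identity
\[
\widehat{f|_\rho}(S) = \sum_T 2^{-d_T'}\, \chi_{S \cap Q_T'}(v_T) \cdot \widehat{f_T|_{\rho_T'}}(S \setminus Q_T').
\]
Decomposing $S = S' \sqcup S''$ with $S'' = S \cap Q_T'$ and $S' = S \setminus Q_T'$ and grouping by the level $k = |S'|$ (so that $|S''| = \ell - k$ for $k \in \{0,1,\ldots,\ell\}$) yields
\[
L_{1,\ell}^\alpha(f|_\rho) = \sum_{k=0}^\ell \sum_T 2^{-d_T'} \sum_{\substack{S'' \subseteq Q_T' \\ |S''| = \ell - k}} \chi_{S''}(v_T) \cdot L_{1,k}^{\alpha^{T,S''}}\pbra{f_T|_{\rho_T'}},
\]
with signs $\alpha^{T,S''}_{S'} := \alpha_{S' \sqcup S''} \cdot \indi\sbra{S' \cap Q_T' = \emptyset} \in [-1,1]$.

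Taking absolute values, bounding $\binom{d_T'}{\ell-k} \le \binom{d}{\ell-k}$, and invoking the partition-of-unity identity $\sum_T 2^{-d_T'} = 1$ (the $\rho$-compatible leaves partition the $\rho$-slice), I would deduce
\[
L_{1,\ell}^\alpha(f|_\rho) \le \sum_{k=0}^\ell \binom{d}{\ell-k} \sum_T 2^{-d_T'} \max_{\alpha'} L_{1,k}^{\alpha'}\pbra{f_T|_{\rho_T'}}.
\]
Letting $T^*$ maximize $\sum_k \binom{d}{\ell-k} \max_{\alpha'} L_{1,k}^{\alpha'}(f_T|_{\rho_T'})$ over $\rho$-compatible leaves and setting $f' := f_{T^*}|_{\rho_{T^*}'} \in \cF$ (using closure of $\cF$ under restrictions, which absorbs the extra fixing into the choice of quantum algorithm), the averaging identity upgrades the weighted average over $T$ to the value at $T^*$, delivering the claimed bound.

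For the final clause, when $k = \ell$ necessarily $S'' = \emptyset$, so $\alpha^{T^*,\emptyset}_{S'} = \alpha_{S'} \cdot \indi\sbra{S' \cap Q_{T^*}' = \emptyset}$; for signs $\alpha$ of the form in~\Cref{def:alpha}, the product structure over the blocks $A,B,C$ together with the coordinate-wise factor $\prod_t \gamma_{i_t}$ is preserved by zeroing out $\gamma_i$ for $i \in Q_{T^*}'$, leaving $\alpha'$ in the same level-$\ell$ sign family. The main obstacle is the averaging step that condenses the sum over transcripts into a single $f' \in \cF$ without losing level-wise granularity; this hinges on the tree-partition identity turning the $T$-sum into a convex combination, combined with closure of $\cF$ under restrictions to absorb $\rho_{T^*}'$.
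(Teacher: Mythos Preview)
Your proof is correct and follows essentially the same approach as the paper: both decompose $f$ via the depth-$d$ classical decision tree, split each Fourier set $S$ into its intersection with the queried coordinates and its complement, pass from the convex combination over leaves to a single maximizing leaf, and handle the final clause by zeroing out the relevant entries of $\gamma$. The only presentational difference is that you work explicitly with the extended restriction $\rho_T'$ and then invoke closure of $\cF$ under restrictions to match the statement, whereas the paper keeps writing $f_y|_\rho$ throughout; your formulation is in fact the more careful one, and the two agree for the intended applications since the underlying Fourier growth bounds hold for arbitrary restrictions.
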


\begin{corollary} \label{cor:hybrid} Analogues of~\Cref{thm:main_theorem_dqck,thm:main_theorem_bqp} hold even for algorithms with classical pre-processing.
\end{corollary}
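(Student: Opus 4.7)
The plan is to prove~\Cref{thm:hybrid} first and then derive~\Cref{cor:hybrid}. For each classical transcript $y\in\bits^d$, let $\pi_y(x)=\indi[x\text{ yields transcript }y]=\prod_{i=1}^d \tfrac{1+y_i x_{j_i^y}}{2}$, a $\{0,1\}$-valued function depending only on $x|_{J_y}$ where $J_y=\{j_1^y,\ldots,j_d^y\}$ and $|J_y|\le d$, and let $g_y\in\cF$ be the quantum algorithm executed along that branch, so that $f=\sum_y \pi_y g_y$. Composing with an outer restriction $\rho$ gives $f|_\rho=\sum_y \pi_y|_\rho \cdot g_y|_\rho$, where only $\rho$-consistent transcripts survive.

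First I would compute $\widehat{\pi_y g_y}(S)$ for a single branch. Let $y^\star\in\bits^{J_y}$ be the unique assignment that drives the decision tree along $y$, and decompose $S=S_1\sqcup S_2$ with $S_1\sse J_y$ and $S_2\cap J_y=\emptyset$. Using $\pi_y=\indi[x|_{J_y}=y^\star]$ together with independence of the uniform measure across disjoint coordinates, a direct calculation yields
\[
\widehat{\pi_y g_y}(S) \;=\; 2^{-d}\cdot \chi_{S_1}(y^\star)\cdot \widehat{g_y|_{\rho_y}}(S_2),
\]
where $\rho_y$ is the restriction fixing $J_y$ to $y^\star$. Since hard-coding known input bits into a quantum circuit preserves query count, $f'_y:=g_y|_{\rho_y}\in\cF$; the outer $\rho$ commutes with the inner $\rho_y$, so $f'_y|_\rho$ is the relevant object in the recursion.

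Next I would sum over $|S|=\ell$ with signs $\alpha$ and group by $k:=|S_2|\in\{0,\ldots,\ell\}$:
\[
L_{1,\ell}^{\alpha}(f|_\rho) \;=\; 2^{-d}\sum_{y}\sum_{k=0}^{\ell}\sum_{\substack{S_1\sse J_y\\|S_1|=\ell-k}} \chi_{S_1}(y^\star)\cdot L_{1,k}^{\alpha^{y,S_1}}(f'_y|_\rho),
\]
where $\alpha^{y,S_1}_{S_2}:=\alpha_{S_1\cup S_2}\cdot\indi[S_2\cap J_y=\emptyset]\in[-1,1]$. Taking absolute values, using that the tree has at most $2^d$ leaves and that there are $\binom{d}{\ell-k}$ choices of $S_1$ per leaf, and applying the standard $\max\ge\mathrm{avg}$ step, I can pick $f':=f'_{y^\star}$ for the $y^\star$ maximizing $\sum_k \binom{d}{\ell-k}\max_{\alpha'}L_{1,k}^{\alpha'}(f'_y|_\rho)$ and obtain the RHS of~\Cref{thm:hybrid} with a single $f'\in\cF$. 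For the ``furthermore'' clause, when $k=\ell$ we have $S_1=\emptyset$, so $\chi_{S_1}(y^\star)=1$ and $\alpha^{y,\emptyset}_{S_2}=\alpha_{S_2}\cdot\indi[S_2\cap J_y=\emptyset]$; for $\alpha=\alpha(\gamma)$ or $\beta(\gamma)$ as in~\Cref{def:alpha} this extra indicator is a product across coordinates and is absorbed by the substitution $\gamma_i\leftarrow \gamma_i\cdot \indi[i\notin J_y]\in[-1,1]$, so $\alpha^{y,\emptyset}$ is again $\alpha(\gamma')$ or $\beta(\gamma')$ as required.

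\Cref{cor:hybrid} then follows by plugging the bounds of~\Cref{thm:main_theorem_dqck,thm:main_theorem_hbqp,thm:main_theorem_bqp} into~\Cref{thm:hybrid}. Each summand scales like $\binom{d}{\ell-k}\binom{2d}{k}N^{(k-c)/2}$ times a model-dependent factor; the $k=\ell$ term dominates for large $N$, and Vandermonde's identity $\sum_k\binom{d}{\ell-k}\binom{2d}{k}=\binom{3d}{\ell}$ bundles the prefactor, reproducing each main theorem with $\binom{2d}{\ell}$ replaced by $\binom{3d}{\ell}$, an $O(1)^\ell$ loss. The main obstacle I anticipate is bookkeeping rather than analysis: (i) verifying that the outer $\rho$ and branch-induced $\rho_y$ compose so the recursion stays inside $\cF$; and (ii) for the $\hBQP$ lower bound, preserving the sign structure of~\Cref{def:alpha} at $k=\ell$ so that~\Cref{thm:main_theorem_hbqp} can be invoked (a general level-$k$ $\hBQP$ bound being unavailable).
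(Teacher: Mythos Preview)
Your proof of \Cref{thm:hybrid} is essentially the paper's argument, phrased via Fourier coefficients of $\pi_y g_y$ rather than conditional expectations; the decomposition $S=S_1\sqcup S_2$, the per-leaf bound, and the averaging step all match. Your derivation of \Cref{cor:hybrid} for $\DQC{k}$ and $\BQP$ via Vandermonde is exactly what the paper does.

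However, there is a genuine gap in the $\hBQP$ level-6 case. Your claim that ``the $k=\ell$ term dominates for large $N$'' is false here: for $k=5$ you only have the $\BQP$ bound $L_{1,5}\le O(d^5)N^2$ from \Cref{thm:main_theorem_bqp}, so the $k=5$ contribution is $\binom{d}{1}\cdot O(d^5)N^2=O(d^6)N^2$, which exceeds the target $O(d^6)N^{1.5}$. You flag the absence of a general $\hBQP$ level-$k$ bound as an obstacle, but then implicitly assume the issue away. The paper handles this with a separate non-obvious trick: for each queried coordinate $i$ it builds a $(d{+}1)$-query $\hBQP$ algorithm $\mathcal{A}'$ whose acceptance probability is (up to affine terms) $\tfrac12 f_y|_\rho+\tfrac12 x_i\cdot f_y|_\rho$, so that the level-5 term $\Delta_i$ appears inside the level-6 signed Fourier growth of $\mathcal{A}'$; applying \Cref{thm:main_theorem_hbqp} to $\mathcal{A}'$ then yields $|\Delta_i|\le O(d^6)N^{1.5}$, and summing over the $\le d$ queried coordinates gives an overall $O(d^7)N^{1.5}$ bound. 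Without this step your argument does not close for $\hBQP$ level 6.
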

 
\begin{proof}[Proof of~\Cref{thm:hybrid}]
We view the classical phase as a decision tree of depth $d$ with $2^d$ leaves where each leaf $y$ selects an algorithm $f_y$ to run. Furthermore, we view each leaf $y$ as a partial assignment in $\{-1,1,*\}^N$ where the coordinates that are queried are assigned $\pm 1$ depending on the outcome of the query, and the coordinates not queried are assigned $*$. We use $y^{-1}(*)$ to denote the coordinates of $y$ that are alive. We know that $|y^{-1}(*)| \ge N-d$. This defines a restriction $\rho_y\in \{-1,1,*\}^{N}$ of the variables which restricts the $i$-th coordinate to $y_i$ if $y_i\in\bin$ and leaves it alive otherwise. We can assume that any $y$ that is ever traversed is consistent with $\rho$. For any such $y$, let $f_y(x)$ be the acceptance probability of the algorithm chosen conditioned on receiving $y$ in the first stage. Consider:
\begin{align*}
&L_{1,\ell}^\alpha(f|_\rho)\\
&=\E_{x\sim\{-1,1\}^N}\sbra{\sum_{|S|=\ell} \alpha_S \cdot f|_\rho(x)\cdot \chi_S(x)}\\
&=\E_{y\text{ consistent with }\rho}\sbra{\E_{\substack{x\sim \bin^N \\\text{consistent with }y}}\sbra{\sum_{|S|=\ell} \alpha_S \cdot f_y(\rho(x))\cdot \chi_S(x) }} \\
&= \E_{y\text{ consistent with }\rho}\sbra{\E_{\substack{x\sim \bin^N \\\text{consistent with }y}}\sbra{\sum_{k=0}^\ell \sum_{\substack{S_1\subseteq y^{-1}(*)\\ S_2\subseteq [N]\setminus y^{-1}(*) \\ |S_1|=k,|S_2|=\ell-k}} \alpha_{S_1\cup S_2} \cdot f_y(\rho(x))\cdot \chi_{S_1}(x)\cdot \chi_{S_2}(x)}}.
\end{align*}
Fix a leaf $y$ that maximizes the above quantity. Since  we are only taking expectations over $x$ consistent with $y$, we can replace $\chi_{S_2}(x)$ by $\chi_{S_2}(y)$ in the R.H.S. above and similarly, $\rho(x)$ only depends on the variables in $ S_1$. Once we do this, $x$ is completely free of $y$ and we can replace the expectation of $x\sim\{-1,1\}^N$ consistent with $y$ by simply $x\sim\{-1,1\}^N$. We obtain 

\begin{align}\label{eq:preprocess_1}  
L_{1,\ell}^\alpha(f|_\rho)&\le  \E_{x\sim \bin^N }\sbra{\sum_{k=0}^\ell \sum_{\substack{S_1\subseteq y^{-1}(*)\\ S_2\subseteq [N]\setminus y^{-1}(*) \\ |S_1|=k,|S_2|=\ell-k}} \alpha_{S_1\cup S_2}\cdot \chi_{S_2}(y)\cdot f_y(\rho(x))\cdot \chi_{S_1}(x) } \end{align}
Since $|\chi_{S_2}(y)|\le 1$, applying Triangle Inequality gives
\begin{align}\label{eq:preprocess_2}  \begin{split} 
L_{1,\ell}^\alpha(f|_\rho) &\le \sum_{k=0}^\ell \sum_{\substack{S_2\subseteq [N]\setminus y^{-1}(*) \\ |S_2|=\ell-k}} \abs{\E_{x\sim \bin^N }\sbra{\sum_{\substack{S_1\subseteq y^{-1}(*)\\ |S_1|=k}} \alpha_{S_1\cup S_2}\cdot f_y(\rho(x))\cdot \chi_{S_1}(x)}} 
\end{split}\end{align} 
Define $\gamma\in[-1,1]^N$ by $\gamma_i=1$ if $i\in y^{-1}(*)$ and 0 otherwise. For any fixed $k\in\{0,\ldots,\ell\}$ and $S_2\subseteq [N]\setminus y^{-1}(\star)$ of size $ \ell-k$, define signs ${\alpha}^{S_2}$ that are non-zero only for $S_1\subseteq [N]$ with size $k$ so that \[\alpha^{S_2}_{S_1}:=\alpha_{S_1\cup S_2}\cdot \chi_{S_1}(\gamma).\] Observe that $\chi_{S_1}(\gamma)=1$ if $S_1\subseteq y^{-1}(*)$ and 0 otherwise. Thus, 
\[ \sum_{\substack{S_1\subseteq y^{-1}(*)\\|S_1|=k}} \alpha_{S_1\cup S_2} \cdot \chi_{S_1}(x) =\sum_{\substack{S_1\subseteq [N]\\|S_1|=k}} \alpha_{S_1\cup S_2}\cdot \chi_{S_1}(\gamma) \cdot \chi_{S_1}(x)\triangleq \sum_{\substack{S_1\subseteq [N]\\|S_1|=k}}  \alpha_{S_1}^{S_2}\cdot \chi_{S_1}(x).\]
Finally, we observe that 
\[ \E_{x\sim \bin^N }\sbra{ \sum_{\substack{S_1\subseteq [N]\\|S|=k}} \alpha^{S_2}_{S_1}\cdot f_y(\rho(x)) \cdot \chi_{S_1}(x)}\triangleq \sum_{\substack{S_1\subseteq [N]\\|S|=k}} \alpha^{S_2}_{S_1}\cdot \widehat{f_y|_\rho}(S_1)\triangleq L_{1,k}^{\alpha^{S_2}}(f_y|_{\rho}).\]
Substituting this in~\Cref{eq:preprocess_2}, we get
\[L_{1,\ell}^\alpha(f)\le \sum_{k=0}^\ell \binom{d}{\ell-k} \cdot \max_{\alpha'}L_{1,k}^{\alpha'}(f_y|_{\rho}),\]
where we used the fact that $N-|y^{-1}(*)|\le d$. 
This completes the proof. 
\end{proof}

\begin{proof}[Proof of~\Cref{cor:hybrid} from~\Cref{thm:hybrid}]
Let $\cF$ (respectively $\cF'$) denote the class of $d$-query $\DQC{k}$ algorithms with (respectively without) classical pre-processing. Applying~\Cref{thm:hybrid}, we have 
\[ L_{1,\ell}(\cF)\le \sum_{k=0}^\ell \binom{d}{\ell-k}\cdot L_{1,k}^{\alpha'}(f|_{\rho'}). \]
We now apply~\Cref{thm:main_theorem_dqck} to bound each $ L_{1,k}^{\alpha'}(f|_{\rho'})$ and this gives 
\[ L_{1,\ell}(\cF)\le  \sum_{k=0}^{\ell} \binom{d}{\ell-k}\cdot c^k\cdot d^3 \cdot N^{(k-2)/2}\cdot \log(N)^{k-2}\cdot \sqrt{k!} \le \tilde{O}\pbra{ d^{\ell+3}\cdot N^{(\ell-2)/2}\cdot \sqrt{\ell !}}\]
as desired. The proof for $\BQP$ algorithms is identical and we obtain a bound of $\tilde{O}\pbra{d^{\ell+1}\cdot N^{(\ell-1)/2}\cdot \sqrt{\ell!}}$.

\end{proof}

\subsection{Simulating $\DQC{k}$ algorithms by $\DQC{k-t}$ algorithms.}
\label{sec:simulation}

\begin{figure}
\centering
\mbox{ 
\Qcircuit @C=1em @R=.7em {
\lstick{n+w} & \qw & \qw &\qw &\multigate{4}{U} & \meter \\
\lstick{} & & \gate{X} &\ctrl{1} &\ghost{U} &\meter   \\ 
\lstick{} & & \gate{X} &  \ctrl{3} & \ghost{U} & \meter   \inputgroupv{2}{3}{.8em}{.8em}{t+1\hspace{1em}} \\
\lstick{}  & \ket{0} &  & \qw&\ghost{U} &\meter \\
\lstick{} &  \ket{0} &  &\qw & \ghost{U} &\meter \inputgroupv{4}{5}{.8em}{.8em}{k-t-1\hspace{3em} } \\
\lstick{} &\ket{0}  &  &  \targ \qw &  \qw  & \meter   \\
}
}
\caption{Simulating a $\DQC{k}$ algorithm by a $\DQC{k-t}$ algorithm.}
\label{fig:qcircuit}
\end{figure}

\begin{claim} \label{claim:simulation}Let $g(x)$ be the bias of a $d$-query $\DQC{k}$ algorithm. Then, there is a $d$-query $\DQC{k-t}$ algorithm whose bias is $g(x)\cdot 2^{-t-1}$.
\end{claim}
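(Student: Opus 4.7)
The plan is to design a $\DQC{k-t}$ algorithm $\mathcal{B}$ that coherently runs $\mathcal{A}$ by borrowing $t+1$ noisy qubits to play the role of $t+1$ of $\mathcal{A}$'s clean qubits, using one real clean qubit $f$ as a flag that checks whether the borrowed qubits happened to start in $\ket{0}^{\otimes(t+1)}$ (in which case the simulation of $\mathcal{A}$ is faithful), and one untouched noisy qubit $g$ as a uniform tie-breaker for the branch where the flag fails. The arithmetic then collapses exactly to bias $g(x)\cdot 2^{-t-1}$.

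Concretely, let $\mathcal{A}$ have unitary $U$ on $n+w+k$ qubits and acceptance set $\mathcal{F}$. Build $\mathcal{B}$ on $k-t$ clean qubits $c_1,\ldots,c_{k-t-1},f$ and $n+w+t+2$ noisy qubits $q_1,\ldots,q_{n+w}$, $m_1,\ldots,m_{t+1}$, and $g$, in three stages. In stage one, apply the unitary that flips $f$ iff $m_1=\cdots=m_{t+1}=0$; this is implemented by applying $X$ to each $m_i$, then a standard multi-controlled NOT from all the $m_i$'s targeting $f$, then $X$ on each $m_i$ again to restore them. In stage two, apply $U$ to the register $(q_1,\ldots,q_{n+w},m_1,\ldots,m_{t+1},c_1,\ldots,c_{k-t-1})$, identified with $\mathcal{A}$'s register layout; note that $f$ and $g$ are untouched by $U$. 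In stage three, measure every qubit and accept iff either (a) $f=1$ and the $U$-register outcome lies in $\mathcal{F}$, or (b) $f=0$ and $g=1$. This is a valid $\DQC{k-t}$ algorithm making the same $d$ oracle queries as $\mathcal{A}$, since stages one and three involve no queries.

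For the analysis, the initial basis state $b^m$ of $(m_1,\ldots,m_{t+1})$ is uniform over $\{0,1\}^{t+1}$. Conditioned on $b^m=0^{t+1}$ (probability $2^{-(t+1)}$), stage one sets $f=1$ and the input to $U$ exactly matches $\mathcal{A}$'s initial state, so branch (a) accepts with conditional probability $\Pr_{\mathcal{A}}[\text{accept}]$. Conditioned on $b^m\ne 0^{t+1}$, we have $f=0$ and $g$ remains a fresh uniform $\{0,1\}$ bit independent of everything (since no gate ever acts on $g$), so branch (b) accepts with probability exactly $1/2$. Hence
\[
\Pr_{\mathcal{B}}[\text{accept}] \;=\; 2^{-(t+1)}\,\Pr_{\mathcal{A}}[\text{accept}] \;+\; \bigl(1-2^{-(t+1)}\bigr)\cdot \tfrac{1}{2} \;=\; \tfrac{1}{2} + 2^{-(t+2)}\,g(x),
\]
so the bias $2\Pr_{\mathcal{B}}[\text{accept}]-1$ is exactly $g(x)\cdot 2^{-t-1}$, as claimed.

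The construction is conceptually simple, so the main ``obstacle'' is just arranging the bookkeeping. Two mild subtleties deserve mention: stage one must be a genuine unitary (not a mid-circuit projection), which is why we use the bracketed-$X$/multi-controlled-NOT gadget; and the tie-breaker $g$ must remain a uniform bit independent of everything else at measurement time, which is automatic because $g$ is never acted upon by any gate in the circuit.
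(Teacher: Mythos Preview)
Your proof is correct and follows essentially the same approach as the paper: borrow $t+1$ noisy qubits as stand-ins for clean ones, use a single clean flag qubit set via a multi-controlled NOT to detect the all-zeros event, and output a fresh noisy bit otherwise. Your write-up is in fact slightly more careful than the paper's---you explicitly undo the $X$ gates so the $m_i$ are restored to $\ket{0}$ before feeding them into $U$, and you explicitly allocate the tie-breaker qubit $g$---but the idea and the arithmetic are identical.
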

\begin{proof}[Proof of~\Cref{claim:simulation}]
Given a $d$-query $\DQC{k}$ algorithm with $n+w$ noisy bits, consider a $\DQC{k-t}$ algorithm which uses $n+w+t+1$ noisy bits and $k-t$ clean qubits as follows. Firstly, the algorithm applies the $X$ gate to the last $t+1$ noisy qubits and applies a Toffoli controlled on these qubits with the target as the final clean qubit. Then, apply the $\DQC{k}$ algorithm on the first $n+w$ noisy qubits and the first $k$ clean qubits. Finally, measure the last clean qubit. If it results in an outcome 1, then return the outcome of the $\DQC{k}$ algorithm, otherwise, return a random bit (by taking an additional noisy qubit for instance). 

Observe that this algorithm behaves identically to the original one whenever the $t+1$ noisy qubits are in the all-zeroes state, which happens with probability $2^{-t-1}$. In all other cases, the algorithm returns a uniformly random bit. Thus, the bias of the resulting algorithm is $2^{-t-1}\cdot g(x)$.  
\end{proof}

\subsection{Acceptance Probability of Quantum Algorithms}

\paragraph*{$\DQC{k}$ algorithms.}
\label{sec:appendix_acceptance_probability}
\begin{proof}[Proof of~\Cref{claim:dqck_acceptance_probability}]
    Consider a $d$-query $\DQC{k}$ algorithm and let $U_1,\ldots,U_{d+1}$ be the unitary operators of the algorithm and $\start=[NW]\times\{1\},\final\subseteq[NWK]$ be the set of initial and accepting final states as in~\Cref{def:dqck} and~\Cref{fig:dqc1}. The final state of the algorithm can be expressed as a uniform mixture over $\iwk{1}\in\start$ of the pure state $ U_{d+1}\cdot (O_x\otimes \bI) \cdot U_d\cdots (O_x\otimes \bI)\cdot  U_1 \ket{\iwk{1}}$. Let $\final\subseteq [NWK]$ be the subset of final basis states that is accepted by the algorithm. We can thus express the acceptance probability of the algorithm as an average over $\iwk{1}\in\start$ of
\[\underset{ \substack{\iwk{d+2}\in \final}}{\sum} \abs{\bra{\iwk{d+2}} U_{d+1}\cdot O\cdot U_d\cdots O\cdot  U_1 \ket{\iwk{1}}}^2\]
Since there are $NW$ elements in $\start$, the overall acceptance probability of the algorithm is given by
\begin{align*}
f(x)&:=    \tfrac{1}{NW}\underset{ \substack{\iwk{1}\in\start\\ \iwk{d+2}\in \final}}{\sum} \abs{\bra{\iwk{d+2}} U_{d+1}\cdot O \cdot U_d\cdots O\cdot  U_1 \ket{\iwk{1}}}^2 \\ 
& = \tfrac{1}{NW}\underset{ \substack{\iwk{1}\in \start \\ \iwk{d+2}\in \final}}{\sum}   \bra{\iwk{1}}  U_1^\dagger\cdot O \cdots U_d^\dagger\cdot O\cdot U_{d+1}^\dagger \ket{\iwk {d+2}} \cdot  \bra{\iwk{d+2}} U_{d+1}\cdot O\cdot U_d\cdots O\cdot U_1\ket{\iwk{1}}  
\\ &= \tfrac{1}{NW}\underset{ \substack{\iwk{1}\in \start \\ \iwk{d+2}\in \final}}{\sum}   \Tr\pbra{U_1 \ket{\iwk{1}}\bra{\iwk{1}}  U_1^\dagger\cdot O \cdots U_d^\dagger\cdot O\cdot U_{d+1}^\dagger\ket{\iwk {d+2}}  \bra{\iwk{d+2}}U_{d+1} \cdot O\cdot U_d\cdots \cdot U_2\cdot O  } \\
&=\tfrac{1}{NW} \Tr\pbra{U_1 \pbra{\underset{ \iwk{1}\in \start}{\sum}  \ket{\iwk{1}}\bra{\iwk{1}}}  U_1^\dagger\cdot O \cdots  O \cdot U_{d+1}^\dagger\pbra{\underset{\iwk{d+2}\in \final}{\sum}   \ket{\iwk {d+2}}  \bra{\iwk{d+2}}} U_{d+1}\cdot O \cdots U_2\cdot O }.
\end{align*}
We will further simplify this expression by introducing $M\times M$ matrices $V_1,\ldots,V_{2d}$ as follows. Let $V_1=\sum_{\iwk{1}\in \start} U_1\ket{\iwk{1}}\bra{\iwk{1}}U_1^\dagger$. For $t\in[2,d]$, let $V_{t}:=U_{t}^\dagger$. Let $V_{d+1}= \sum_{\iwk{d+2}\in \final} U_{d+1}^\dagger\ket{\iwk{d+2}}\bra{\iwk{d+2}}U_{d+1}$ and for $t\in[d-1]$, let $V_{d+1+t}:=U_{d-t+1}$. This allows us to express $f(x)$ as 
\begin{align*} f(x)&=(NW)^{-1}\cdot \Tr\pbra{ V_1\cdot O\cdot V_2\cdot O\cdots   V_{2d} \cdot O }.\end{align*}
This gives us the desired expression. Finally we observe that $\|V_t\|_\op \le 1$ for all $t$, and $V_1$ is (up to multiplication by unitary matrices) equal to a diagonal matrix with at most $|\start|= NW$ non-zero entries of value 1, hence $\|V_1\|_\frob \le \sqrt{NW}$. 
\end{proof}

\paragraph*{$\hBQP$ algorithms.}

\begin{proof}[Proof of~\Cref{claim:acceptance_probability_hbqp}]
Let $\final$ be the accepting pairs of initial and final states of a $\hBQP$ algorithm and $U_1,\ldots,U_{d+1}$ be unitary operators as in~\Cref{def:hbqp_algorithm} and~\Cref{fig:hbqp}. For $\iw{1},\iw{d+1}$, we use $F_{\iw{1},\iw{d+1}}$ to denote 1 when $(\iw{1},\iw{d+1})\in \cF$ and 0 otherwise.  It is fairly straightforward to see that the acceptance probability $f(x)$ of the algorithm is given by
\begin{align*} \begin{split}f(x)&:=M^{-1} \sum_{\iw{1},\iw{d+2}\in[M]}F_{\iw{1},\iw{d+2}}\cdot \abs{\bra{\iwk{d+2}} U_{d+1}\cdot O \cdot U_d\cdots O\cdot  U_1 \ket{\iwk{1}}}^2  \\
&=M^{-1}\sum_{\iw{1},\iw{d+2}\in[M]}F_{\iw{1},\iw{d+2}}\cdot \bra{\iwk{1}} U_1^\dagger\cdot O \cdots O\cdot  U_{d+1}^\dagger  \ket{\iw{d+2}}\bra{\iwk{d+2}}  U_{d+1}\cdot O\cdots O\cdot  U_1 \ket{\iwk{1}}  \end{split} 
\end{align*} 
as desired.
 \end{proof}

\subsection{Fourier Coefficients of Quantum Algorithms}

\label{sec:appendix_fourier_coefficients}

\paragraph*{$\DQC{k}$ Algorithms.}

\begin{proof}[Proof of~\Cref{claim:dqck_fourier_coefficients}]
From~\Cref{claim:dqck_acceptance_probability}, the acceptance probability $f(x)$ of a $d$-query $\DQC{k}$ algorithm is given by $f(x)$ where 
\begin{align}\begin{split} f(x)&=(NW)^{-1}\cdot \Tr\pbra{ (O_x\otimes \bI)\cdot V_1 \cdots (O_x\otimes \bI)\cdot V_{2d} }\\ 
&= (NW)^{-1} \sum_{\iwk{1},\ldots,\iwk{2d}\in [M]} \prod_{t\in[2d]}\pbra{V_t[\iwk{t}\gap \iwk{t+1}] \cdot x_{i_t}}\end{split} \label{eq:dqc1_1}
\end{align}
with the convention that $\iwk{2d+1}=\iwk{1}$. 
We now replace $x$ by $\rho(x)$ in~\Cref{eq:dqc1_1} to obtain 
\begin{align}\label{eq:dqc1_2}
f(\rho(x))&=(NW)^{-1} \sum_{\iwk{1},\ldots,\iwk{2d}\in [M]} \prod_{t\in[2d]}\pbra{V_t[\iwk{t}\gap \iwk{t+1}] \cdot \rho(x)_{i_t}}
\end{align}
Since the coordinates in $L$ are unfixed and the rest are fixed, \[\rho(x)_{i_t}=\begin{cases} x_{i_t} & \text{if } i_t \in L\\  \rho_{i_t} &\text{if } i_t\notin L\end{cases}.\] 
In particular, 
\[ \prod_{t\in[2d]}\rho(x)_{i_t}=\pbra{\prod_{\substack{t\in[2d]\\\text{with }i_t\notin L}}\rho_{i_t}}\cdot \pbra{\prod_{\substack{t\in[2d]\\\text{with }i_t\in L}}x_{i_t}} \]
Substituting this in~\Cref{eq:dqc1_2}, we get
\begin{align}\label{eq:dqc1_3}
f(\rho(x))&=(NW)^{-1} \sum_{\iwk{1},\ldots,\iwk{2d}\in [M]} \prod_{t\in[2d]}\pbra{V_t[\iwk{t}\gap \iwk{t+1}]}\cdot \pbra{\prod_{\substack{t\in[2d]\\\text{with }i_t\notin L}}\rho_{i_t}}\cdot \pbra{\prod_{\substack{t\in[2d]\\\text{with }i_t\in L}}x_{i_t}}.
\end{align}
To simplify this expression and get rid of the $\rho_{i_t}$, we will define a $M\times M$ diagonal matrix $D^\rho$ and $M\times M$ unitary matrices $V^\rho_t$ for $t\in[2d]$ as follows. For $\iwk{}\in[M]$, define $D^\rho$ to be a diagonal matrix whose $\iwk{}$-th diagonal entry is $\rho_i$ if $i\notin L$ and $1$ otherwise. Define $V_t^\rho=D^\rho\cdot V_t$ for all $t\in[2d]$. Observe this allows us to simplify~\Cref{eq:dqc1_3} and obtain
\begin{align*} 
f(\rho(x))&=(NW)^{-1} \sum_{\iwk{1},\ldots,\iwk{2d}\in [M]} \pbra{\prod_{t\in[2d]}V_t^\rho[\iwk{t}\gap \iwk{t+1}]}\cdot \pbra{ \prod_{\substack{t\in[2d]\\\text{with }i_t\in L}} x_{i_t}}
\end{align*}
From here, we see that the only non-zero Fourier coefficients correspond to $S\subseteq L$ and satisfy the defining equation as in~\Cref{claim:dqck_fourier_coefficients}. The bounds on the norms of $V_t^\rho$ follow immediately from the corresponding bounds on $V_t$ from~\Cref{claim:dqck_acceptance_probability} and the fact that $\|D^\rho\|_\op \le 1$. \end{proof} 

\paragraph*{$\hBQP$ Algorithms.}

\begin{proof}[Proof of~\Cref{claim:fourier_coefficients_hbqp}]

 Recall from~\Cref{claim:acceptance_probability_hbqp}
 that the acceptance probability of a $d$-query $\hBQP$ algorithm is given by $f(x)$ where
\begin{align}\label{eq:hbqp_0_acceptance_probability} 
f(x)&:=M^{-1}\sum_{\iw{1},\iw{d+2}\in[M]}F_{\iw{1},\iw{d+2}}\cdot \bra{\iwk{1}} U_1^\dagger\cdot O \cdots O\cdot  U_{d+1}^\dagger  \ket{\iw{d+2}}\bra{\iwk{d+2}}  U_{d+1}\cdot O\cdots O\cdot  U_1 \ket{\iwk{1}} .\end{align}
To simplify notation, for all $t\in[d+1]$, we define $V_t:=U_t^\dagger$ and $V_{2d+3-t}=U_t$. Substituting this in~\Cref{eq:hbqp_0_acceptance_probability}, we get
\begin{align} \label{eq:hbqp_1}f(x):=M^{-1} \sum_{\iw{1},\iw{d+2}\in[M]} F_{\iw{1},\iw{d+2}}\underset{\substack{\iw{2},\ldots,\iw{d+1}\in[M]\\\iw{d+3},\ldots,\iw{2d+2}\in[M]}}{\sum} \pbra{\prod_{t\in[2d+2]} V_t[\iw{t}\gap\iw{t+1}] }\cdot \pbra{ \prod_{t\in[2d+2]\setminus\{1,d+2\}}x_{i_t}}.\end{align}
Substituting $\rho(x)$ in place of $x$ in~\Cref{eq:hbqp_1}, we get 
\begin{align}\label{eq:hbqp_2} 
\begin{split}f(\rho(x))
&=M^{-1} \sum_{\iw{1},\iw{d+2}\in[M]} F_{\iw{1},\iw{d+2}} \underset{\substack{\iw{2},\ldots,\iw{d+1}\in[M]\\\iw{d+3},\ldots,\iw{2d+2}\in[M]}}{\sum}  \pbra{ \prod_{t\in[2d+2]}  V_t[\iw{t}\gap\iw{t+1}]}\\
&\cdot \pbra{\prod_{\substack{t\in[2d+2]\setminus\{1,d+2\}\\\text{with }i_t\notin L}} \rho_{i_t} } \cdot \pbra{\prod_{\substack{t\in[2d+2]\setminus\{1,d+2\}\\\text{with }i_t\in L}} x_{i_t} }. 
\end{split}\end{align}
As in the proof of~\Cref{claim:dqck_fourier_coefficients}, we will simplify this expression by defining $D^\rho$ to be a diagonal matrix whose $\iw{}$-th diagonal entry is $\rho_i$ if $i\notin L$ and $1$ otherwise and let $V^\rho_1=V_1$, $V^\rho_{d+2}=V_{d+2}$ and let $V^\rho_t=D^\rho\cdot V_t$ for $t\neq 1,d+2$. This allows us to simplify~\Cref{eq:hbqp_2} as 
\begin{align*} f(\rho(x))
&=M^{-1} \sum_{\iw{1},\ldots,\iw{2d+2}\in[M]} F_{\iw{1},\iw{d+2}}  \prod_{t\in[2d+2]}  V^\rho_t[\iw{t}\gap\iw{t+1}] \cdot \pbra{\prod_{\substack{t\in[2d+2]\setminus\{1,d+2\}\\\text{with }i_t\in L}} x_{i_t} }. 
\end{align*}
From here, we see that only Fourier coefficients with $S\subseteq L$ are non-zero and are given by the defining equation in~\Cref{claim:fourier_coefficients_hbqp}. The norm bounds on $V_t^\rho$ follow immediately from the corresponding bounds in~\Cref{claim:acceptance_probability_hbqp}. This completes the proof.
\end{proof}

\end{document}